\documentclass[aps,prx,reprint,groupedaddress,longbibliography]{revtex4-2}

\usepackage{graphicx,xcolor}
\usepackage{amsmath}
\usepackage{amssymb}
\usepackage{amsthm}
\usepackage{physics}
\usepackage{cancel}
\usepackage{booktabs}
\usepackage{overpic}
\usepackage{threeparttable}

\usepackage{xr-hyper}
\usepackage[hidelinks]{hyperref}
\providecommand{\one}{\mathbb{I}}

\providecommand{\coloneqq}{\mathrel{:=}}

\providecommand{\ketbra}[2]{\ket{#1}\bra{#2}}

\DeclareMathOperator{\diag}{diag}

\newtheorem{lemma}{Lemma}
\newtheorem{theorem}{Theorem}

\begin{document}

\title{Pixel-Translation-Equivariant Quantum Convolutional Neural Networks via Fourier Multiplexers}

\author{Dmitry Chirkov}
\thanks{Corresponding author: dmitrii.chirkov@metalab.ifmo.ru}
\author{Igor Lobanov}
\affiliation{School of Physics and Engineering, ITMO University, St. Petersburg, Russia}

\date{\today}

\begin{abstract}
Symmetry is a central source of inductive bias in learning, but in quantum models the symmetry available to a circuit is shaped by the data encoding.
We show that this changes what it means for a quantum circuit to be convolutional.
For address/amplitude image encodings such as FRQI, pixel translations on the finite periodic lattice are represented by pixel cyclic shifts (PCS), which act as modular addition on the spatial index register, whereas many MERA-inspired QCNN templates are equivariant under cyclic permutations of physical qubits.
We formalize this mismatch between PCS and qubit cyclic shifts (QCS), and construct QCNN layers that commute exactly with the PCS symmetry induced by the encoding.
We characterize the resulting layer family as the unitary commutant of the encoded translation operator, yielding a Fourier-space constructive form for PCS-equivariant quantum convolution.
Building on this characterization, we construct a multiscale PCS-QCNN with measurement-conditioned pooling and analyze its trainability with gradient diagnostics that separate parameter-count effects from suppression of the optimization-relevant gradient norm.
Exact PCS equivariance is enforced for the unitary quantum convolutional blocks on their active registers; pooling yields a multiscale/coarse equivariance structure, and in the reported hybrid classifier an unconstrained linear-softmax head maps readout probabilities to labels, so label invariance is learned rather than imposed.
To make the empirical comparison diagnostic of convolutional inductive bias, we use a translated-MNIST benchmark with matched classical CNN/MLP controls that exhibit a separation between translation-aware and dense architectures.
On this benchmark, the PCS-QCNN achieves higher accuracy than a matched non-PCS random-basis quantum control.
Full-MNIST experiments provide a non-translated reference, and finite-shot diagnostics identify a possible train--deploy mismatch at fixed shot budget, making sampling cost a deployment-relevant part of model selection.
\end{abstract}

\maketitle

\section{\label{sec:first}Introduction}

Symmetry is one of the most reliable ways to build inductive bias.
In classical vision, convolutional neural networks (CNNs) hard-code translation equivariance by restricting layer maps to those that commute with pixel shifts (up to boundary conventions), and this architectural constraint is a central source of their sample efficiency and generalization.
Classical literature and textbook treatments support this intuition: translation-aware architectural constraints and weight sharing can improve generalization and sample efficiency relative to fully connected alternatives on vision tasks~\cite{LeCun1989,goodfellow2016deep}.

Quantum machine learning raises the same symmetry question in a setting where the data encoding is part of the model.
Variational quantum circuits can be executed on noisy intermediate-scale quantum (NISQ) devices~\cite{preskill_nisq}, and quantum convolutional neural networks (QCNNs) have been proposed as quantum analogues of CNNs, often inspired by multiscale tensor-network structures such as MERA~\cite{Vidal2006,Cong2019}.
However, in the quantum setting the meaning of translation is not purely a property of the circuit: it is jointly determined by how classical data are encoded into a quantum state and which symmetry the circuit preserves.

Two strands of prior work set the context for this distinction.
First, there is a growing empirical literature on quantum-convolution-style models for classical images, typically using hybrid models with classical preprocessing and relatively small quantum circuits.
MNIST-like benchmarks in this literature use heterogeneous task definitions, resolutions, preprocessing choices, and readout protocols~\cite{Oh2021,Jing2022,Huang2023,Wei2022,Li2020,Li2022,Easom_Mccaldin}; a broader overview of representative settings and reported accuracies is provided in Supplemental Table~\ref{_tab:mnist_lit}.
Second, there is a symmetry-focused literature on equivariant quantum architectures, including QCNN variants designed to be equivariant under cyclic shifts, broader permutation groups, or more general task symmetries with resource-aware implementations~\cite{Cong2019,dasPermutationequivariantQuantumConvolutional2024a,Larocca2022,Meyer2023,Schatzki2024,chinzeiSplittingParallelizingQuantum2024}.
Taken together, these strands expose a conceptual ambiguity: for FRQI and related quantum image representations~\cite{Le2011,Zhang2013}, the spatial symmetry of the encoded state need not coincide with a symmetry of qubit labels.
This ambiguity matters because a circuit can be symmetric in qubit space while failing to respect the translation symmetry of the encoded data.

The analysis focuses on amplitude/address-type image encodings, specifically FRQI-like states (Sec.~\ref{sec:benchmark_encoding}).
On the finite periodic lattice used for the exact symmetry statement, pixel translations are represented by pixel cyclic shifts (PCS), which act as modular addition on the binary-valued index register.
By contrast, many existing QCNN designs enforce commutation with cyclic permutations of physical qubits (or, more generally, subgroups of the qubit permutation group)~\cite{dasPermutationequivariantQuantumConvolutional2024a}.
We refer to this register-level symmetry as the qubit cyclic shift (QCS).
PCS and QCS coincide only for pixel-to-qubit encodings; under address encoding they generally act differently, so a QCS-equivariant circuit need not implement pixel-translation equivariance.
Permutation-equivariant designs remain useful when the task symmetry can be represented as a qubit permutation under the chosen embedding, but for address encodings translation acts natively on the index register and should therefore be enforced directly at that level.
Figure~\ref{fig:general_conv} illustrates this mismatch.
The design principle is that quantum convolution should preserve the translation symmetry present after encoding, not merely a symmetry of qubit labels.
Throughout the paper, ``convolution'' is used in this symmetry/commutant sense rather than in the finite-support-kernel or local-receptive-field sense.

\begin{figure*}
    \centering
    \includegraphics[width= \textwidth]{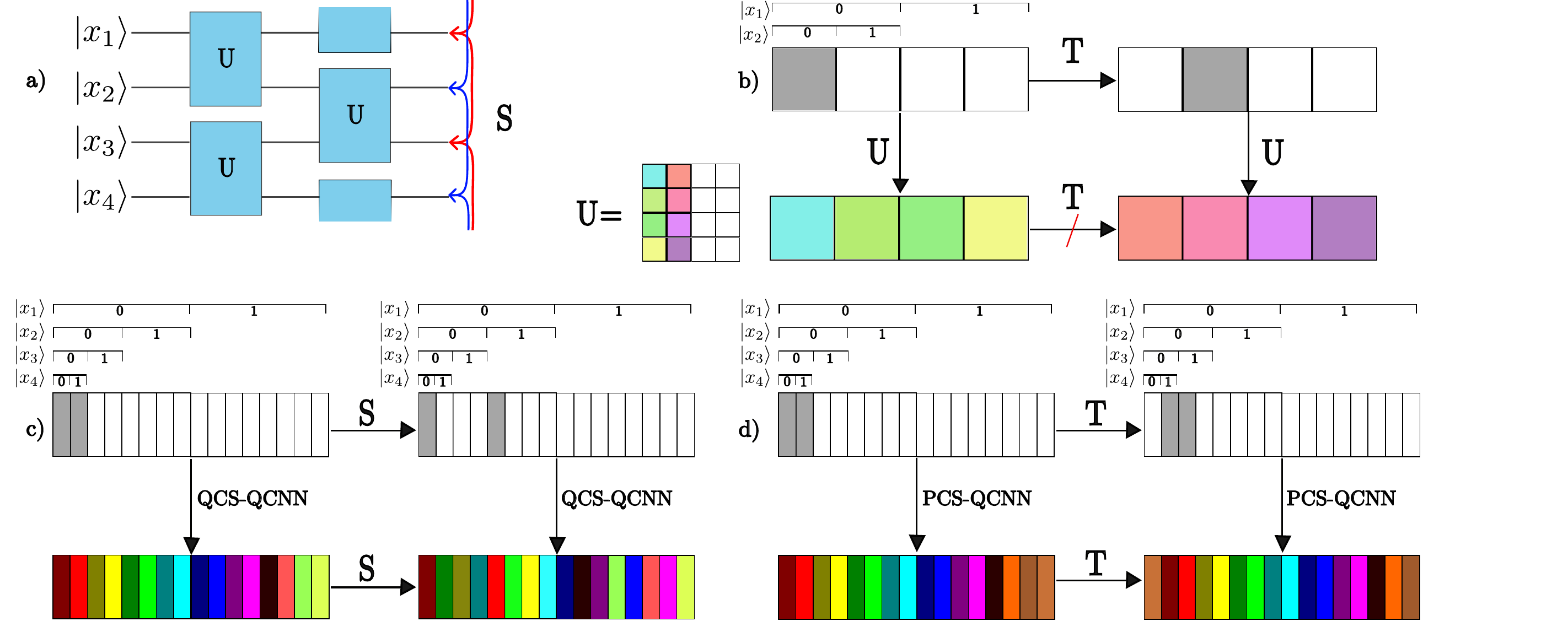}
    \caption{Schematic comparison of qubit- vs pixel-translation symmetry for address/amplitude encodings. (a) A typical QCNN convolution pattern based on repeating the same local unitary on different qubit pairs (adapted from Ref.~\cite{Gong2024}). (b) Under address encoding, translating the input image corresponds to modular addition on the index register, so reusing the same local unitary across qubit positions does not, in general, enforce pixel-translation equivariance. (c) The same pattern can still be equivariant under cyclic permutations of qubits (QCS). (d) Required symmetry: equivariance under cyclic shifts of pixel indices (PCS), i.e., commutation with the translation action induced by the encoding.}
    \label{fig:general_conv}
\end{figure*}

We address this by separating QCS from PCS at the level of the encoded Hilbert space and formalizing when the two symmetries do, and do not, coincide (Sec.~\ref{chapter:arch}).
Once PCS is identified as the relevant data symmetry, the allowed unitary layers are fixed by its commutant.
Since the QFT diagonalizes translation operators, any PCS-equivariant layer can be represented by mapping the spatial index registers to the Fourier basis, applying a block-diagonal transformation that acts independently on Fourier modes (a multiplexer), and returning to the computational basis with the IQFT.
The Fourier diagonalization itself is standard; the key point is that, for address encodings, it must be applied to the pixel-shift operator on the index register rather than to a permutation of qubit labels.
This yields a direct recipe for designing quantum convolution layers whose symmetry is enforced by construction.
We then build a deep PCS-QCNN quantum core with measurement-induced pooling, where intermediate measurement outcomes select mode-dependent blocks in subsequent layers.
The benchmark PCS layers use fully general translation-equivariant Fourier multiplexers; locality or finite spatial support would be an additional structural constraint rather than part of the definition studied here.

A separate practical concern in deep variational quantum models is trainability.
For many circuit families, gradient magnitudes can decay rapidly with system size and/or depth (barren plateaus).
This phenomenon is well documented for highly expressive random circuits~\cite{McClean2018}, while QCNN-style locality can mitigate it in certain regimes~\cite{Pesah2021}.
For our multiplexer-based PCS-QCNN, the trainability analysis separates a mathematically elementary accounting effect from the practical optimization question (Sec.~\ref{sec:pcs_bp_theory}).
Because a fully general Fourier multiplexer contains exponentially many mode-wise parameters in the depth-scaling family, the expected mean per-coordinate second moment can be forced down by parameter count alone.
This coordinate-average effect is distinct from exponential suppression of the full gradient norm or of every active coordinate.
The main results therefore report both the empirical-loss quantum-gradient norm and the per-sample RMS quantum-gradient norm.

MNIST is used as the benchmark for encoding-aligned translation equivariance (Sec.~\ref{sec:mnist_motivation}).
A methodological requirement is that any claimed gain from quantum convolution should be visible in a matched classical comparison.
Standard centered MNIST alone is a weak standalone diagnostic of convolutional inductive bias, because dense classical models can perform well under favorable preprocessing.
We therefore use matched classical CNN/MLP controls to identify a translated-MNIST regime in which translation-aware and dense architectures separate, and then use the same regime to probe the effect of enforcing PCS symmetry in the quantum model against a matched non-PCS random-basis control.
Full-MNIST experiments provide a non-translated reference for the architecture and preprocessing choices, while the translated-MNIST configuration in Fig.~\ref{fig:small_ds_clas} provides the main benchmark for the symmetry-aligned control.

The numerical scope is as follows.
The reported PCS-QCNN is an idealized symmetry benchmark: FRQI-like inputs are initialized directly as dense tensors, fully general Fourier multiplexers are applied as explicit block-diagonal statevector maps, and hardware-oriented structured/compressed PCS layers would define a different model class.
Exact PCS equivariance is claimed for the unitary quantum blocks on their active registers; pooling and the unconstrained linear-softmax head do not impose exact one-pixel label invariance.
The quantum comparison uses a matched non-PCS random-basis ablation rather than a QCS-equivariant baseline, and the finite-shot diagnostics use a single direct-$16\times16$ full-MNIST reference run.

\begin{figure*}
    \begin{overpic}[width=0.49\textwidth]{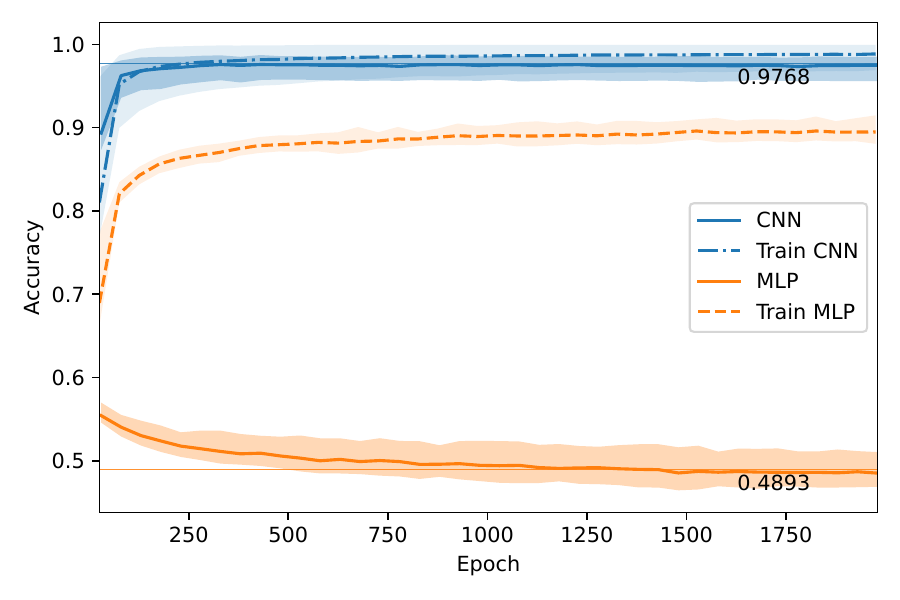}
    \put(3mm,3mm){(a)}
    \end{overpic}
    \hfill
    \begin{overpic}[width=0.49\textwidth]{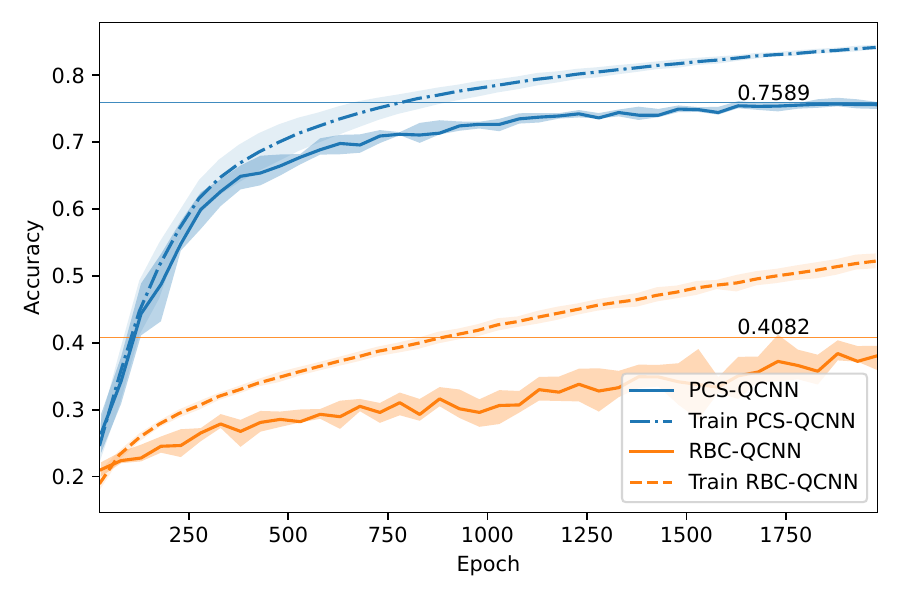}
    \put(3mm,3mm){(b)}
    \end{overpic}
    \caption{Fixed benchmark controls used to expose translation-sensitive inductive bias. Training uses a balanced subset of $1000$ images per class, and the standard MNIST test split membership is kept unchanged. Digits are resized from $28\times28$ to $16\times16$, placed on a $32\times32$ canvas, and randomly translated with maximum offset $8$ pixels per axis. (a) Classical baselines on this translated benchmark: a convolutional CNN reference and a pure-dense MLP control. (b) Quantum models: PCS-QCNN and a matched non-PCS random-basis control (RBC) QCNN on the same translated benchmark. Solid lines show mean test accuracy, train curves show mean train accuracy, and shaded bands show the $25$th--$75$th percentile range over $3$ seeds. The corresponding full-MNIST classical control without translations is shown separately in Supplemental Fig.~\ref{_fig:supp_full_mnist_classical_baselines}.}
    \label{fig:small_ds_clas}
\end{figure*}

Finally, because the PCS-QCNN is evaluated through output probabilities, we test how conclusions drawn from exact statevector inference change under finite-shot sampling.
Even on noiseless hardware, finite-shot sampling introduces stochasticity that can distort loss landscapes and change the effective performance of a trained model.
Accordingly, we report both infinite-shot (exact) simulator accuracy and finite-shot inference.
In the representative finite-shot diagnostic, continued infinite-shot training is associated with a sharper readout-space loss landscape and lower accuracy at some fixed shot budgets (Sec.~\ref{sec:results_shots}).

\section{Symmetry-aligned quantum convolution}\label{chapter:arch}

A symmetry-based definition of convolution gives a common language for the classical and quantum settings.
Throughout the paper we work with discrete images on a periodic lattice.
We use a standard periodic model in which translation is represented by a cyclic shift, so that symmetry claims can be stated without boundary artifacts~\cite{Gray2006,golub2013matrix}.

\subsection{Classical convolution and translation symmetry}\label{sec:classical_convolution}

The classical notion of convolution can be stated purely in symmetry terms: a linear layer is convolutional (in the circular/periodic sense) exactly when it commutes with discrete translations~\cite{Gray2006,golub2013matrix}.
This convention guarantees translation equivariance but does not by itself impose a finite-support spatial kernel.
On a 1D periodic grid of length $N$, let $T_k$ denote the cyclic shift, $(T_k x)_j=x_{j-k}$.
A matrix $A$ is circulant if and only if $T_kA=AT_k$ for all $k$, and in that case $y=Ax$ is a circular convolution with shared weights.
The same perspective extends to multi-channel and multi-dimensional signals: translation-equivariant linear maps are precisely block-circulant operators.

A key structural fact is Fourier diagonalization: translations are diagonal in the discrete Fourier basis, so any translation-equivariant linear map becomes block-diagonal in Fourier space (each Fourier mode transforms independently on the channel space)~\cite{Gray2006,golub2013matrix}.
This is the classical template we mirror below: in the quantum construction, the QFT plays the role of the Fourier basis change and the multiplexer plays the role of per-mode channel mixing.
For a more detailed classical primer (including multi-channel and $2$D formulas), see Supplemental Sec.~\ref{_sec:supp_classical_primer}.

\subsection{Quantum registers, image encodings, and two shift symmetries}\label{sec:qnn_and_shifts}

We fix notation for the two symmetry operations already introduced in Fig.~\ref{fig:general_conv}: the qubit cyclic shift $S$ (QCS) and the pixel cyclic shift $T$ (PCS).
The purpose is to keep the classical and quantum cases parallel: in both settings convolution is defined by commutation with the translation action.
The only difference is that in the quantum setting the translation action depends on the encoding.

We begin by specifying the register layout.
We use an index register (or registers) to represent spatial coordinates and a feature register to store channels.
In the $2$D case we use registers $\mathrm{Reg}\,x$ and $\mathrm{Reg}\,y$ with $n_x$ and $n_y$ qubits, so that $N_x=2^{n_x}$ and $N_y=2^{n_y}$ pixels are represented along each axis.
We use a color register $\mathrm{Reg}\,c$ (grayscale corresponds to one qubit, RGB to three qubits) and an auxiliary feature register $\mathrm{Reg}\,f$.
In the canonical PCS-QCNN analyzed in this paper, the size of $\mathrm{Reg}\,f$ is fixed within a given architecture instance; allowing it to grow across layers is a possible extension discussed in Sec.~\ref{sec:new_features}.

Next we distinguish the relevant encoding families.
The central distinction is whether pixels are mapped to qubits (threshold/pixel-to-qubit encodings) or to basis states of an index register (address or amplitude-based encodings).
The analysis focuses on amplitude-based image encodings, where spatial structure is represented by the binary index.
A convenient representative is an FRQI-like state in which the index registers encode coordinates while a color qubit carries grayscale information:
\begin{equation}\label{eq:frqi_like}
\ket{\psi(x)} =
\frac{1}{\sqrt{N_x N_y}}
\sum_{u=0}^{N_x-1}\sum_{v=0}^{N_y-1}
\ket{u}_x \ket{v}_y \ket{\phi_{u,v}},
\end{equation}
where $\ket{\phi_{u,v}}$ is a single-qubit state whose amplitudes encode pixel brightness.
In the benchmark studied here, each preprocessed pixel is represented by a grayscale value $x_{u,v}\in[0,1]$
(see Supplemental Sec.~\ref{_sec:encoder_preproc} for the exact preprocessing convention).
The encoder maps this value affinely to an angle $p_{u,v}=a+(b-a)x_{u,v}$; the PCS-QCNN experiments use $(a,b)=(0,\pi)$, so
\begin{equation}\label{eq:color_map_impl}
\ket{\phi_{u,v}}=\sin(p_{u,v})\ket{0}+\cos(p_{u,v})\ket{1}.
\end{equation}
Equation~\eqref{eq:frqi_like} is the conceptual normalized form; in the numerical protocol of Sec.~\ref{sec:benchmark_encoding} we use the same local feature states but omit the global $1/\sqrt{N_xN_y}$ prefactor during state initialization and restore the overall normalization when computing the final readout probabilities.
For the symmetry discussion, what matters is that translation of pixels acts as a permutation of the computational basis of the index register.

With this setup, the pixel cyclic shift (PCS) is defined as follows.
In one spatial dimension (a signal of length $N=2^{n}$), the pixel cyclic shift operator $T$ acts on the index basis as modular addition:
\begin{equation}\label{eq:def_T_1d}
T\,\ket{j} = \ket{j+1 \!
\!
\!
\!
\pmod{N}}.
\end{equation}
In the classical case the same action on coordinates is $(T x)_j = x_{j-1}$, consistent with the definition of $T_k$ in Sec.~\ref{sec:classical_convolution}.
In two dimensions we use commuting generators $T_x$ and $T_y$ acting on the two index registers:
\begin{equation}\label{eq:def_T_2d}
T_x \ket{u}_x\ket{v}_y = \ket{u+1 \pmod{N_x}}_x\ket{v}_y,
\end{equation}
\begin{equation}
T_y \ket{u}_x\ket{v}_y = \ket{u}_x\ket{v+1 \pmod{N_y}}_y,
\end{equation}
and similarly in three dimensions one introduces $T_x,T_y,T_z$ acting on three index registers.
These translations act trivially on the feature and color registers.

By contrast, the qubit cyclic shift (QCS) acts on tensor factors: $S$ is defined as a cyclic permutation of tensor factors in a register of $L$ physical qubits:
\begin{equation}\label{eq:def_S}
S\bigl(\ket{q_0}\ket{q_1}\cdots\ket{q_{L-1}}\bigr)
=
\ket{q_{L-1}}\ket{q_0}\cdots\ket{q_{L-2}}.
\end{equation}
Many MERA-inspired QCNN templates can be made equivariant under this symmetry by imposing cyclic weight sharing together with periodic or symmetrized block placements and pooling/readout rules compatible with the same cyclic action.
We will refer to commutation with $S$ as qubit translational symmetry (QCS), and commutation with $T$ (or with $T_x,T_y$) as pixel translational symmetry (PCS).

This gives a precise distinction between layer symmetry and data symmetry.
The operator $T$ represents the data symmetry, namely translation in pixel space, while the operator $S$ represents a register symmetry, namely relabeling of qubits.
A quantum circuit is convolutional for image data only if its symmetry matches the translation action induced by the chosen encoding.
This distinction is the source of the mismatch addressed in this paper.

\begin{lemma}[Mismatch between QCS and PCS]\label{lem:mismatch}
Let an image be encoded either (i) by a pixel-to-qubit map that assigns one pixel value to one qubit, or (ii) by an address/amplitude encoding in which pixel locations are represented by an index register as in \eqref{eq:frqi_like}.
In case (i), the pixel cyclic shift $T$ is implemented by a qubit permutation and can coincide with $S$ (up to a convention for shift direction).
In case (ii), the pixel cyclic shift $T$ acts by modular addition on the binary index and therefore permutes computational basis states of the index register; in general this action is not equal to any fixed cyclic permutation of the physical qubits, and commutation with $S$ does not imply commutation with $T$.
\end{lemma}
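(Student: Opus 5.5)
We prove the two cases separately, and the substantive work is all in case (ii). There we need an explicit obstruction: a concrete register size for which $T$ is not any qubit permutation, plus a concrete operator that commutes with $S$ but not with $T$.

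\emph{Case (i).} Take $N=L$ pixels with pixel $j$ stored on qubit $j$, so an image corresponds to the product state $\ket{q_0}\cdots\ket{q_{L-1}}$ with $q_j$ determined by $x_j$. Under the convention $(Tx)_j=x_{j-1}$, the shifted image places $x_{L-1}$ on qubit $0$, $x_0$ on qubit $1$, and so on. This is exactly the tensor-factor permutation in Eq.~\eqref{eq:def_S}. Hence the induced action of $T$ on encoded states equals $S$, or $S^{-1}$ under the opposite shift convention. By linearity this extends to the full $2^L$-dimensional space, because $S$ is defined on product basis states.

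\emph{Case (ii), $T$ is not a qubit permutation.} By Eq.~\eqref{eq:def_T_1d}, $T$ is the permutation $j\mapsto j+1 \pmod N$ of the $N=2^n$ basis states. Any qubit permutation $P_\sigma$ fixes the all-zeros and all-ones basis states and preserves Hamming weight. $T$ does neither: $T\ket{0\cdots0}=\ket{0\cdots01}\neq\ket{0\cdots0}$. Moreover, $T$ maps $\ket{0\cdots01}$ (weight $1$) to $\ket{0\cdots10}$ and $\ket{0\cdots011}$ (weight $2$) to $\ket{0\cdots100}$ (weight $1$), so it changes Hamming weight whenever $n\ge2$. Since $T$ is a permutation matrix on the computational basis, it cannot equal $P_\sigma$ even up to a global phase. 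The same argument applies to $T^k$ for $0<k<N$, and in particular to the $S$-like shifts, since $T^k\ket{0}=\ket{k}\neq\ket{0}$. For $n=1$, $T$ is the Pauli-$X$ gate, which is still not a qubit permutation because the only permutation of a single qubit is the identity. This shows that $T\neq S^m$ for every $m$.

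\emph{Case (ii), $S$-commutation does not imply $T$-commutation.} We exhibit one counterexample operator. With $n=2$, take $U=Z\otimes Z$ on the index register. It is symmetric under exchange of the two qubits, so $US=SU$. In the ordering $\ket{j}=\ket{b_1b_0}$, $U$ is diagonal with entries $(+1,-1,-1,+1)$ indexed by $j=0,1,2,3$. Then $TUT^{-1}$ is diagonal with the entries cyclically shifted to $(+1,+1,-1,-1)$, which differs from $U$, so $[U,T]\neq0$.

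For general $n$, take $U=Z^{\otimes n}$, the parity operator. It commutes with every qubit permutation. However, $T$ maps $\ket{0}$ (parity $+1$) to $\ket{1}$ (parity $-1$) and maps $\ket{1}$ to $\ket{2}$ (parity $-1$), so $T$ does not preserve parity eigenspaces. Explicitly, $\bra{1}TU\ket{0}=1$ while $\bra{1}UT\ket{0}=-1$, hence $TU\neq UT$.

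For the 2D encoding, the same operator tensored with the identity on $\mathrm{Reg}\,y$ and the feature registers gives a counterexample for $T_x$.

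The only delicate point is the phrase ``not equal to any fixed cyclic permutation''. We handle it through the Hamming-weight and fixed-point invariants of qubit permutations, which are basis-independent facts about $P_\sigma$. These invariants rule out every $S^m$, and indeed every qubit permutation, at once.
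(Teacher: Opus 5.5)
Your proof is correct. For case (i), and for showing that $T$ is not a qubit permutation, you follow the paper's route. The paper also uses the fact that every qubit permutation fixes $\ket{0}^{\otimes n}$ while $T\ket{0}^{\otimes n}=\ket{1}$.

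Your extra Hamming-weight invariant is redundant. Your sample $\ket{0\cdots011}\mapsto\ket{0\cdots100}$ also needs $n\ge3$. For $n=2$ the wrap-around gives $\ket{11}\mapsto\ket{00}$, which still changes the weight, so nothing breaks.

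The real difference is the counterexample showing that commuting with $S$ does not imply commuting with $T$. The paper takes $U=S$ itself, which commutes with $S$ trivially. It then checks that $ST\ket{0}^{\otimes n}=S\ket{1}\neq\ket{1}=TS\ket{0}^{\otimes n}$. This needs $n\ge2$, plus a separate $n=1$ case with $S=I$, $T=X$, $U=Z$.

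Your parity operator $U=Z^{\otimes n}$ works uniformly for all $n\ge1$; at $n=1$ it is exactly the paper's choice. It also commutes with every qubit permutation, not just the cyclic one. So it proves a stronger statement: even equivariance under the full symmetric group of qubit permutations, the setting of the permutation-equivariant QCNNs cited in the introduction, does not force pixel-shift equivariance.

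The paper's choice has a smaller advantage of its own. It shows directly that the two symmetry generators $S$ and $T$ fail to commute.
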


Lemma~\ref{lem:mismatch} formalizes the point illustrated in Fig.~\ref{fig:general_conv}(b-d):
for address encodings, enforcing QCS is not the same as enforcing the translation symmetry of pixels.
A complete proof is given in Supplemental Sec.~\ref{_sec:supp_symmetry_proofs}.
Consequently, to reproduce the classical convolutional inductive bias under amplitude-based encoding, we must construct layers that commute with $T$ (or with $T_x,T_y$) rather than with $S$.

\subsection{MERA-QCNN as a QCS-equivariant architecture}\label{sec:mera_qcnn}

Many QCNN architectures are inspired by multiscale tensor-network structures such as MERA~\cite{Cong2019}.
Operationally, they apply a repeated local gate pattern along a line (or lattice) of qubits and interleave it with pooling implemented by measurements and classical control.
Such templates can be made QCS-equivariant by imposing cyclic weight sharing together with periodic boundary conditions, a placement pattern containing the full cyclic orbit of each local block, and pooling/readout rules compatible with the same qubit permutation.
Without these additional choices, a standard brickwork layout generally commutes only with the subgroup preserved by its tiling and pooling pattern, rather than with the full one-site cyclic shift $S$.

For pixel-to-qubit encodings, QCS equivariance can indeed serve as a translation bias.
For address/amplitude encodings, however, Lemma~\ref{lem:mismatch} shows that QCS does not generally coincide with pixel translation (PCS).
Thus, a QCS-equivariant layout is not guaranteed to be pixel-translation-equivariant for encoded images.
This motivates our approach: enforce commutation with the encoding-induced shift $T$ directly.

\subsection{Characterization of PCS-equivariant quantum convolution layers}\label{sec:pcs_layer}

We construct a quantum analogue of a convolutional linear layer for address encoding by requiring commutation with pixel translations.
We start in one dimension to keep notation parallel to Fig.~\ref{fig:general_conv}, and then extend to two and three dimensions.

Consider the Hilbert space
\[
\mathcal{H} = \mathcal{H}_{\mathrm{idx}} \otimes \mathcal{H}_{\mathrm{feat}},
\qquad
\mathcal{H}_{\mathrm{idx}} \cong \mathbb{C}^{N},
\]
where $\mathcal{H}_{\mathrm{idx}}$ is spanned by $\{\ket{j}\}_{j\in\mathbb{Z}_N}$ and $\mathcal{H}_{\mathrm{feat}}$ collects the color/feature qubits.
A unitary layer $U$ is PCS-equivariant in $1$D if
\begin{equation}\label{eq:pcs_commute_1d}
U\,(T\otimes I) = (T\otimes I)\,U,
\end{equation}
where $T$ is defined in \eqref{eq:def_T_1d}.
In $2$D we require commutation with both generators:
\begin{equation}\label{eq:pcs_commute_2d}
U\,(T_x\otimes I) = (T_x\otimes I)\,U,
\qquad
U\,(T_y\otimes I) = (T_y\otimes I)\,U,
\end{equation}
with $T_x,T_y$ defined in \eqref{eq:def_T_2d}.
In $3$D one adds the analogous condition for $T_z$.

The commutation relations \eqref{eq:pcs_commute_1d}-\eqref{eq:pcs_commute_2d} are the direct quantum counterpart of the classical condition $T_k A = A T_k$ from Sec.~\ref{sec:classical_convolution}.
They fix the allowed form of $U$.

\begin{theorem}[Fourier-multiplexer form of PCS-equivariant unitaries]\label{thm:pcs_fourier}
Let $T$ be the cyclic shift on $\mathcal{H}_{\mathrm{idx}}$ defined by \eqref{eq:def_T_1d}.
A unitary $U$ on $\mathcal{H}_{\mathrm{idx}}\otimes\mathcal{H}_{\mathrm{feat}}$ satisfies \eqref{eq:pcs_commute_1d} if and only if it can be written as
\begin{equation}\label{eq:pcs_fourier_form_1d}
U = (F_N^\dagger \otimes I)\, \mathcal{B}\, (F_N \otimes I),
\end{equation}
where $F_N$ is the $N$-point quantum Fourier transform on the index register and $\mathcal{B}$ is block-diagonal in the Fourier basis,
\begin{equation}\label{eq:blockdiag_B_1d}
\mathcal{B} = \bigoplus_{k=0}^{N-1} U_k,
\end{equation}
with arbitrary unitaries $U_k$ acting on $\mathcal{H}_{\mathrm{feat}}$.
Equivalently, $\mathcal{B}$ is a multiplexer: it applies a feature transformation $U_k$ conditioned on the Fourier mode $k$.
In two dimensions, $U$ commutes with both $T_x$ and $T_y$ if and only if
\begin{equation}\label{eq:pcs_fourier_form_2d}
U = ((F_{N_x}\!
\otimes\!
F_{N_y})^\dagger \otimes I)\, \mathcal{B}\, ((F_{N_x}\!
\otimes\!
F_{N_y}) \otimes I),
\end{equation}
where $\mathcal{B}$ is block-diagonal over the pair of modes $(k_x,k_y)$:
\begin{equation}\label{eq:blockdiag_B_2d}
\mathcal{B} = \bigoplus_{k_x=0}^{N_x-1}\bigoplus_{k_y=0}^{N_y-1} U_{k_x,k_y}.
\end{equation}
The extension to three dimensions is obtained by replacing $F_{N_x}\otimes F_{N_y}$ with $F_{N_x}\otimes F_{N_y}\otimes F_{N_z}$ and indexing blocks by $(k_x,k_y,k_z)$.
\end{theorem}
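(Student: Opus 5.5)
The plan is to diagonalize $T$ with the QFT and then characterize the commutant of a diagonal operator whose eigenvalues are pairwise distinct. The first step is to fix the convention $F_N\ket{j}=N^{-1/2}\sum_k \omega^{jk}\ket{k}$ with $\omega=e^{2\pi i/N}$. A direct computation then gives
\[
F_N T F_N^\dagger = D \coloneqq \diag(\omega^{-k})_{k=0}^{N-1},
\]
or the conjugate phases under the opposite sign convention; only the distinctness of the eigenvalues will matter. Equivalently, the Fourier vectors $F_N^\dagger\ket{k}$ are eigenvectors of $T$, and the corresponding eigenvalues $\omega^{\pm k}$ are pairwise distinct because $\omega$ is a primitive $N$-th root of unity.

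Next, set $V=(F_N\otimes I)\,U\,(F_N^\dagger\otimes I)$. Since conjugation by a unitary preserves commutation, \eqref{eq:pcs_commute_1d} holds if and only if $V(D\otimes I)=(D\otimes I)V$. Write $V$ in block form $V=\sum_{k,l}\ketbra{k}{l}\otimes V_{kl}$, with each $V_{kl}$ an operator on $\mathcal{H}_{\mathrm{feat}}$. The commutation condition becomes $(\lambda_l-\lambda_k)V_{kl}=0$, where $\lambda_k$ denotes the $k$-th eigenvalue. Distinctness of the eigenvalues forces $V_{kl}=0$ for $k\neq l$, so $V=\bigoplus_k U_k$ with $U_k=V_{kk}$.

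Unitarity of $V$, inherited from $U$, together with the block-diagonal form gives $U_k^\dagger U_k=I$ for each $k$, so each block is unitary. This proves the forward direction. For the converse, any block-diagonal $\mathcal{B}$ commutes with $D\otimes I$, since $D\otimes I$ is scalar on each block; conjugating back then shows that $U$ commutes with $T\otimes I$.

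For two dimensions, note that $F_{N_x}\otimes F_{N_y}$ simultaneously diagonalizes $T_x$ and $T_y$, giving $D_x\otimes I$ and $I\otimes D_y$. The joint eigenvalue pairs $(\omega_x^{\pm k_x},\omega_y^{\pm k_y})$ are pairwise distinct as pairs. Commutation with both generators kills any off-diagonal block $V_{(k_x,k_y),(l_x,l_y)}$ unless $k_x=l_x$, from $D_x$, and $k_y=l_y$, from $D_y$. Hence $V$ is block-diagonal over the mode pairs, and the converse argument and unitarity of the blocks go through as in one dimension. The three-dimensional case is identical with a third generator.

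The only step needing care is the claim that the joint spectrum separates modes: in 2D, neither generator alone has non-degenerate spectrum on the index space, so both commutation conditions must be used together. This is handled by the entrywise argument: each off-diagonal block differs from the diagonal in at least one coordinate, and that coordinate's generator annihilates it. Everything else is a routine computation. A minor point is the QFT sign convention, which only relabels $k\to -k$ and does not affect the statement, since the blocks $U_k$ are arbitrary.
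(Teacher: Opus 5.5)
Your proposal is correct and follows essentially the same route as the paper: conjugate by the QFT so that $T$ becomes diagonal with pairwise distinct eigenvalues, conclude that the conjugated operator has vanishing off-diagonal Fourier blocks, read off unitarity of the blocks, and handle 2D/3D through the joint eigenvalues of the commuting generators. The one difference is that you kill the off-diagonal blocks entrywise via $(\lambda_k-\lambda_l)V_{kl}=0$, whereas the paper writes each $P_k$ as a Lagrange polynomial in $\Omega$; this is inessential. There is also a small harmless slip: with your convention $\omega=e^{2\pi i/N}$ one actually gets $F_N T F_N^\dagger=\diag(\omega^{k})$ rather than $\diag(\omega^{-k})$, but as you note only distinctness of the eigenvalues is used.
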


The proof idea is as follows.
The shift operator $T$ is diagonal in the Fourier basis: $F_N T F_N^\dagger = \diag(\omega^k)$.
Hence \eqref{eq:pcs_commute_1d} is equivalent to commutation with a non-degenerate diagonal operator on the index register, which forces $U$ to preserve each Fourier eigenspace.
This yields the block-diagonal structure \eqref{eq:blockdiag_B_1d} in the Fourier basis, and \eqref{eq:pcs_fourier_form_1d} follows by conjugation.
In two (and three) dimensions, the commuting family $\{T_x,T_y(,T_z)\}$ is simultaneously diagonalized by the tensor-product Fourier transform, yielding \eqref{eq:pcs_fourier_form_2d}-\eqref{eq:blockdiag_B_2d}.
The full proof, including the projector argument for the commutant, is given in Supplemental Sec.~\ref{_sec:supp_symmetry_proofs}.

Theorem~\ref{thm:pcs_fourier} is the quantum analogue of the classical Fourier characterization of convolution: translation-equivariant linear layers become block-diagonal in the Fourier representation, with each mode independently mixing channels.
The Fourier-based construction is therefore not an ad hoc ansatz: once the encoded pixel-shift operator is chosen as the symmetry, it is the corresponding unitary commutant.

Operationally, a PCS convolutional layer maps the spatial index registers to the Fourier basis with the QFT, applies a multiplexer $\mathcal{B}$ that performs a mode-dependent transformation on feature qubits, and then maps the index registers back with the IQFT.
An explicit two-layer example is shown in Fig.~\ref{fig:2_layers}.
A schematic decomposition of $\mathcal{B}$ into controlled gates is shown in Supplemental Fig.~\ref{_fig:B_decomposition}.

\begin{figure*}[t] 
\centering
\includegraphics[width=0.8\textwidth]{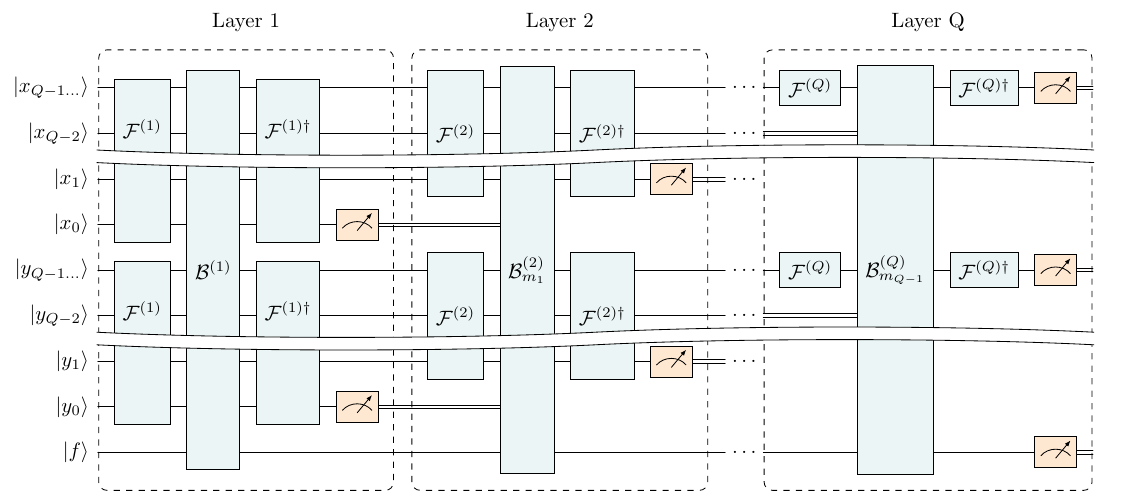}
\caption{Schematic of a multilayer PCS-QCNN.
Each non-final layer maps the active index registers to the Fourier basis with $\mathcal{F}^{(\ell)}$, applies the mode multiplexer $\mathcal{B}^{(\ell)}$, and maps back with $\mathcal{F}^{(\ell)\dagger}$ before pooling measurements are performed on the least significant qubits of the active computational index registers.
The resulting classical outcome
$m_{\ell}$ is recorded and used as a classical control for the next multiplexer block.
After repeated pooling steps (indicated by $\cdots$), the final layer acts on a smaller active index register and is followed by full measurement of the remaining quantum output.
Here $x_{Q-1\ldots}$ and $y_{Q-1\ldots}$ denote all remaining index qubits with labels $Q-1,Q,\ldots$.}
\label{fig:2_layers}
\end{figure*}

\subsection{From linear PCS layers to a deep QCNN: measurement-induced pooling and deferred conditioning}\label{sec:pooling_features}

A composition of unitary PCS-equivariant layers remains a unitary transformation.
Therefore, if we only stack layers of the form \eqref{eq:pcs_fourier_form_2d}, the overall map remains linear in the state amplitudes.
To build a deep architecture analogous to classical CNNs, we introduce a measurement-induced pooling stage between convolutional blocks.
In classical networks this role is played by pointwise activations and pooling.
Here pooling is implemented by measurement and classical control.
At the density-matrix level the process remains linear and CPTP; non-unitarity and branch conditioning enter through measurement, and normalized postselected branch states are nonlinear functions of the pre-measurement state.

We adopt a pooling mechanism inspired by MERA-QCNN but adapted to the Fourier-indexed PCS setting.
After a convolutional block we measure the least significant active computational-index qubit per spatial axis (in $2$D: one in $\mathrm{Reg}\,x$ and one in $\mathrm{Reg}\,y$) in the computational ($Z$) basis, obtaining a classical bit string $m_{\ell}\in\{0,1\}^{d}$.
In the reduced Fourier junction of Sec.~\ref{sec:fourier_interface}, this spatial-parity measurement is represented in Fourier space by mixing the alias pair $k=q$ and $k=q+N/2$ for each surviving coarse mode $q$.
Conditioned on $m_{\ell}$, we apply the next trainable block.
This produces two effects at once: it introduces a measurement-conditioned branch structure with non-unitary branch updates, and it reduces spatial resolution by removing the measured fine-scale index qubits from subsequent processing.
This is the quantum analogue of classical pooling: information carried by fine spatial scales is aggregated into coarser degrees of freedom that remain available to later layers.
The scope of exact equivariance is therefore layerwise: each unitary block is exactly PCS-equivariant on the active index register at that depth.
Pooling yields a multiscale/coarse equivariance structure rather than exact one-pixel equivariance of the entire readout/classifier pipeline.
After pooling, the next block is constrained to commute with the induced translation on the reduced register.
In particular, after one pooling step a unit shift on the coarse lattice corresponds to a shift by $2$ sites on the previous lattice (and by $2^r$ sites after $r$ pooling steps, relative to the original resolution).

\label{sec:new_features}
In classical CNNs, decreasing spatial resolution is often accompanied by an increase in the number of channels.
In the formal model used in this paper (Sec.~\ref{subsec:pcs_bp_model}), we keep the feature-register size fixed within each architecture instance: $n_f$ (equivalently $D_f=2^{n_f}$) does not vary with depth.
Allowing the feature register to grow across layers is a possible extension, but it is not part of the canonical definition used in the gradient-accounting discussion below.

With these ingredients, one level of PCS-QCNN in $2$D can be summarized as
\begin{equation}
\mathcal{L}_{\ell}
=
\mathcal{C}^{(\ell+1)}_{m_{\ell}}
\circ
\mathcal{M}^{(\ell)}_{\mathrm{pool}}
\circ
(\mathcal{F}_x^{(\ell)\dagger} \otimes \mathcal{F}_y^{(\ell)\dagger})
\circ
\mathcal{B}^{(\ell)}
\circ
(\mathcal{F}_x^{(\ell)} \otimes \mathcal{F}_y^{(\ell)}).
\end{equation}
Here, $\mathcal{F}_x^{(\ell)}$ and $\mathcal{F}_y^{(\ell)}$ denote QFTs on the active index registers at depth $\ell$ (so their register size changes with pooling depth).
Equivalently, in the notation of Sec.~\ref{subsec:pcs_bp_model}, $\mathcal{F}^{(\ell)}=\mathcal{F}_x^{(\ell)}\otimes \mathcal{F}_y^{(\ell)}$.
Also, $\mathcal{M}^{(\ell)}_{\mathrm{pool}}$ measures one pooling qubit per spatial axis and outputs $m_{\ell}\in\{0,1\}^{d}$, while $\mathcal{C}^{(\ell+1)}_{m_{\ell}}$ denotes conditional selection of the next trainable block.
The full network is obtained by repeating this pattern several times, followed by a readout stage.

\subsection{Fourier cancellation at the interface of PCS-QCNN layers}
\label{sec:fourier_interface}

A practical advantage of the Fourier-based PCS construction is that intermediate QFT blocks do not have to be re-applied from scratch at every depth.
Consecutive PCS-equivariant layers contain the inverse Fourier transform at the end of one layer next to the forward Fourier transform at the start of the next layer.
Pooling measures (and then discards) the parity bit in the computational-index split $r=2q+s$, while the next layer's QFT acts only on the surviving coarse index $q$.
As a result, the interface formed by the IQFT, the pooling measurement, and the next-layer QFT collapses to a fixed, parameter-free junction from fine Fourier modes to coarse Fourier modes.

Operationally, this junction can be implemented using only local gates in the Fourier alias ordering $k=q+\sigma N/2$: a Hadamard on each alias selector $\sigma$, its measurement, and a conditional diagonal phase-gradient on the remaining index register.
This yields a cleaner multilayer description in which the only trainable operations are the Fourier-mode multiplexer blocks $\mathcal{B}^{(\ell)}$, while the inter-layer wiring is fixed.
This simplification is also useful in the trainability analysis, since it isolates the parameter dependence to the multiplexer blocks.
A full derivation and an explicit circuit diagram for the reduced junction are given in Supplemental Sec.~\ref{_sec:supp_fourier_interface} (see also Supplemental Fig.~\ref{_fig:fourier_junction_reduction}).

\subsection{Readout, hybrid classifier, and the role of shots}\label{sec:readout}

After the quantum convolutional part, classical CNNs typically use dense layers for classification.
We do not implement a quantum fully connected head \cite{Lu2021, Li2020, Li2022}; the architecture is restricted to encoding-aligned quantum convolution for amplitude-based encodings.
Instead, we use an architecturally simple classical classifier operating on the probability vector extracted from the quantum state by measurement.
In the notation of Sec.~\ref{subsec:pcs_bp_model}, for each input $x$ the quantum core produces
\begin{equation}\label{eq:classic_out}
p_{\Theta}(\cdot\mid x)\in\Delta^{D_{\mathrm{out}}},
\end{equation}
which is then mapped to M-class probabilities by
\begin{equation}
q(x)=\mathrm{softmax}\!
\big(W\,p_{\Theta}(\cdot\mid x)+b\big)\in\Delta^M.
\end{equation}
The training loss is the cross-entropy in Eq.~\eqref{eq:loss_def}.
This final linear head is not symmetry-tied across readout coordinates.
Thus, the exact PCS-equivariance statement in this section concerns the unitary quantum convolutional blocks on their active registers and the induced coarse-register action after pooling.

On physical hardware, $p_{\Theta}(\cdot\mid x)$ is estimated from repeated measurements (shots).
Therefore the shot budget $N_{\mathrm{shot}}$ becomes an explicit hyperparameter of the model: it controls the variance of the readout and affects both training and inference.
In the numerical experiments, idealized performance uses the exact distribution, and the finite-shot regime is analyzed separately in Sec.~\ref{sec:results}.

\subsection{Idealized benchmark model and resource scaling}\label{sec:complexity}

Resource scaling is central for this construction because the reported benchmark model is not a gate-compiled hardware circuit.
For an $N\times N$ grayscale image under address encoding, the index registers require $2\lceil \log_2 N\rceil$ qubits, and the feature register contributes $n_f$ qubits.
Circuit depth is dominated by the forward and inverse QFTs on the index registers and by the synthesis of the Fourier-mode multiplexer $\mathcal{B}$.
For a standard (textbook) decomposition, an $m$-qubit QFT uses $O(m^2)$ elementary gates (in particular, $m(m-1)/2$ controlled-phase gates) and has depth $O(m^2)$ without additional parallelization~\cite{Nielsen_Chuang_2010,QFT}.
In our architecture QFT acts separately on the $x$ and $y$ index registers, so this contribution remains modest across the resolutions considered here (up to $32\times32$, i.e., $n_x=n_y\le 5$).
The main bottleneck is the multiplexer: compiling a fully general mode-dependent block-diagonal operator with $n_c$ control qubits and $n_f$ target (feature) qubits requires a number of elementary gates that grows exponentially in $n_c$ in the worst case~\cite{Bergholm_2005,Shende_multiplexer}.
The benchmarks in this paper apply the multiplexer as an ideal block-diagonal operator at the statevector level.
Practical NISQ realizations require additional structure (e.g., parameter sharing or a low-rank / low-depth ansatz for the mode blocks), which defines a different model class and changes expressivity/accuracy trade-offs.
Additional scaling discussion is provided in Supplemental Sec.~\ref{_sec:supp_complexity}.

\section{Gradient accounting and trainability diagnostics}
\label{sec:pcs_bp_theory}

Deep variational quantum models can suffer from barren plateaus, where gradients become too small for practical training~\cite{McClean2018}.
For the fully general Fourier-multiplexer PCS-QCNN, there is an additional elementary issue: the number of trainable scalar parameters can grow exponentially with the number of active index qubits.
This makes coordinate-wise gradient statements delicate.
Even if the total gradient norm is perfectly visible, the gradient energy per parameter can become exponentially small simply because the same amount of signal is distributed over exponentially many independent Fourier-mode blocks.

The resulting average per-coordinate quantity is an accounting baseline, whereas practical trainability is governed by the gradient of the loss actually optimized.
Accordingly, Sec.~\ref{sec:results_diagnostics} reports the corresponding gradient norms directly.

For a dataset or minibatch \(\mathcal{D}=\{(x_i,c_i)\}_{i=1}^{N}\), write
\begin{equation}
g_i\coloneqq \nabla_{\Theta}\mathcal{L}(\Theta,W,b\mid x_i,c_i).
\end{equation}
The optimizer for the averaged loss sees the empirical-loss gradient
\begin{equation}
G_{\mathcal{D}}
\coloneqq
\nabla_{\Theta}\frac{1}{N}\sum_{i=1}^{N}\mathcal{L}(\Theta,W,b\mid x_i,c_i)
=\frac{1}{N}\sum_{i=1}^{N}g_i ,
\label{eq:empirical_gradient_def}
\end{equation}
and the training-scale quantity is \(\|G_{\mathcal{D}}\|_2\), with no division by the number of quantum coordinates.
We also report the per-sample RMS diagnostic
\begin{equation}
R_{\mathcal{D}}
\coloneqq
\left(\frac{1}{N}\sum_{i=1}^{N}\|g_i\|_2^2\right)^{1/2},
\label{eq:per_sample_rms_def}
\end{equation}
which measures the typical single-sample gradient before cancellations between samples.
These two quantities answer different questions: \(R_{\mathcal{D}}\) can remain large even when the averaged training gradient \(\|G_{\mathcal{D}}\|_2\) is small because the sample gradients point in different directions.

\subsection{Depth-scaling accounting}
\label{subsec:pcs_bp_model}

We use the same hybrid model as in Sec.~\ref{chapter:arch}.
For the accounting statement, the depth-scaling family keeps
\begin{equation}
 n_l,\ d,\ D_f,\ M\ \text{fixed},\qquad n_{\mathrm{idx}}=n_l+Q-1,
 \label{eq:depth_scaling_regime}
\end{equation}
where \(Q\) is the number of PCS-QCNN layers, \(d\) is the number of spatial axes, \(n_l\) is the number of index qubits per axis left after pooling, \(D_f=2^{n_f}\) is the feature-register dimension, and \(M\) is the number of classes.
The final readout dimension is therefore fixed:
\begin{equation}
D_{\mathrm{idx}}\coloneqq 2^{d n_l},
\qquad
D_{\mathrm{out}}\coloneqq D_{\mathrm{idx}}D_f .
\end{equation}
At layer \(\ell\), the active index width per axis is
\begin{equation}
 n_{\ell}=n_{\mathrm{idx}}-\ell+1=n_l+Q-\ell.
\end{equation}
With \(\mathcal{F}^{(\ell)}\) denoting the \(d\)-fold QFT on the active index register, a conditional PCS-equivariant layer has the form
\begin{equation}
  U^{(\ell)}(m_{\ell-1}) \coloneqq \big(\mathcal{F}^{(\ell)\dagger}\otimes \one\big)\,
  \mathcal{B}^{(\ell)}(m_{\ell-1})\,
  \big(\mathcal{F}^{(\ell)}\otimes \one\big),
  \label{eq:U_layer_def}
\end{equation}
where
\begin{equation}
\mathcal{B}^{(\ell)}(m_{\ell-1})=
\sum_{k\in[2^{d n_{\ell}}]} \ket{k}\!\bra{k}\otimes V^{(\ell)}_k(m_{\ell-1})
\end{equation}
is a Fourier-mode multiplexer.
As in the benchmark architecture, later multiplexers are conditioned only on the most recently pooled \(d\)-bit outcome, so the branch multiplicity is
\begin{equation}
 b_1=1,\qquad b_{\ell}=2^d\quad (\ell\ge 2).
\end{equation}
The complete measurement-and-feedforward quantum process defines a probability distribution
\begin{equation}
p_{\Theta}(z\mid x),\qquad z\in[D_{\mathrm{out}}],
\label{eq:p_def}
\end{equation}
where $\Theta$ denotes all quantum parameters.
The linear-softmax head maps $p_{\Theta}(\cdot\mid x)$ to class probabilities by
\begin{equation}
q(x)=\mathrm{softmax}\big(Wp_{\Theta}(\cdot\mid x)+b\big)\in\Delta^M.
\end{equation}
For label $c\in\{1,\dots,M\}$ the loss is
\begin{equation}
\mathcal{L}(\Theta,W,b\mid x,c)\coloneqq -\log q_c.
\label{eq:loss_def}
\end{equation}

For the full Pauli-exponential block ansatz in Eq.~\eqref{eq:pauli_exp_param}, each mode-wise feature block has \(p_{\mathrm{blk}}=4^{n_f}\) real scalar parameters.
The quantum-core parameter count is therefore
\begin{equation}
P_Q
= p_{\mathrm{blk}}\sum_{\ell=1}^{Q} b_{\ell}\,2^{d(n_l+Q-\ell)}.
\label{eq:PQ_count}
\end{equation}
In particular,
\begin{equation}
P_Q\ge p_{\mathrm{blk}}\,2^{d(n_l+Q-1)},
\label{eq:PQ_lower}
\end{equation}
so $P_Q$ grows exponentially in $Q$ at fixed $n_l,d,n_f$.
Let
$G_Q=\nabla_{\Theta_Q}\mathcal{L}(\Theta_Q,W,b\mid x,c)\in\mathbb{R}^{P_Q}$
be the gradient with respect to the quantum-core parameters for any fixed sample, minibatch-average loss, or empirical-risk average.
In the initialization protocol used for the benchmark, all quantum Pauli coefficients in Eq.~\eqref{eq:pauli_exp_param} are sampled independently from \(\mathrm{Unif}(0,2\pi)\).
The classical head is initialized independently by the fan-in uniform rule
\begin{equation}
  \begin{aligned}
  W_{az}&\sim\mathrm{Unif}\!\left(-D_{\mathrm{out}}^{-1/2},D_{\mathrm{out}}^{-1/2}\right),
  & a&\in[M],\\
  b_a&\sim\mathrm{Unif}\!\left(-D_{\mathrm{out}}^{-1/2},D_{\mathrm{out}}^{-1/2}\right),
  & z&\in[D_{\mathrm{out}}],
  \end{aligned}
\end{equation}
independently over all entries, where \(D_{\mathrm{out}}=2^{d n_l}D_f\) is fixed in the depth-scaling regime.
The elementary sensitivity estimate in Supplemental Sec.~\ref{_sec:supp_trainability_full} gives
\begin{equation}
\mathbb{E}\|G_Q\|_2^2
\le
\frac{8M}{3}p_{\mathrm{blk}}Q ,
\end{equation}
where the expectation is over this independent initialization.
Dividing by~\eqref{eq:PQ_lower} yields
\begin{equation}
  \mathbb{E}\!\left[\frac{1}{P_Q}\|G_Q\|_2^2\right]
  \le
  \frac{8M Q}{3\,2^{d(n_l+Q-1)}} .
  \label{eq:coord_count_bound}
\end{equation}
For the ten-class MNIST heads used in the experiments, the numerator constant is \(80Q/3\).
Thus the average per-coordinate second moment is upper-bounded by a quantity scaling as \(Q2^{-dQ}\) in this formal depth-scaling family.
This establishes the parameter-count accounting baseline.
It follows from the exponential growth of \(P_Q\), while aggregate gradient norms are measured separately in Sec.~\ref{sec:results_diagnostics} through \(\|G_{\mathcal{D}}\|_2\) and \(R_{\mathcal{D}}\).

\section{Benchmark choice: MNIST for convolutional inductive bias}\label{sec:mnist_motivation}

MNIST provides a widely used and computationally tractable benchmark~\cite{LeCunMNIST}.
We use it to test translation-equivariant inductive bias under limited qubit and depth budgets.
Because quantum-machine-learning benchmarks require careful matched controls~\cite{SchuldKilloran2022,Cerezo2022}, the MNIST protocol below is used to isolate the convolution-sensitive part of the comparison.

The practical benefit of convolution is regime-dependent.
On non-translated full-data MNIST, dense models can narrow the gap to CNNs, which makes the convolutional effect harder to diagnose.
Supplemental Fig.~\ref{_fig:supp_full_mnist_classical_baselines} shows this directly for the classical controls used here: on the full standard MNIST split resized to $32\times32$ without translations, the CNN and MLP remain much closer than on translated-MNIST ($99.15\%$ vs $96.74\%$ final mean test accuracy).
To make the translation-sensitive inductive bias explicit, we use a translated-MNIST regime: each digit is resized to $16\times16$, embedded in a $32\times32$ canvas, and translated by a random integer offset of at most $8$ pixels per axis.
We train on a balanced subset of $1000$ examples per class and check explicitly that a convolutional classical CNN reaches higher accuracy than a classical MLP on this translated task.
The same translated benchmark is then used for the PCS-QCNN versus matched non-PCS RBC-QCNN comparison in Fig.~\ref{fig:small_ds_clas}(b).

Quantum image models are also constrained by input dimension.
Even MNIST's native $28\times 28$ resolution is too costly for straightforward state preparation and fully general multiplexer implementations on near-term hardware, so classical preprocessing remains practically important.
In the reported experiments we work with power-of-two canvases and a small set of controlled preprocessing choices: the translated benchmark uses digits resized to $16\times16$ and placed on a $32\times32$ canvas, while the full-MNIST sweep in Secs.~\ref{sec:benchmark} and \ref{sec:results} uses four settings: direct preprocessing to $8\times8$, direct preprocessing to $16\times16$, centered embedding of native $28\times28$ images in a $32\times32$ canvas, and direct preprocessing to $32\times32$.
Alternative compression methods such as PCA can reduce dimension more aggressively but may suppress spatial correlations and are therefore less suitable for isolating convolutional inductive bias.

Finally, a large body of QCNN-related work has used MNIST under heterogeneous settings (binary vs multi-class tasks, varying downscaling, and different train/test protocols).
We use a fully specified protocol and include a comparable-budget classical reference effect (CNN vs MLP) to isolate architectural conclusions about quantum convolution.
For a literature map of QCNN-like approaches on MNIST and representative reported settings/accuracies, see Supplemental Sec.~\ref{_sec:supp_mnist_context}.

\section{Benchmark}\label{sec:benchmark}

The benchmark protocol fixes the data preprocessing, hybrid model (encoding, quantum core, classical head), training and evaluation procedures, and simulation modes (statevector vs finite-shot sampling).
The MNIST-specific motivation is given separately in Sec.~\ref{sec:mnist_motivation}; the present section gives the full experimental specification.

\subsection{Data, splits, and preprocessing}\label{sec:benchmark_data}

We use two MNIST regimes.
For Figs.~\ref{fig:small_ds_clas} and~\ref{fig:diff_models}(a), we use a translated-MNIST benchmark built from a balanced subset of $1000$ training examples per class; the standard MNIST test split membership is kept unchanged.
Within this translated regime, both train and test images are resized from $28\times28$ to $16\times16$ by bilinear interpolation, placed on a zero-filled $32\times32$ canvas, and translated by an independently sampled integer offset with maximum magnitude $8$ pixels along each axis.
The prepared translated train and test sets are fixed within each independent run and reused for all training epochs and test evaluations.
The PCS symmetry is formulated on a periodic lattice to obtain an exact finite group action, while this benchmark uses only the corresponding non-wrapping translation regime: the resized digit patch remains inside the $32\times32$ canvas, so no occupied pixel is transported across the boundary.
Thus, the translated-MNIST split probes realistic bounded translation stability on the same discrete lattice without requiring the dataset to contain complete cyclic orbits.
These translated-MNIST runs use train/test batch sizes $256/1600$.

For Fig.~\ref{fig:diff_models}(b) and all finite-shot and diagnostic follow-up plots, we use the full-MNIST split ($60{,}000$ training images and $10{,}000$ test images) with no random translations.
The evaluated full-MNIST preprocessing settings are direct preprocessing to $8\times8$, direct preprocessing to $16\times16$, centered embedding of the native $28\times28$ images in a $32\times32$ canvas, and direct preprocessing to $32\times32$.
These full-MNIST runs use train/test batch sizes $512/16000$.
In all cases the preprocessing protocol outputs grayscale tensors normalized to $[0,1]$.

\subsection{Encoding and state preparation}\label{sec:benchmark_encoding}

Each input image is encoded by a FRQI-like brightness map~\cite{Le2011}.
Let $x_{u,v}\in[0,1]$ denote the preprocessed grayscale value at pixel $(u,v)$.
The encoder maps it affinely to an angle
\begin{equation}
    p_{u,v}=a+(b-a)x_{u,v},
\end{equation}
and writes the local feature state from Eq.~\eqref{eq:color_map_impl} into the least-significant feature qubit.
For \(n_f>1\), all remaining auxiliary feature qubits are initialized in \(\ket{0}^{\otimes(n_f-1)}\); in our feature-register ordering the local state is \(\ket{0}^{\otimes(n_f-1)}\otimes\ket{\phi_{u,v}}\), with \(\ket{\phi_{u,v}}\) on the least-significant feature qubit.
For the reported PCS-QCNN experiments we use the brightness interval $(a,b)=(0,\pi)$.
We choose this a priori as a fixed one-period convention for the local real FRQI color map, not by optimizing it on the reported test set.
Locally, $\ket{\phi(p+\pi)}=-\ket{\phi(p)}$, so intervals longer than $\pi$ revisit the same one-qubit rays.
In the coherent FRQI image superposition, however, such signs can become address-dependent relative phases rather than a global phase of the full image state.
A separate low-data brightness-range sweep varies the upper endpoint $b$ while fixing $a=0$ and is reported only as a post-hoc sensitivity check of this convention (Supplemental Sec.~\ref{_sec:brightness_sweep} and Fig.~\ref{_fig:brightness_sweep}).
Thus an image on an $N_x\times N_y$ canvas is represented using $\log_2 N_x+\log_2 N_y$ index qubits together with $n_f$ feature-register qubits (the grayscale/color qubit plus optional auxiliary feature qubits).

The input state is initialized directly as a dense tensor rather than by compiling an explicit state-preparation circuit.
Relative to the conceptual normalized form in Eq.~\eqref{eq:frqi_like}, the numerical realization used here omits the global $1/\sqrt{N_xN_y}$ factor during initialization.
Encoded sample norms are therefore $\sqrt{N_xN_y}$ rather than $1$.
The measurement layer compensates for this by dividing the final marginals by the known overall spatial normalization factor, so the classifier still receives normalized readout probabilities.
Approximate and structured FRQI loading circuits remain relevant for future hardware-oriented implementations~\cite{Lu2021,Khoshaman2018,Oseledets}.

\subsection{Models under comparison and hyperparameters}\label{sec:benchmark_models}

We evaluate the proposed translation-symmetry-aligned quantum convolutional architecture using the complete hybrid model shown in Fig.~\ref{fig:full_scheme}.
The model consists of a FRQI-like encoder, a PCS-QCNN quantum core, a marginal measurement stage, an optional finite-shot sampling layer, and a linear-softmax head.
The head applies a single linear map of size $(D_{\mathrm{out}},10)$, where $D_{\mathrm{out}}$ is the dimension of the measured probability vector passed to the classifier.
The resulting logits are converted to probabilities by a softmax and trained with the cross-entropy loss.

\begin{figure}
    \centering
    \includegraphics[width=\columnwidth]{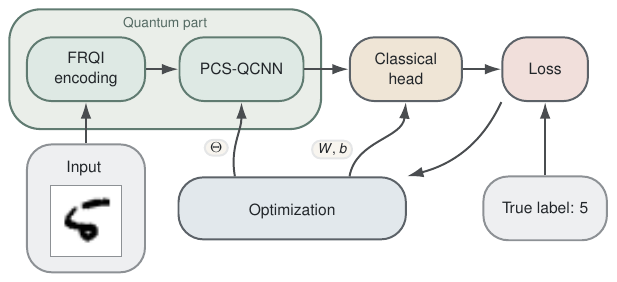}
    \caption{Overview of the hybrid model used in our experiments. The Encoding block prepares an FRQI-like image state on the spatial index registers and the feature register. The PCS-QCNN block applies one or more PCS-equivariant quantum layers with measurement-induced pooling implemented through deferred conditioning. The final measured probability tensor (or its finite-shot histogram estimate at evaluation time) is fed into a linear-softmax head that produces $10$-class logits.}
\label{fig:full_scheme}
\end{figure}

We begin with the quantum part (PCS-QCNN).
A model of depth $Q$ is obtained by composing $Q$ PCS-equivariant layers in the sense of Sec.~\ref{chapter:arch}.
At layer $\ell$, the active index registers are mapped to the Fourier basis, processed by a trainable Fourier-mode multiplexer, and returned to the computational basis; after each non-final layer, one active index qubit per spatial axis is pooled via deferred measurement and moved into an explicit condition register, while the final layer is followed directly by readout.
Later multiplexers are conditioned on the most recently pooled $x$- and $y$-bits.
In the reported results we use the reduced Fourier junction, so each explicit interface consisting of an IQFT, pooling, and the next-layer QFT is replaced by the analytically equivalent fixed junction described in Sec.~\ref{sec:fourier_interface}.
At the density-matrix level the resulting process is linear and CPTP.
In an individual-branch statevector description, the normalized postselected branch states are nonlinear functions of the pre-measurement state.

The reported benchmarks use two PCS-QCNN parameter families.
Figure~\ref{fig:small_ds_clas}(b) fixes a translated-MNIST model with $Q=3$ and $n_f=2$ on a $32\times32$ canvas.
Figure~\ref{fig:diff_models}(a) sweeps $Q\in\{1,2,3,4,5\}$ and $n_f\in\{1,2,3\}$ on the same translated benchmark.
Figure~\ref{fig:diff_models}(b) and all finite-shot follow-up plots use a full-MNIST size sweep with $Q=1$ and $n_f=3$ across four preprocessing settings: direct preprocessing to $8\times8$, direct preprocessing to $16\times16$, centered embedding of native $28\times28$ images in a $32\times32$ canvas, and direct preprocessing to $32\times32$; the follow-up reference model for Figs.~\ref{fig:accuracy_with_shots}, \ref{fig:hist}, and Supplemental Figs.~\ref{_fig:readout_entropy}, \ref{_fig:ellipse}, and~\ref{_fig:decomp} is the medium-resolution $16\times16$ setting.

Let $n_{\mathrm{idx}}=\log_2 N$ denote the number of index qubits per spatial axis for an $N\times N$ preprocessed image.
The total number of logical qubits in the quantum part is
\begin{equation}
n_{\mathrm{tot}}=2n_{\mathrm{idx}}+n_f.
\label{eq:q_tot}
\end{equation}
For the full-readout setting used in all reported figures, the final marginal measurement sums over the explicit condition registers and leaves only the surviving active spatial axes together with the feature register.
Hence
\begin{equation}
n_l=n_{\mathrm{idx}}-Q+1,\qquad n_{\mathrm{meas}}=2n_l+n_f,
\label{eq:q_meas}
\end{equation}
and the classifier input dimension is
\begin{equation}
D_{\mathrm{out}}=2^{n_{\mathrm{meas}}}=2^{2n_l+n_f}.
\label{eq:readout_dim}
\end{equation}
Therefore the classical head size is determined by the number of active qubits left after pooling, not by the total logical register size.

The classifier head is architecturally simple but not always smaller than the quantum core in parameter count.
It is a single biased linear-softmax map $\mathbb{R}^{D_{\mathrm{out}}}\to\mathbb{R}^{10}$, so the classifier contributes exactly $10D_{\mathrm{out}}+10$ trainable parameters.
We denote by $P_Q$ the number of trainable real parameters in the quantum core.
These counts are summarized for the reported sweeps in Supplemental Tables~\ref{_tab:results_architecture} and \ref{_tab:results_size}.

The trainable Fourier-mode blocks also determine the hardware-facing cost model.
In the benchmark model, each mode-dependent feature-register block $U^{(\ell)}_{k_x,k_y}(m)$ is parameterized as a general $U(2^{n_f})$ unitary via the exponential map
\begin{equation}
U^{(\ell)}_{k_x,k_y}(m)\;=\;\exp\!
\Big(i\sum_{\alpha\in\mathcal{P}_{n_f}}\theta^{(\ell)}_{k_x,k_y,\alpha}(m)\,P_{\alpha}\Big),
\label{eq:pauli_exp_param}
\end{equation}
where $\mathcal{P}_{n_f}$ denotes the full $n_f$-qubit Pauli-string basis, including the all-identity string.
The identity coefficient is retained: it is only a phase for a single feature block, but mode- and branch-dependent block phases become relative phases inside the full Fourier multiplexer and therefore are part of the general PCS-equivariant layer.
This gives $4^{n_f}$ real parameters per Fourier mode (and, for $\ell\ge 2$, per pooling branch $m$).
At initialization, every scalar coefficient \(\theta^{(\ell)}_{k_x,k_y,\alpha}(m)\) in Eq.~\eqref{eq:pauli_exp_param} is sampled independently from \(\mathrm{Unif}(0,2\pi)\).
This specifies the coordinate initialization of the exponential ansatz; it is not a Haar draw from \(U(2^{n_f})\).
Since we only use $n_f\le 3$ in the reported benchmarks, each individual mode block acts on at most three qubits and can be synthesized with $O(4^{n_f})$ elementary gates using standard compilation methods; the dominant challenge is multiplexing these blocks across many Fourier modes.
For an $N_x\times N_y$ input, the number of Fourier modes is $N_xN_y$.
At fixed $n_f$, a fully general PCS layer therefore scales linearly in the number of spatial modes, $p_{\rm PCS}=O(N_xN_y)$, whereas an unconstrained dense mode-mixing map scales as $p_{\rm dense}=O((N_xN_y)^2)$.
The benchmark PCS layer uses the general translation-equivariant family: we do not impose finite spatial support or mode-wise weight sharing inside the Fourier blocks.
The corresponding multiplexer $\mathcal{B}^{(\ell)}$ is applied as an explicit block-diagonal unitary in the Fourier basis, i.e., as $\bigoplus_{k_x,k_y}U^{(\ell)}_{k_x,k_y}(m)$ controlled by the active Fourier indices and the selected condition branch.
For simulation, we apply this operator directly as an explicit block-diagonal map.
Rather than explicitly branching on mid-circuit measurement outcomes, pooling and feedforward are implemented through the equivalent deferred-measurement form (keeping pooled qubits as condition-register controls and marginalizing them in the final readout), which is operationally identical to measuring those qubits in the computational basis and classically conditioning the next layer~\cite{Nielsen_Chuang_2010}.

On real NISQ hardware, the dominant cost would be compiling these large multiplexers into one- and two-qubit gates.
In the worst case (no structure shared across modes), standard decompositions of uniformly controlled unitaries require a number of elementary gates that grows exponentially in the number of control qubits (hence polynomially in the number of pixels) and can quickly dominate depth as the image resolution increases~\cite{Bergholm_2005,Shende_multiplexer}.
The controls are matched within each comparison family: the classical CNN and MLP have nearly equal parameter counts, while the PCS-QCNN and RBC-QCNN have identical quantum-core and classifier-head parameter counts.
Hardware-oriented structured/compressed multiplexer designs are discussed in Supplemental Sec.~\ref{_sec:supp_complexity}.

\subsection{Training objective and optimization}\label{sec:benchmark_training}

The training objective is the standard multiclass cross-entropy loss.
For each sample $(x,c)$, the quantum core produces a readout vector $p_{\Theta}(\cdot\mid x)$, the classifier head outputs logits $z(x)=W\,p_{\Theta}(\cdot\mid x)+b$, and class probabilities are obtained as $q(x)=\mathrm{softmax}(z(x))$.
The minibatch loss is therefore
\begin{equation}
\ell = -\frac{1}{B}\sum_{b=1}^{B}\log q_{c_b}(x_b),
\end{equation}
which is the minibatch version of Eq.~\eqref{eq:loss_def}.

All reported models are optimized end-to-end with Adam.
The classical head uses fan-in uniform initialization: with fan-in \(D_{\mathrm{out}}\), both weights and biases are sampled independently from \(\mathrm{Unif}(-D_{\mathrm{out}}^{-1/2},D_{\mathrm{out}}^{-1/2})\).
The classical baselines use learning rate $10^{-2}$, while the hybrid PCS-QCNN models use learning rate $3\times10^{-2}$.
The translated-MNIST fixed runs (Fig.~\ref{fig:small_ds_clas}), the translated-MNIST architecture sweep (Fig.~\ref{fig:diff_models}(a)), and the full-MNIST size sweep (Fig.~\ref{fig:diff_models}(b)) are all trained for $2000$ epochs with test evaluation every $10$ epochs.
The finite-shot follow-up analyses use two subsets of these epochs: Fig.~\ref{fig:accuracy_with_shots} uses epochs $10,100,200,\dots,800$, while Fig.~\ref{fig:hist} uses epochs $100$ and $800$.

\subsection{Simulation modes: statevector vs finite-shot readout}\label{sec:benchmark_simulation_modes}

All experiments are executed on a classical simulator.
We use two simulation modes for the quantum part:

In the infinite-shot (statevector / exact-probability) mode, the simulator computes the exact quantum state (statevector) and the exact output probabilities for the measured qubits.
This removes sampling noise and corresponds to the formal $N_{\mathrm{shot}}\to\infty$ limit.
To avoid ambiguity, all training is performed exclusively in this infinite-shot mode.
This choice isolates architectural effects (symmetry, depth, feature register size, conditioning structure) from the stochasticity of finite-sampling training, and it reflects the fact that stable finite-shot training for larger hybrid models remains computationally expensive.

In the finite-shot mode, the measured probability vector is estimated from a finite number of measurement shots $N_{\mathrm{shot}}$ by sampling from the output distribution.
Equivalently, for each test image with exact readout vector \(p_{\Theta}(\cdot\mid x)\), the classifier receives the empirical frequency vector \(\hat p\) obtained from \(N_{\mathrm{shot}}\) multinomial draws from \(p_{\Theta}(\cdot\mid x)\).
This emulates the fundamental finite-sampling noise of quantum readout even on ideal (noise-free) hardware.
We apply the finite-shot mode only at inference time.
Concretely, for a fixed trained model, we evaluate the classifier on the test set by replacing the exact probability vector with its finite-shot estimate and report accuracy as a function of $N_{\mathrm{shot}}$ (e.g., $128,256,512,1024,2048$ shots).

Thus, throughout the paper, training always uses infinite-shot (exact) quantum readout, whereas inference is reported in both infinite-shot (exact) and finite-shot modes.

\subsection{Baselines and matched controls}\label{sec:benchmark_baselines}

We include two classical references and one matched quantum control around the translated-MNIST benchmark from Sec.~\ref{sec:benchmark_data}.
First, Fig.~\ref{fig:small_ds_clas}(a) compares a convolutional classical CNN against a pure-dense MLP control on translated-MNIST with $16\times16$ digits placed on a $32\times32$ canvas.
The CNN is the four-convolution architecture shown in Supplemental Fig.~\ref{_fig:supp_classical_baselines}(a), and the MLP is the three-hidden-layer dense control shown in Supplemental Fig.~\ref{_fig:supp_classical_baselines}(b).
Their trainable-parameter counts are of the same order ($47{,}034$ for the CNN and $47{,}947$ for the MLP), so the main architectural difference remains the presence or absence of convolutional weight sharing.
Supplemental Fig.~\ref{_fig:supp_full_mnist_classical_baselines} then applies the same pair to the full standard MNIST split resized directly to $32\times32$ without translations, which acts as a control showing that non-translated MNIST alone is a less stringent test of convolutional inductive bias.

Second, Fig.~\ref{fig:small_ds_clas}(b) compares the PCS-QCNN against a matched non-PCS RBC-QCNN on the same translated benchmark.
In this control, each layer replaces the Fourier-basis change and its inverse by a fixed, non-trainable, layer-specific random spatial unitary $R_{\ell}$ and its adjoint $R_{\ell}^{\dagger}$, applied identically on the active $x$- and $y$-registers.
Each $R_{\ell}$ is generated by exponentiating an i.i.d. Gaussian Pauli-basis Hermitian generator of the appropriate active-index dimension.
The control keeps the same depth $Q$, feature-qubit count $n_f$, multiplexer parameter count, pooling schedule, full readout, and classical head as the PCS-QCNN; unlike the reported PCS-QCNN model, it uses explicit pooling rather than the reduced Fourier-junction shortcut.
This preserves the matched architecture and parameter count but generally removes the PCS-equivariance guarantee.
For the fixed translated run used in Fig.~\ref{fig:small_ds_clas}(b), both quantum variants use $Q=3$, $n_f=2$, full readout, and $39{,}434$ trainable parameters in total ($36{,}864$ quantum and $2{,}570$ in the classifier).
The results of these baseline comparisons are reported in Sec.~\ref{sec:results} and summarized numerically in Table~\ref{tab:baseline_comp}.

\subsection{Numerical realization}\label{sec:benchmark_implementation}

All reported results are obtained by dense-tensor statevector simulation.
The encoder, Fourier transforms, reduced Fourier junctions, multiplexers, marginal measurement, finite-shot sampling layer, and classifier are evaluated directly on the full state tensor rather than through a gate-by-gate hardware emulation.
Training uses exact-probability statevector evolution; the finite-shot studies replace the exact readout by multinomial histograms sampled from the same trained quantum output distribution.
The reproducibility package is available in \cite{pcsqcnn_repo}.

\section{Results}\label{sec:results}

The empirical results for the hybrid PCS-QCNN follow the benchmark protocol fixed in Sec.~\ref{sec:benchmark}.
They address the effect of the PCS construction relative to a matched non-PCS RBC-QCNN, the classical CNN/MLP control across translated and non-translated MNIST regimes, infinite-shot behavior across architecture and input-size sweeps, and finite-shot readout as a resource constraint.
As fixed above, the quantum control is a matched non-PCS ablation rather than a QCS baseline, and the finite-shot diagnostics use a single direct-$16\times16$ full-MNIST reference run.
Unless stated otherwise, all models are trained in the infinite-shot (exact-probability) simulator mode; at inference time we report both infinite-shot performance and finite-shot reevaluations obtained by sampling the quantum readout distribution with a fixed shot budget.

\subsection{Infinite-shot inference performance and learning dynamics}\label{sec:results_infinite_shot}

Figure~\ref{fig:diff_models}(a) reports the translated-MNIST architecture sweep under infinite-shot inference.
In this sweep, several multi-layer models exceed the one-layer configurations with the same feature width, but performance is not monotone in depth.
The largest displayed mean occurs at the intermediate setting $Q=3$, $n_f=3$ ($79.15\%$), but because the sweep reports only three-seed means and no uncertainty estimates, we treat this as a descriptive sweep outcome rather than a resolved architecture ranking.
Increasing from $n_f=1$ to wider feature registers often improves performance, although the trend is not monotone at the deepest settings.
Thus, on the translated benchmark, both depth and feature width materially affect performance.

Figure~\ref{fig:diff_models}(b) isolates input resolution on full-MNIST using the canonical $Q=1$, $n_f=3$ PCS-QCNN.
Moving beyond $8\times8$ gives a clear gain: the $8\times8$ model has the lowest final mean ($93.21\%$), while the $16\times16$, $28\times28$-on-$32\times32$, and direct-$32\times32$ models all reach high final accuracies.
Among the displayed preprocessing choices, the direct-$32\times32$ run attains the highest final mean ($98.13\%$).
For a map of QCNN-like results under heterogeneous MNIST settings, see Supplemental Sec.~\ref{_sec:supp_mnist_context}.
A detailed error breakdown for the final direct-$16\times16$ full-MNIST reference model is provided in Supplemental Sec.~\ref{_sec:supp_results_errors}, including a confusion matrix and representative misclassified examples.
The remaining errors are consistent with visual ambiguity between similar digits after aggressive downscaling to a $16\times16$ quantum input.

\begin{figure*}
    \centering
    \begin{overpic}[width=0.49\textwidth]{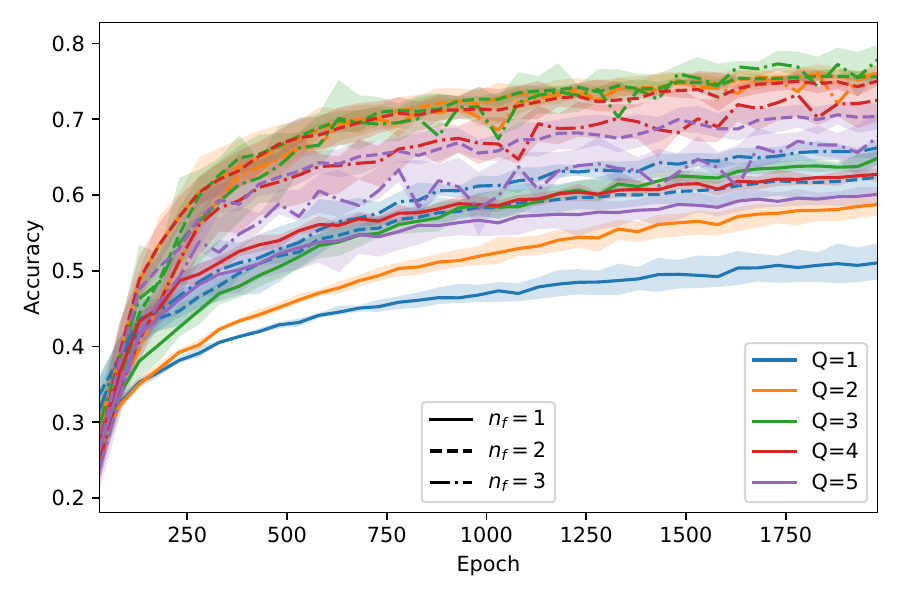}
    \put(3mm,3mm){(a)}
    \end{overpic}
    \begin{overpic}[width=0.49\textwidth]{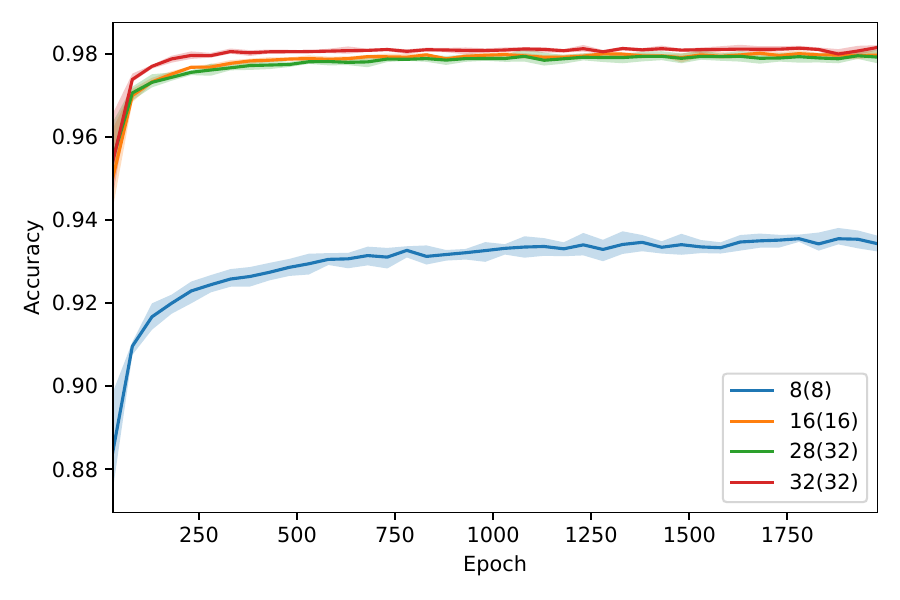}
    \put(3mm,3mm){(b)}
    \end{overpic}
    \caption{Infinite-shot test-accuracy dynamics for the PCS-QCNN sweeps. (a) Translated-MNIST architecture sweep over quantum layers $Q\in\{1,\dots,5\}$ and feature-qubit counts $n_f\in\{1,2,3\}$ with full readout on the translated benchmark with $16\times16$ digits placed on a $32\times32$ canvas; colors encode $Q$ and line styles encode $n_f$. (b) Full-MNIST size sweep for the canonical $Q=1$, $n_f=3$ PCS-QCNN across four preprocessing settings: direct $8\times8$, direct $16\times16$, centered $28\times28$ in a $32\times32$ canvas, and direct $32\times32$. In both panels, solid curves show mean test accuracy and shaded bands show the $25$th--$75$th percentile range over $3$ seeds.}
    \label{fig:diff_models}
\end{figure*}

\subsection{Hyperparameter sweep and parameter accounting}\label{sec:results_sweep}

The translated-MNIST architecture sweep spans $Q\in\{1,2,3,4,5\}$ and $n_f\in\{1,2,3\}$.
In this sweep, depth and feature width are both relevant: wider feature registers often help, and the top observed means occur at intermediate depths, but the dependence is not monotone and the sweep should not be read as a precise architecture ranking.
Exact parameter counts for the reported architecture and size sweeps are summarized in Supplemental Tables~\ref{_tab:results_architecture} and \ref{_tab:results_size}; the corresponding accuracy trends are shown directly in Fig.~\ref{fig:diff_models}.

\subsection{Matched controls on translated-MNIST}\label{sec:results_baselines}

In the translated-MNIST regime from Sec.~\ref{sec:benchmark_data}, we include two comparisons: (i) classical CNN versus classical MLP on translated-MNIST with $16\times16$ digits on a $32\times32$ canvas, and (ii) the PCS-QCNN versus the matched non-PCS RBC-QCNN under the translated hybrid protocol (Sec.~\ref{sec:benchmark_baselines}).
The results are summarized in Table~\ref{tab:baseline_comp}.

Figure~\ref{fig:small_ds_clas}(a) shows a wide separation between the classical CNN and MLP on the translated benchmark: with $16\times16$ digits placed on a $32\times32$ canvas, the CNN reaches a final mean test accuracy of $97.68\%$, whereas the dense MLP reaches $48.93\%$.
Supplemental Fig.~\ref{_fig:supp_full_mnist_classical_baselines} shows that the same classical architectures are much closer on full $32\times32$ MNIST without translations ($99.15\%$ vs $96.74\%$), illustrating that non-translated MNIST alone is a less stringent benchmark for testing convolutional inductive bias.
Within the quantum family, Fig.~\ref{fig:small_ds_clas}(b) shows a $35.08$ percentage-point gap between the PCS-QCNN and the matched non-PCS RBC-QCNN: $75.89\%$ versus $40.82\%$ final mean test accuracy.
Thus, although the fixed $Q=3$, $n_f=2$ PCS-QCNN does not match the classical CNN on this translated benchmark, the comparison provides evidence for a PCS-specific contribution relative to a matched symmetry-broken quantum control.

\begin{table}[t]
\centering
\begin{tabular}{@{}p{0.50\columnwidth}p{0.23\columnwidth}p{0.19\columnwidth}@{}}
\hline
\textbf{Model / baseline} & \textbf{Params} & \textbf{Accuracy (\%)} \\
\hline
Classical CNN & $47{,}034$ & $97.68$ \\
Classical MLP & $47{,}947$ & $48.93$ \\
PCS-QCNN & $39{,}434$ & $75.89$ \\
RBC-QCNN & $39{,}434$ & $40.82$ \\
\hline
\end{tabular}
    \caption{Final mean test accuracies for the translated benchmark entries in Fig.~\ref{fig:small_ds_clas} ($1000$ training examples per class, with digits resized to $16\times16$, placed on a $32\times32$ canvas, and translated by at most $8$ pixels along each axis for the classical and quantum runs). The parameter column reports the total number of trainable parameters. Both quantum variants use $Q=3$, $n_f=2$, and full readout. The reported value in each row is the mean over $3$ seeds.}
\label{tab:baseline_comp}
\end{table}

\subsection{Optimization measurements}\label{sec:results_diagnostics}

The parameter-count estimate in Sec.~\ref{sec:pcs_bp_theory} controls the mean per-coordinate second moment.
The practical trainability question is whether the empirical-loss gradient seen by the optimizer is numerically visible.
We measure two initialization-time gradient diagnostics in a depth-scaling family on $2^Q\times2^Q$ inputs, with fixed post-pooling size \(n_l=1\) and fixed feature register \(n_f=3\).
Figure~\ref{fig:gradient_norms_main} reports the empirical-loss gradient norm \(\|G_{\mathcal{D}}\|_2\) and the per-sample RMS gradient norm \(R_{\mathcal{D}}\), each for all quantum parameters, for the first quantum layer, and for the last quantum layer.
The solid curves isolate the actual averaged training signal, while the dashed curves show the typical single-sample signal before inter-sample cancellation.
Thus the diagnostic separates two possible sources of small gradients: attenuation of the single-sample gradients themselves and cancellation when the empirical loss is averaged over the dataset.
For each depth \(Q\in\{1,\dots,7\}\), the diagnostics are computed at initialization on the same deterministic class-balanced subset \(\mathcal{D}\) of \(N=256\) test samples.
For this diagnostic only, the selected raw test images are resized to \(2^{Q-1}\times2^{Q-1}\), placed on a \(2^Q\times2^Q\) canvas, and translated by fixed integer offsets within the available border; the resulting depth-specific inputs are held fixed across the \(12\) random initializations.
For a fixed random initialization, \(\|G_{\mathcal{D}}\|_2\) is obtained by differentiating the cross-entropy loss averaged over this subset and then taking the Euclidean norm of the selected quantum-parameter coordinates.
By contrast, \(R_{\mathcal{D}}\) is obtained by differentiating each single-sample loss separately, taking the Euclidean norm on the same selected coordinate set, squaring, averaging these squared norms over the \(256\) samples, and finally taking the square root.
The three selected coordinate sets are the full quantum core, the first quantum layer, and the last quantum layer.
Neither diagnostic is divided by the number of coordinates.
The plotted line is the mean over \(12\) independent random initializations, and the shaded band is the corresponding \(25\)th--\(75\)th percentile range.

\begin{figure}[t]
    \centering
    \includegraphics[width=\columnwidth]{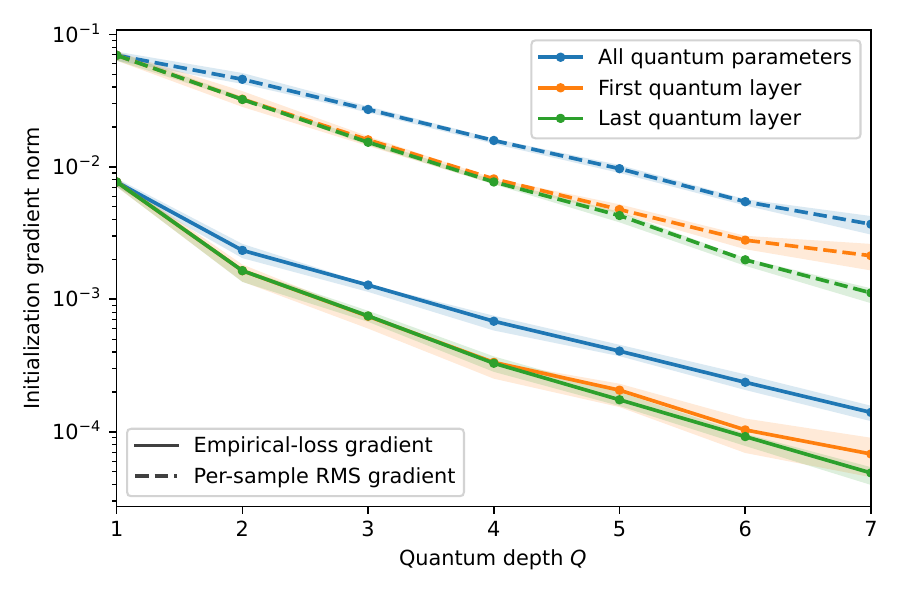}
    \caption{Initialization-time quantum-gradient diagnostics in the depth-scaling family. Solid curves show the empirical-loss gradient norm \(\|G_{\mathcal{D}}\|_2\) and dashed curves show the per-sample RMS gradient norm \(R_{\mathcal{D}}\), defined in Eqs.~\eqref{eq:empirical_gradient_def} and~\eqref{eq:per_sample_rms_def}. Colors indicate whether the selected coordinates are all quantum parameters, the first quantum layer, or the last quantum layer.}
    \label{fig:gradient_norms_main}
\end{figure}

We also monitor the Shannon entropy of the full readout distribution under finite-shot reevaluation in Supplemental Fig.~\ref{_fig:readout_entropy}.
The entropy tracks how broad the sampled output histograms remain as training proceeds, separating finite-shot readout effects from the gradient-scale diagnostics above.

\subsection{Finite-shot inference and a representative train--deploy mismatch}\label{sec:results_shots}

The practical cost of inference on quantum hardware scales with the number of measurement shots used to estimate the readout distribution.
Therefore, shot budget becomes an explicit hyperparameter (Sec.~\ref{sec:benchmark_simulation_modes}).
We evaluate finite-shot inference by sampling from the exact output distribution produced by the statevector simulator and then running the classical head on the estimated probability vector.

Figure~\ref{fig:accuracy_with_shots} reevaluates the representative direct-$16\times16$ full-MNIST reference model ($Q=1$, $n_f=3$) at selected training epochs and several shot budgets.
Each finite-shot point is one stochastic full-test-set pass in which each test image receives one independent \(N_{\mathrm{shot}}\)-sample multinomial histogram, while the infinite-shot curve is deterministic.
In this run, reducing the shot budget degrades accuracy.
At fixed shot budget, finite-shot accuracy does not track infinite-shot accuracy monotonically: the exact-probability curve improves and then stays high, whereas the finite-shot curves peak earlier and can decline under continued infinite-shot training.
These diagnostics identify a possible train--deploy mismatch even in this noiseless setting: exact-probability optimization can find a solution that is more fragile under a fixed inference-time shot budget.
We probe this behavior below through shot-noise propagation in the hybrid readout.

\begin{figure}
    \centering
    \includegraphics[width=\columnwidth]{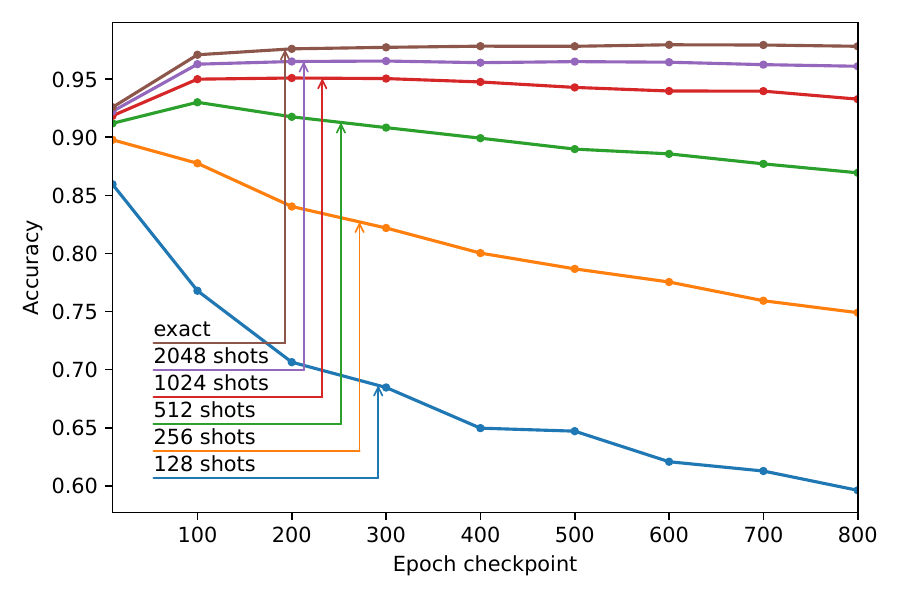}
    \caption{Finite-shot reevaluation of one representative direct-$16\times16$ full-MNIST reference PCS-QCNN ($Q=1$, $n_f=3$). Each curve corresponds to an inference-time shot budget ($128$, $256$, $512$, $1024$, $2048$, or infinite-shot exact readout). Each finite-shot point is one stochastic full-test-set pass with one independently sampled multinomial readout histogram per test image; the exact curve is deterministic. Training always uses exact probabilities; only inference is resampled.}
    \label{fig:accuracy_with_shots}
\end{figure}

\subsection{Mechanism analysis: loss distributions under finite-shot sampling}\label{sec:results_hist}

To probe this finite-shot sensitivity, we examined the distribution of loss values obtained when replacing the exact readout vector by its finite-shot estimate.
Figure~\ref{fig:hist} shows the resulting batch-mean test cross-entropy distributions for the representative direct-$16\times16$ $Q=1$, $n_f=3$ model at two training stages (epochs $100$ and $800$) and for several shot budgets.
These histograms use the same trained model.
For each epoch and shot budget, the $10{,}000$-sample test set is split into $40$ nonoverlapping batches of $250$ images; each finite-shot histogram entry is one batch-mean cross-entropy under one of $100$ independent resampling passes, giving $40\times100=4000$ entries per shot budget.
The infinite-shot reference contains one exact pass over the same $40$ batches.

\begin{figure*}[t]
    \begin{overpic}[width=0.49\textwidth]{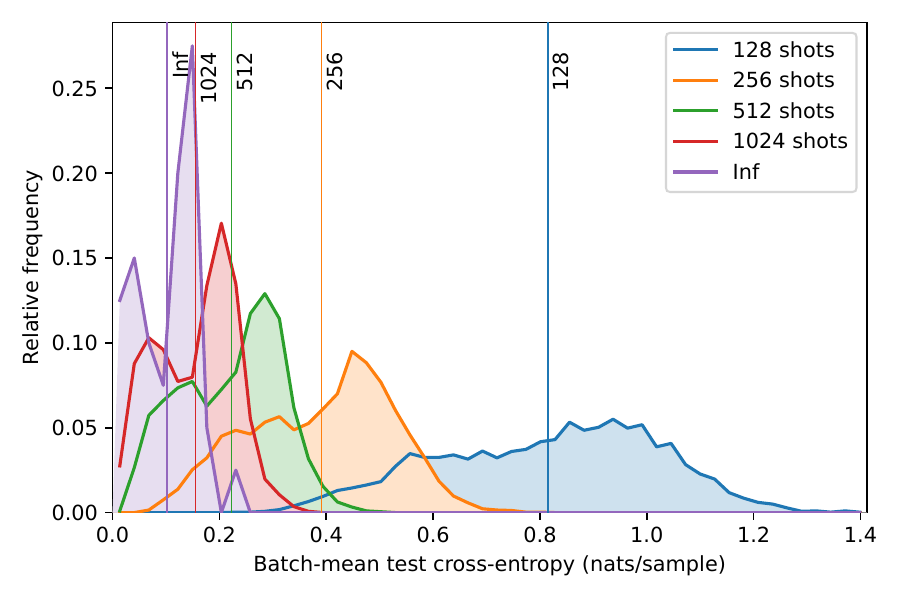}
    \put(3mm,3mm){(a)}
    \end{overpic}
    \hfill
    \begin{overpic}[width=0.49\textwidth]{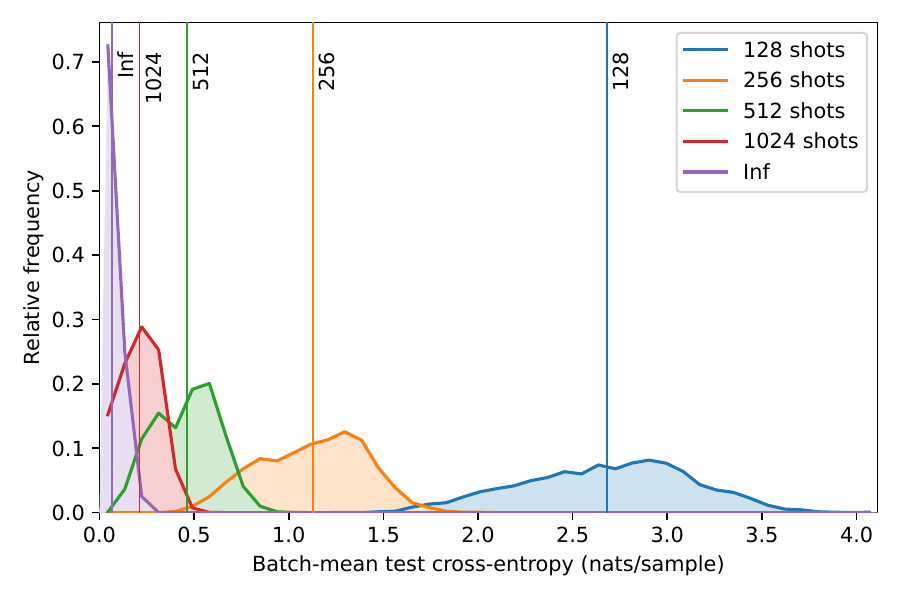}
    \put(3mm,3mm){(b)}
    \end{overpic}
    \caption{Batch-mean test cross-entropy distributions for one representative direct-$16\times16$ $Q=1$, $n_f=3$ model under finite-shot readout. (a) After $100$ training epochs. (b) After $800$ training epochs. Colored histograms correspond to $128$, $256$, $512$, $1024$, and infinite-shot readout. Each finite-shot entry is the mean cross-entropy on one $250$-sample test batch under one of $100$ independent multinomial resampling passes, giving $4000$ entries per finite-shot histogram; the exact reference has $40$ entries. Vertical lines mark sample-weighted mean losses.}
    \label{fig:hist}
\end{figure*}

Two qualitative effects are visible.
First, there exists a practical threshold in the shot budget at which the loss distribution changes character: at sufficiently large $N_{\mathrm{shot}}$ the distribution remains concentrated near the infinite-shot value, while at smaller budgets it develops pronounced tails and, in some cases, multi-modality.
Second, in this run the shape of the distribution changes with training duration: the longer-trained model can exhibit rare but extremely large loss outliers under limited shots (rightmost bins in Fig.~\ref{fig:hist}), even when the infinite-shot accuracy is higher.
These outliers are consistent with the longer-trained model being sharper with respect to perturbations of the readout distribution, so that a small probability-estimation error can occasionally push the classical head into a high-loss regime.

\subsection{Geometric view of finite-shot sensitivity}\label{sec:results_landscape}

A local loss-landscape probe in readout-probability space tests the same interpretation for the same trained model.
The resulting landscape is consistent with the same pattern: the longer-trained model has a sharper effective loss surface along finite-shot perturbation directions.
The construction and the corresponding figure are reported in Supplemental Sec.~\ref{_sec:supp_results_landscape}.

\section{Conclusions}\label{sec:conclusions}

Quantum convolution is fundamentally an encoding-aware symmetry constraint.
The relevant symmetry is not fixed by the QCNN layout alone, but by how translations of the original data act after encoding.
The circuit should be equivariant under the translation action induced on the quantum state by the chosen data encoding.
For FRQI-like address/amplitude encodings, pixel translations act as modular addition on the index registers (PCS), which generally differs from cyclic permutations of physical qubits (QCS) enforced by many MERA-inspired QCNN layouts.

This perspective yields a constructive characterization of PCS-equivariant quantum layers.
At the layer level, the resulting unitary family is the commutant of the encoded translation operator, giving a Fourier-space constructive form for PCS-equivariant quantum convolution.
We then built a multiscale PCS-QCNN quantum core with measurement-conditioned pooling and analyzed its trainability.
The trainability analysis separates the elementary parameter-count mechanism that suppresses the expected mean per-coordinate second moment from the empirical gradient norms directly relevant to optimization.
This distinction matters because a per-sample RMS diagnostic does not include cancellations between examples, whereas training follows the gradient of the averaged empirical loss.

Empirically, the benchmarks show two signatures of encoding-aligned convolution under the controls used here.
On translated-MNIST, the classical CNN/MLP control separates by $48.75$ percentage points, showing that the task is sensitive to translation-aware inductive bias, and within the quantum family the PCS-QCNN exceeds the matched non-PCS RBC-QCNN by $35.08$ percentage points.
On the full-MNIST size sweep, all larger spatial resolutions outperform the smallest $8\times8$ configuration, with direct $32\times32$ preprocessing giving the highest final mean in the displayed sweep.
The finite-shot diagnostics further show that shot budget can change the effective ranking of checkpoints, making sampling cost a deployment-relevant hyperparameter.

Taken together, these results support preserving the encoded PCS structure relative to a parameter-matched symmetry-broken quantum control in an idealized statevector benchmark.
They do not settle PCS versus QCS empirically, hardware-native performance, or systematic finite-shot degradation across independently trained models.
Those questions require matched QCS-equivariant baselines, compiled state-preparation and multiplexer circuits, structured/compressed PCS parameterizations, and shot/noise-aware training protocols.

Another open question is how far the architecture should be pushed toward an end-to-end quantum model.
In the present hybrid design, the classical head provides inexpensive nonlinear decision layers, while a fully quantum replacement would need measurement-conditioned branching or other effectively non-unitary mechanisms and could significantly increase shot requirements.
Whether such a fully quantum readout is advantageous under realistic shot budgets remains unresolved.

A second open challenge is hardware-native training.
This study uses infinite-shot simulation for optimization and finite-shot sampling at inference; extending optimization itself to shot-limited, noisy hardware requires shot-efficient gradient estimation and update rules that stay stable in large multiplexed circuits.
Related design questions include principled feature-growth policies across pooling levels and constraints on mode-dependent blocks that preserve symmetry while controlling depth and sampling cost.

\section*{Data and Code Availability}
The code that supports the findings of this study is openly available in \cite{pcsqcnn_repo}.

\begin{acknowledgments}
The research was supported by ITMO University Research Projects in AI Initiative (project 640111).
\end{acknowledgments}

\bibliography{apssamp}

{   
    \widetext
    \clearpage
    \appendix
    \begin{center}
\textbf{\large Supplemental Material for ``Pixel-Translation-Equivariant Quantum Convolutional Neural Networks via Fourier Multiplexers''}
\end{center}
\makeatletter
\@removefromreset{lemma}{section}
\@removefromreset{theorem}{section}
\@removefromreset{definition}{section}
\@removefromreset{corollary}{section}
\@removefromreset{remark}{section}
\@removefromreset{equation}{section}
\makeatother

\setcounter{lemma}{0}
\renewcommand{\thelemma}{S\arabic{lemma}}
\setcounter{theorem}{0}
\renewcommand{\thetheorem}{S\arabic{theorem}}
\setcounter{definition}{0}
\renewcommand{\thedefinition}{S\arabic{definition}}
\setcounter{corollary}{0}
\renewcommand{\thecorollary}{S\arabic{corollary}}
\setcounter{remark}{0}
\renewcommand{\theremark}{S\arabic{remark}}

\setcounter{equation}{0}
\renewcommand{\theequation}{S\arabic{equation}}
\setcounter{figure}{0}
\renewcommand{\thefigure}{S\arabic{figure}}
\setcounter{table}{0}
\renewcommand{\thetable}{S\arabic{table}}

This document includes
a literature table for QCNN-style MNIST results (Sec.~\ref{sec:supp_mnist_context}),
additional experiments and analyses (Sec.~\ref{sec:supp_additional_experiments}),
an explicit specification of the classical baseline architectures used in benchmark comparisons (Sec.~\ref{sec:supp_classical_baselines}),
the full-MNIST classical control used to motivate the translated benchmark choice (Sec.~\ref{sec:supp_full_mnist_classical_baselines}),
an extended classical convolution primer (Sec.~\ref{sec:supp_classical_primer}),
complete proofs of the QCS--PCS mismatch lemma and the Fourier-multiplexer theorem (Sec.~\ref{sec:supp_symmetry_proofs}),
a longer discussion of MERA-QCNN-style (QCS-equivariant) templates (Sec.~\ref{sec:supp_mera_templates}),
derivations behind the Fourier-junction reduction between pooled PCS layers (Sec.~\ref{sec:supp_fourier_junction}),
discussion of the idealized benchmark model and its resource scaling (Sec.~\ref{sec:supp_resource_scaling}),
and the gradient-accounting derivation (Sec.~\ref{sec:supp_trainability_full}).

\section{QCNN-style models on MNIST: literature table}\label{sec:supp_mnist_context}

MNIST~\cite{LeCunMNIST} is widely used in demonstrations of quantum and hybrid quantum--classical image classifiers, including QCNN-inspired architectures.
However, the literature is highly heterogeneous in at least four respects: (i) binary vs multi-class tasks (often with different label subsets), (ii) image resolution and preprocessing (downscaling, PCA-based compression, handcrafted features), (iii) the training/evaluation protocol (data splits, subset selection, number of epochs), and (iv) the readout model (infinite-shot/statevector versus finite-shot sampling, and sometimes implicit assumptions about state preparation).
As a result, reported accuracies are rarely directly comparable across papers.
This is a general issue in quantum-machine-learning benchmarking, where design choices and classical controls can affect conclusions~\cite{Bowles2024}.

In the main text (Sec.~\ref{sec:mnist_motivation}), MNIST is used in two fixed benchmark families: a translated-MNIST regime where the classical convolutional and dense baselines are well separated, and a full-MNIST size sweep used for the main infinite-shot and finite-shot PCS-QCNN analyses (Secs.~\ref{sec:results_infinite_shot}--\ref{sec:results_hist}).

Table~\ref{tab:mnist_lit} summarizes representative QCNN-like results on MNIST, together with the corresponding task type and input resolution as reported by the authors.

\begin{table*}[ht]
\centering
\begin{threeparttable}
\small
\setlength{\tabcolsep}{5pt}
\begin{tabular}{p{0.24\textwidth}p{0.13\textwidth}p{0.29\textwidth}p{0.26\textwidth}}
\toprule
\textbf{Work} & \textbf{Task} & \textbf{Input / preprocessing} & \textbf{Reported accuracy (\%)} \\
\midrule
Mattern et al.~\cite{Mattern2021} & 10-class & $14\times14$ & 85--88 \\
Huang et al.~\cite{Huang2023} & 10-class & $14\times14$ & 96 \\
Li et al.~\cite{Li2022} & Binary & $4\times4$ (label pairs) & 99.80 (4 vs 5), 91.51 (3 vs 8) \\
Oh et al.~\cite{Oh2021} & 10-class & $10\times10$ & 95 \\
Huggins et al.~\cite{Huggins2018} & Binary & $8\times8$ & 95 \\
Jing et al.~\cite{Jing2022} & 10-class & $10\times10$ and $20\times20$ & 94--95 \\
Easom-McCaldin et al.~\cite{Easom_Mccaldin} & Binary & $9\times9$ & 100 \\
Zheng et al.~\cite{Zheng2023} & Binary & $8\times8$ & 96.65 \\
Hur et al.~\cite{Hur2022} & Binary & PCA to $16\times16$ & 98.1 \\
Huang et al.~\cite{Huang2022} & Binary / 3-class & $8\times8$ & 95.8 / 72.1 \\
Li et al.~\cite{Li2020} & 10-class & $32\times32$ & 98.97 \\
Wei et al.~\cite{Wei2022} & 10-class & $32\times32$ & 96.3 and 74.3 \\
\midrule
\textbf{This work} & 10-class & full-MNIST, direct $16\times16$ / direct $32\times32$ preprocessing & 97.96 / 98.13 (both exact-probability inference) \\
\bottomrule
\end{tabular}
\begin{tablenotes}[flushleft]
\footnotesize
\item Notes: Accuracy values are reproduced from the cited papers, higher is better.
\end{tablenotes}
\end{threeparttable}
\caption{Overview of representative QCNN-like MNIST results in the literature.
Entries are shown in the form reported by the corresponding references.
Our main paper uses MNIST with a fixed preprocessing and evaluation protocol and includes within-family matched controls (Sec.~\ref{sec:results_baselines}).}
\label{tab:mnist_lit}
\end{table*}

\section{Additional experiments and analyses}\label{sec:supp_additional_experiments}

This section gives additional analyses for the text Results section (Sec.~\ref{sec:results}): an encoder-scale brightness sensitivity sweep, an extended hyperparameter and parameter-accounting report, a readout entropy diagnostic, a local readout-space loss-landscape probe, and qualitative error analysis.
These analyses document the sensitivity of the FRQI angle map, training behavior, finite-shot sensitivity, and residual error structure.

\subsection{Hyperparameter sweep and parameter accounting}\label{sec:supp_results_sweep}

The experiments reported here use two sweep families.
Figure~\ref{fig:diff_models}(a) is a translated-MNIST architecture sweep with digits resized to $16\times16$, placed on a $32\times32$ canvas, translated by at most $8$ pixels along each axis, evaluated with full readout, and trained for $2000$ epochs over $3$ seeds, with $Q\in\{1,2,3,4,5\}$ and $n_f\in\{1,2,3\}$.
Figure~\ref{fig:diff_models}(b) is a full-MNIST size sweep with full readout, $Q=1$, $n_f=3$, $2000$ training epochs, and $3$ seeds, across four preprocessing settings: direct $8\times8$, direct $16\times16$, centered $28\times28$ in a $32\times32$ canvas, and direct $32\times32$.
The follow-up reference model reused in Figs.~\ref{fig:accuracy_with_shots}, \ref{fig:hist}, and Supplemental Figs.~\ref{fig:readout_entropy}, \ref{fig:ellipse}, and~\ref{fig:decomp} is the medium-resolution direct $16\times16$ member of this full-MNIST size sweep.

For square canvas size $2^{n_{\mathrm{idx}}}$ and full readout, the total logical qubit count is \(n_{\mathrm{tot}}=2n_{\mathrm{idx}}+n_f\) (Eq.~\eqref{eq:q_tot}).
After $Q-1$ pooling steps per axis, the remaining active index width is \(n_l=n_{\mathrm{idx}}-Q+1\), so the classifier sees \(n_{\mathrm{meas}}=2n_l+n_f\) effective measured qubits (Eq.~\eqref{eq:q_meas}) and input dimension \(D_{\mathrm{out}}=2^{n_{\mathrm{meas}}}\) (Eq.~\eqref{eq:readout_dim}).
Because the classifier is a single biased linear layer, its parameter count is exactly $10D_{\mathrm{out}}+10$.
The quantum parameter counts follow directly from the model definition in the main text.
Tables~\ref{tab:results_architecture} and \ref{tab:results_size} summarize these exact counts and final mean accuracies for the reported sweeps; the full accuracy trajectories are shown in Figs.~\ref{fig:diff_models}(a,b).

The translated-MNIST architecture sweep shows that intermediate depths and wider feature registers can improve performance, with the largest displayed mean at \(Q=3,n_f=3\); with only three seeds and no uncertainty column in the table, this should be read as a descriptive sweep result rather than a resolved architecture ranking.
The full-MNIST size sweep likewise depends on image resolution: direct $8\times8$ gives the lowest final mean ($93.21\%$), the larger three preprocessing choices span $97.96$--$98.13\%$, and the direct $32\times32$ model attains the highest final mean in the displayed sweep ($98.13\%$).

\begin{table*}[ht]
    \centering
    \begin{threeparttable}
    \small
    \setlength{\tabcolsep}{5pt}
    \begin{tabular}{@{}lcrrll@{}}
    \toprule
    \textbf{Configuration} & \textbf{Total qubits} & \textbf{Quantum params} & \textbf{Classifier params} & \textbf{Readout shape} & \textbf{Final mean acc. (\%)} \\
    \midrule
    $Q=1$, $n_f=1$ & 11 & 4\,096 & 20\,490 & $32\times32\times2$ & 51.29 \\
    $Q=1$, $n_f=2$ & 12 & 16\,384 & 40\,970 & $32\times32\times4$ & 62.50 \\
    $Q=1$, $n_f=3$ & 13 & 65\,536 & 81\,930 & $32\times32\times8$ & 66.36 \\
    $Q=2$, $n_f=1$ & 11 & 8\,192 & 5\,130 & $16\times16\times2$ & 59.02 \\
    $Q=2$, $n_f=2$ & 12 & 32\,768 & 10\,250 & $16\times16\times4$ & 76.00 \\
    $Q=2$, $n_f=3$ & 13 & 131\,072 & 20\,490 & $16\times16\times8$ & 76.85 \\
    $Q=3$, $n_f=1$ & 11 & 9\,216 & 1\,290 & $8\times8\times2$ & 65.02 \\
    $Q=3$, $n_f=2$ & 12 & 36\,864 & 2\,570 & $8\times8\times4$ & 75.89 \\
    $Q=3$, $n_f=3$ & 13 & 147\,456 & 5\,130 & $8\times8\times8$ & 79.15 \\
    $Q=4$, $n_f=1$ & 11 & 9\,472 & 330 & $4\times4\times2$ & 62.63 \\
    $Q=4$, $n_f=2$ & 12 & 37\,888 & 650 & $4\times4\times4$ & 74.44 \\
    $Q=4$, $n_f=3$ & 13 & 151\,552 & 1\,290 & $4\times4\times8$ & 73.79 \\
    $Q=5$, $n_f=1$ & 11 & 9\,536 & 90 & $2\times2\times2$ & 60.46 \\
    $Q=5$, $n_f=2$ & 12 & 38\,144 & 170 & $2\times2\times4$ & 70.25 \\
    $Q=5$, $n_f=3$ & 13 & 152\,576 & 330 & $2\times2\times8$ & 67.96 \\
    \bottomrule
    \end{tabular}
    \begin{tablenotes}[flushleft]
    \footnotesize
    \item Notes: This table corresponds to the translated-MNIST architecture sweep in Fig.~\ref{fig:diff_models}(a). All runs use a $32\times32$ canvas with digits resized to $16\times16$ and translated by at most $8$ pixels per axis. The accuracy column reports the final mean test accuracy over $3$ seeds from the runs used for the figure; small differences among high-performing rows should not be interpreted as a statistically resolved ranking.
    \end{tablenotes}
    \end{threeparttable}
\caption{Exact parameter accounting for the PCS-QCNN architecture sweep in Fig.~\ref{fig:diff_models}(a).}
    \label{tab:results_architecture}
\end{table*}

\begin{table*}[t]
    \centering
    \begin{threeparttable}
    \small
    \setlength{\tabcolsep}{5pt}
    \begin{tabular}{@{}lcrrll@{}}
    \toprule
    \textbf{Preprocessing} & \textbf{Total qubits} & \textbf{Quantum params} & \textbf{Classifier params} & \textbf{Readout shape} & \textbf{Final mean acc. (\%)} \\
    \midrule
    direct $8\times8$ & 9 & 4\,096 & 5\,130 & $8\times8\times8$ & 93.21 \\
    direct $16\times16$ & 11 & 16\,384 & 20\,490 & $16\times16\times8$ & 97.96 \\
    $28\times28$ on $32\times32$ & 13 & 65\,536 & 81\,930 & $32\times32\times8$ & 97.96 \\
    direct $32\times32$ & 13 & 65\,536 & 81\,930 & $32\times32\times8$ & 98.13 \\
    \bottomrule
    \end{tabular}
    \begin{tablenotes}[flushleft]
    \footnotesize
    \item Notes: This table corresponds to the full-MNIST size sweep in Fig.~\ref{fig:diff_models}(b), which fixes $Q=1$ and $n_f=3$ and varies only the preprocessing choice. The $28\times28$-on-$32\times32$ and direct-$32\times32$ rows share the same quantum and classifier dimensions because both use a $32\times32$ quantum canvas; they differ only in image preprocessing before encoding, with no random translations in either case. The accuracy column reports the final mean test accuracy over $3$ seeds from the runs used for the figure.
    \end{tablenotes}
    \end{threeparttable}
\caption{Exact parameter accounting for the PCS-QCNN size sweep in Fig.~\ref{fig:diff_models}(b).}
    \label{tab:results_size}
\end{table*}

\subsubsection{Preprocessing convention for encoder inputs.}\label{sec:encoder_preproc}
For reproducibility, let $x_{u,v}\in[0,1]$ denote the preprocessed grayscale intensity at pixel $(u,v)$ after resizing, optional canvas placement, and normalization.
The encoder maps it to an angle
\[
    p_{u,v}=a+(b-a)x_{u,v}.
\]
For the PCS-QCNN experiments reported in the paper, $(a,b)=(0,\pi)$.
Thus $x_{u,v}$ is the direct input to the brightness map in Eq.~\eqref{eq:color_map_impl}.
For \(n_f>1\), the auxiliary feature qubits are initialized in \(\ket{0}^{\otimes(n_f-1)}\), so the local feature-register state is \(\ket{0}^{\otimes(n_f-1)}\otimes\ket{\phi_{u,v}}\) with the grayscale/color state on the least-significant feature qubit.
In the numerical protocol used here, the global $1/\sqrt{N_xN_y}$ prefactor from Eq.~\eqref{eq:frqi_like} is omitted during state initialization; measurement compensates for this by dividing the final marginal by the corresponding overall spatial normalization factor.

\subsubsection{Block parameterization in the benchmark model.}
In the benchmark model, each mode-wise feature-register block $U^{(\ell)}_{k_x,k_y}(m)$ is parameterized as a general $U(2^{n_f})$ unitary via the full Pauli-basis exponential (Eq.~\eqref{eq:pauli_exp_param} in the main text, Sec.~\ref{sec:benchmark_models}).
Concretely, this includes the all-identity Pauli string and yields $4^{n_f}$ real parameters per Fourier mode (and, for $\ell\ge 2$, per selected condition branch), and the full multiplexer $\mathcal{B}^{(\ell)}$ is applied as an explicit block-diagonal unitary in the Fourier basis.
The identity coefficient is retained because block-local phases become relative phases between Fourier modes or pooling branches in the full multiplexer.
For simulation, we treat this multiplexer as a single explicit block-diagonal unitary map.
Hardware compilation cost is discussed in Sec.~\ref{sec:supp_complexity}.

\subsubsection{Numerical realization.}
The reported training and evaluation use dense-tensor statevector simulation rather than a gate-by-gate hardware emulation.
The PCS-QCNN evolution, including Fourier transforms, multiplexers, marginal measurement, and finite-shot sampling, is evaluated directly on the full state tensor.
This choice isolates the architectural questions studied in the paper from compilation overhead and hardware noise.

\subsection{Brightness-range sensitivity sweep for encoder scaling}\label{sec:brightness_sweep}

Besides the main translated-MNIST architecture sweep and the full-MNIST size sweep, we ran a separate low-data sensitivity sweep for the encoder angle scale.
The article-wide convention $(a,b)=(0,\pi)$ was chosen a priori as a one-period real FRQI color-map convention and was not selected by optimizing this sweep.
The sweep is included as a post-hoc sanity check of sensitivity to the upper endpoint \(b\), not as a validation procedure for tuning the reported benchmark models.
No reported benchmark setting, architecture choice, or quoted result accuracy is selected using the accuracy values in this sweep.

The sweep uses translated-MNIST with $20$ training examples per class, while keeping the standard test split membership unchanged.
Both train and test images are resized to $28\times28$, placed on a $32\times32$ canvas, and translated by independently sampled integer offsets of at most $2$ pixels per axis.
For each independent run, the prepared images are fixed and reused for all epochs and evaluations.
The PCS-QCNN is evaluated with full readout for $300$ training epochs, train/test batch sizes $256/1600$, and $3$ seeds.
We evaluate PCS-QCNN architectures with feature-qubit counts $n_f\in\{1,2\}$ and quantum depths $Q\in\{1,2\}$.
For each architecture we keep the lower endpoint fixed at $a=0$ and vary the upper endpoint as $b=\beta\pi$ with
\[
\beta\in\left\{\frac{2}{25},\frac{4}{25},\dots,\frac{48}{25}\right\},
\]
that is, $24$ evenly spaced interior points spanning the open interval $(0,2\pi)$.
Accuracy is recorded at epochs $150$ and $300$, corresponding to the two panels in Fig.~\ref{fig:brightness_sweep}.
For each architecture and each value of $\beta$, the line shows the mean over seeds and the band shows the seedwise minimum--maximum range.

\begin{figure*}[t]
    \hfill
    \begin{overpic}[width=0.45\textwidth]{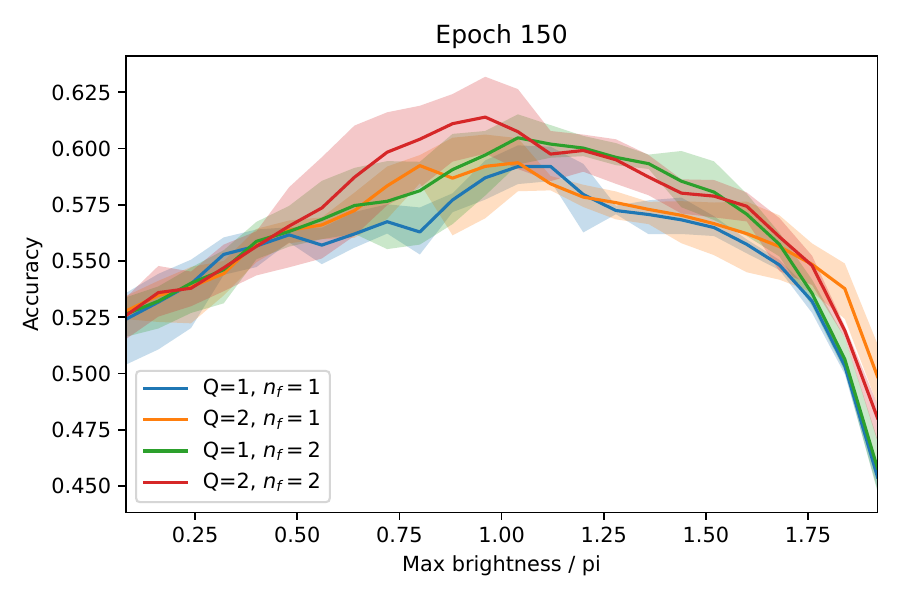}
    \put(0mm,0mm){(a)}
    \end{overpic}
    \hfill
    \begin{overpic}[width=0.45\textwidth]{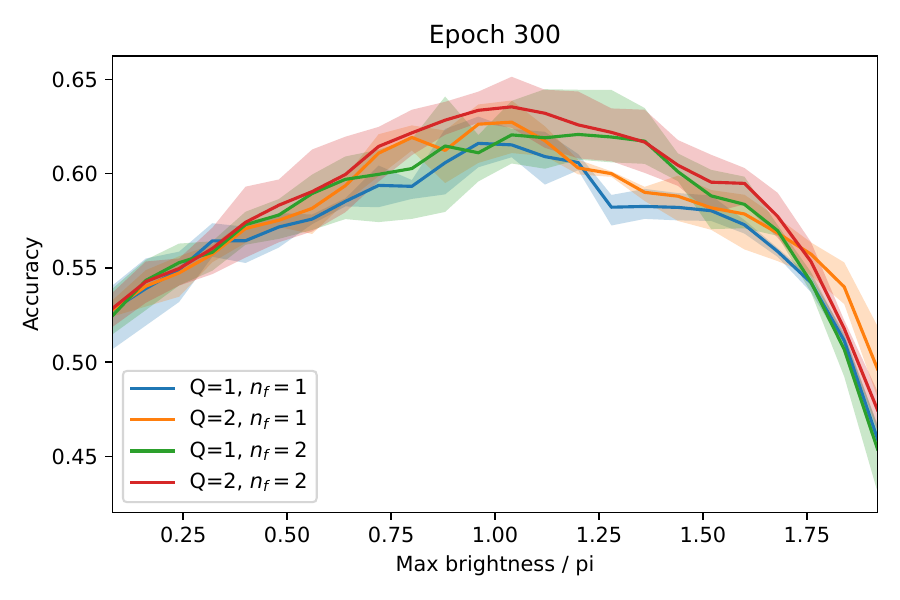}
    \put(0mm,0mm){(b)}
    \end{overpic}
    \hfill
    \caption{Brightness-range sensitivity sweep for the FRQI encoder scale. Panels show accuracy at epochs $150$ and $300$ as a function of the upper endpoint $b/\pi$ in the map $p=a+(b-a)x$ with fixed lower endpoint $a=0$. The article-wide benchmark convention $(a,b)=(0,\pi)$ was fixed independently of this sweep. Each curve corresponds to one architecture $(Q,n_f)\in\{1,2\}\times\{1,2\}$; lines show the mean over $3$ seeds and shaded bands show the seedwise minimum--maximum range.}
    \label{fig:brightness_sweep}
\end{figure*}

The sweep is a post-hoc sensitivity check around the fixed $[0,\pi]$ one-period convention for the local real FRQI color map; it does not imply that longer intervals are strictly redundant for the full coherent image state.
With
\[
\ket{\phi(p)}=\sin(p)\ket{0}+\cos(p)\ket{1},
\]
one has $\ket{\phi(p+\pi)}=-\ket{\phi(p)}$, so the isolated one-qubit ray is revisited after a shift by $\pi$.
In the FRQI image superposition, however, address-dependent sign changes are relative phases across address components and can affect subsequent QFT/PCS layers.
The brightness sweep is therefore reported as a robustness/sensitivity diagnostic only.
All reported benchmark families fix the encoder interval to $(a,b)=(0,\pi)$ independently of this sweep.

\subsection{Classical baseline architectures for benchmark comparisons}\label{sec:supp_classical_baselines}

The classical reference models used in Fig.~\ref{fig:small_ds_clas}(a) are specified here for reproducibility (text Secs.~\ref{sec:benchmark_baselines} and~\ref{sec:results_baselines}).

For the CNN baseline (Fig.~\ref{fig:supp_classical_baselines}(a)), the input is a $32\times32$ grayscale canvas produced by the shared preprocessing protocol.
In Fig.~\ref{fig:small_ds_clas}(a), this architecture is evaluated on the translated benchmark with $16\times16$ digits placed on a $32\times32$ canvas, and the same network is also used for the full-MNIST control without translations.
The network applies
\[
\mathrm{Conv}(1\!\to\!16,3\times3)\to\mathrm{ReLU}\to\mathrm{Conv}(16\!\to\!32,3\times3)\to\mathrm{ReLU}\to\mathrm{AvgPool}(2\times2)
\]
followed by
\[
\mathrm{Conv}(32\!\to\!48,3\times3)\to\mathrm{ReLU}\to\mathrm{Conv}(48\!\to\!64,3\times3)\to\mathrm{ReLU}\to\mathrm{AvgPool}(2\times2)
\]
and finally \(\mathrm{AdaptiveAvgPool}(1\times1)\to\mathrm{Dropout}(0.10)\to\mathrm{Linear}(64\to10)\).
This model has $47{,}034$ trainable parameters.

For the MLP baseline (Fig.~\ref{fig:supp_classical_baselines}(b)), the same $32\times32$ input is flattened to length $1024$ and passed through
\[
\mathrm{Linear}(1024\to29)\to\mathrm{GELU}\to\mathrm{Dropout}(0.10)\to\mathrm{Linear}(29\to116)\to\mathrm{GELU}\to\mathrm{Dropout}(0.10)
\]
followed by
\[
\mathrm{Linear}(116\to116)\to\mathrm{GELU}\to\mathrm{Dropout}(0.10)\to\mathrm{Linear}(116\to10).
\]
For $32\times32$ inputs this model has $47{,}947$ trainable parameters.

For comparison, the fixed translated-MNIST PCS-QCNN and the matched random-basis control in Fig.~\ref{fig:small_ds_clas}(b) both use $Q=3$, $n_f=2$, full readout on the same translated benchmark with $16\times16$ digits placed on a $32\times32$ canvas, and $39{,}434$ total trainable parameters ($36{,}864$ in the quantum core and $2{,}570$ in the classifier head).
This random-basis control is a non-PCS symmetry-breaking ablation of the Fourier construction, not an explicitly QCS-equivariant or MERA-style baseline.

\begin{figure*}[t]
    \centering
    \begin{overpic}[width=0.55\textwidth]{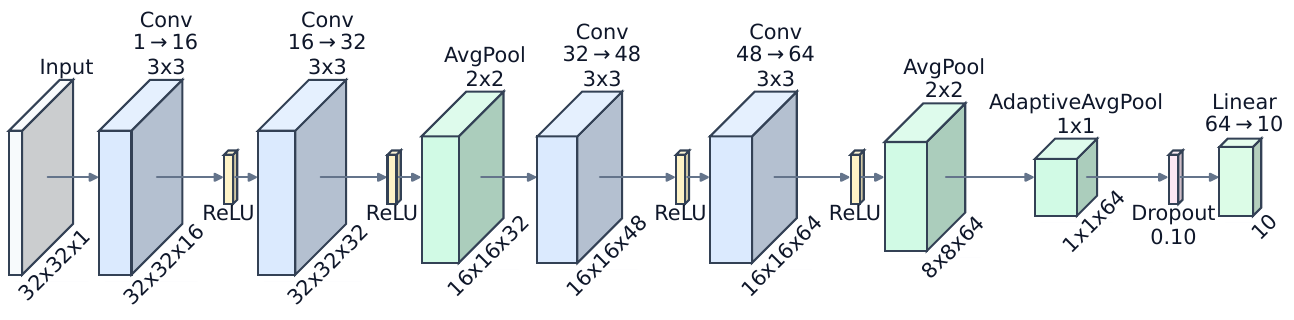}
    \put(0mm,21mm){(a)}
    \end{overpic}
    \begin{overpic}[width=0.44\textwidth]{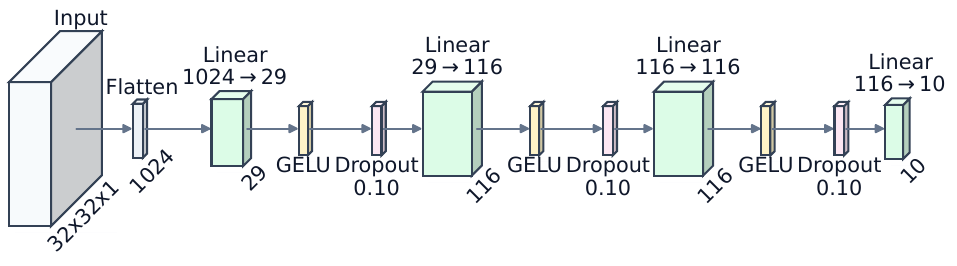}
    \put(0mm,21mm){(b)}
    \end{overpic}
    \caption{Classical reference architectures used in Fig.~\ref{fig:small_ds_clas}(a). The same $32\times32$ input models are used for the translated benchmark in the main text and for the full-MNIST control in Sec.~\ref{sec:supp_full_mnist_classical_baselines}. (a) Convolutional CNN baseline with $47{,}034$ trainable parameters. (b) Pure-dense MLP control with $47{,}947$ trainable parameters.}
    \label{fig:supp_classical_baselines}
\end{figure*}

\subsection{Full-MNIST classical control without translations}\label{sec:supp_full_mnist_classical_baselines}

The same CNN and MLP baselines were also evaluated on the full standard MNIST split ($60{,}000$ training images and $10{,}000$ test images) after resizing each digit directly to a $32\times32$ grayscale image, with no canvas translation.
The model architectures were kept identical to those specified in Sec.~\ref{sec:supp_classical_baselines}, so the comparison changes only the data regime and not the parameter budget.
Training followed the same fixed-baseline protocol as the text classical comparison except for the training horizon: three random seeds, $1000$ epochs, train/test accuracy tracked through time, and the same percentile-band summary convention as in Fig.~\ref{fig:small_ds_clas}(a).

Figure~\ref{fig:supp_full_mnist_classical_baselines} shows that on this non-translated full-MNIST task the convolutional and dense baselines remain much closer than on translated-MNIST, with final mean test accuracies of $99.15\%$ and $96.74\%$, respectively.
This control is the reason the main text uses the translated benchmark with $16\times16$ digits on a $32\times32$ canvas when discussing convolution-sensitive inductive bias: removing translations makes the classical gap much less informative.

\begin{figure*}[t]
    \centering
    \includegraphics[width=0.5\textwidth]{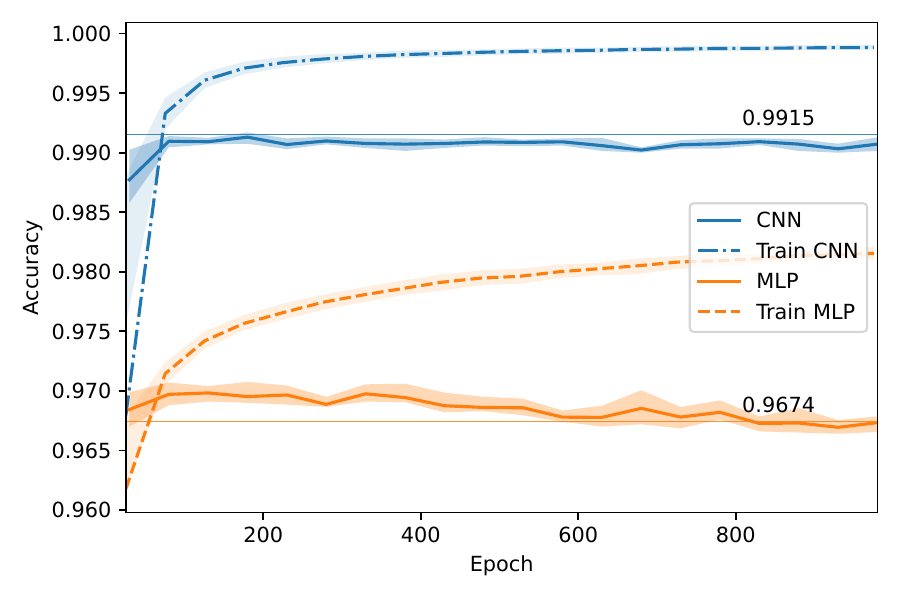}
    \caption{Classical CNN and MLP controls on the full standard MNIST split resized directly to $32\times32$ without translations. The architectures are exactly those specified in Fig.~\ref{fig:supp_classical_baselines}; only the data regime changes. Solid lines show mean test accuracy, train curves show mean train accuracy, and shaded bands show the $25$th--$75$th percentile range over $3$ seeds. The gap is much smaller than on the translated benchmark, which illustrates why non-translated MNIST alone is a less stringent benchmark for testing convolutional inductive bias.}
    \label{fig:supp_full_mnist_classical_baselines}
\end{figure*}

\subsection{Readout entropy diagnostic}\label{sec:supp_readout_entropy}

The main text reports initialization-time quantum-gradient diagnostics in Fig.~\ref{fig:gradient_norms_main}, distinguishing the norm of the empirical-loss gradient from the per-sample RMS gradient norm.
The readout entropy diagnostic below checks how finite-shot sampling changes the classifier input.
Figure~\ref{fig:readout_entropy} shows the Shannon entropy of the full readout distribution under finite-shot reevaluation for the representative direct-$16\times16$ reference run after epochs $100,200,\dots,2000$ and shot budgets $128$, $256$, $512$, $1024$, and $2048$; each curve reports the mean over test samples and the shaded band shows the interquartile range.
The lower entropy at smaller shot budgets is partly expected for a simple combinatorial reason: an $N$-shot histogram can have at most $N$ nonzero outcomes.
Within that limitation, the entropy increases moderately during training, indicating that sampled readouts spread over more outcomes instead of collapsing onto a small subset.
Thus, finite-shot readout can distort the effective classifier input even when the exact-probability model has a stable gradient signal.

\begin{figure}
    \centering
    \includegraphics[width=0.65\textwidth]{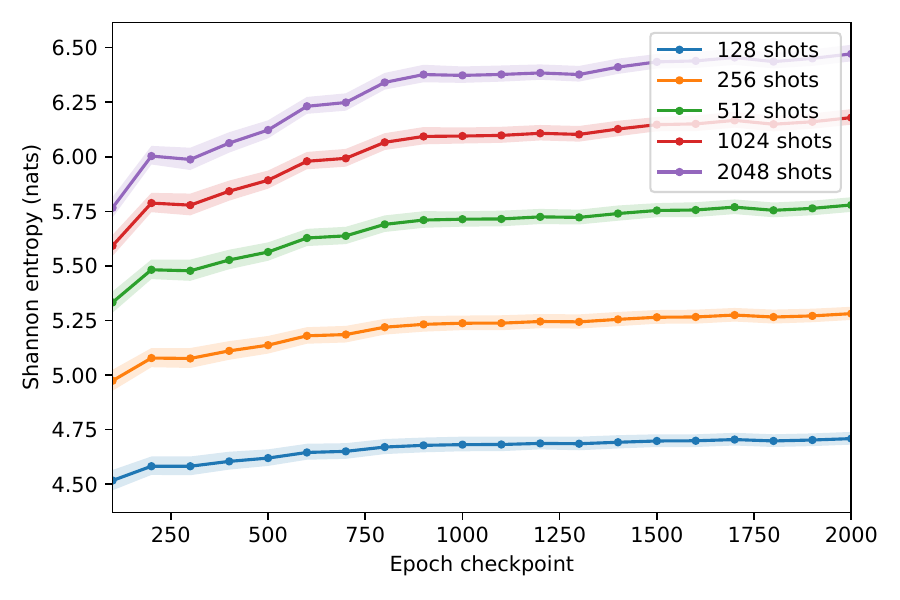}
    \caption{Mean Shannon entropy of the full readout distribution versus training epoch for the representative direct-$16\times16$ reference run after epochs $100,200,\dots,2000$ and finite-shot budgets $128$, $256$, $512$, $1024$, and $2048$; shaded bands show the interquartile range over test samples.}
    \label{fig:readout_entropy}
\end{figure}

\subsection{Geometric interpretation: local loss landscape in readout space}\label{sec:supp_results_landscape}

The loss-landscape diagnostic probes the classical head under finite-shot perturbations expected from the multinomial readout model.
Figure~\ref{fig:ellipse} uses the representative direct-$16\times16$ full-MNIST reference PCS-QCNN after $100$ and $800$ training epochs, shot budget \(N=128\), and an \(81\times81\) grid of coordinates \((\alpha,\beta)\in[-3,3]\times[-3,3]\).
For each test sample \(x\) with label \(y\), let \(p(x)\) denote the exact infinite-shot readout distribution produced by the trained quantum part, and let \(h\) denote the fixed trained classical head mapping readout vectors to class logits.
We define the sample-local multinomial covariance
\[
C_x=\frac{\diag(p(x))-p(x)p(x)^\top}{N},
\qquad N=128,
\]
take its two leading eigenpairs \((\lambda_1(x),e_1(x))\) and \((\lambda_2(x),e_2(x))\), and form the two local perturbation directions
\[
u_x=\sqrt{\lambda_1(x)}\,e_1(x),
\qquad
v_x=\sqrt{\lambda_2(x)}\,e_2(x).
\]
These directions define local sigma coordinates in readout-probability space for that sample.
At each grid point \((\alpha,\beta)\), we perturb the exact readout as
\[
q_x(\alpha,\beta)=p(x)+\alpha u_x+\beta v_x.
\]
No renormalization is applied.
A sample contributes to a grid cell only when every component of \(q_x(\alpha,\beta)\) is nonnegative and \(\sum_j q_{x,j}(\alpha,\beta)\le 1+10^{-5}\).
Writing \(\mathcal{V}_{\alpha,\beta}\subseteq\mathcal{T}\) for the subset of test samples satisfying this validity condition at grid point \((\alpha,\beta)\), the plotted quantities are
\[
L(\alpha,\beta)=\frac{1}{|\mathcal{V}_{\alpha,\beta}|}
\sum_{(x,y)\in\mathcal{V}_{\alpha,\beta}}
\ell\!\left(h\!\left(q_x(\alpha,\beta)\right),y\right),
\qquad
f_{\mathrm{valid}}(\alpha,\beta)=\frac{|\mathcal{V}_{\alpha,\beta}|}{|\mathcal{T}|},
\]
where \(\ell\) is the usual cross-entropy and \(\mathcal{T}\) is the full test set.
At \((\alpha,\beta)=(0,0)\), no perturbation is applied, so \(q_x(0,0)=p(x)\) and the heatmap value equals the exact infinite-shot mean test loss of that trained model.
White cells indicate grid points with \(f_{\mathrm{valid}}(\alpha,\beta)<0.10\), i.e.\ fewer than \(10\%\) of test samples yield valid perturbed readouts there.

\begin{figure*}[t]
    \centering
    \begin{overpic}[width=0.49\textwidth]{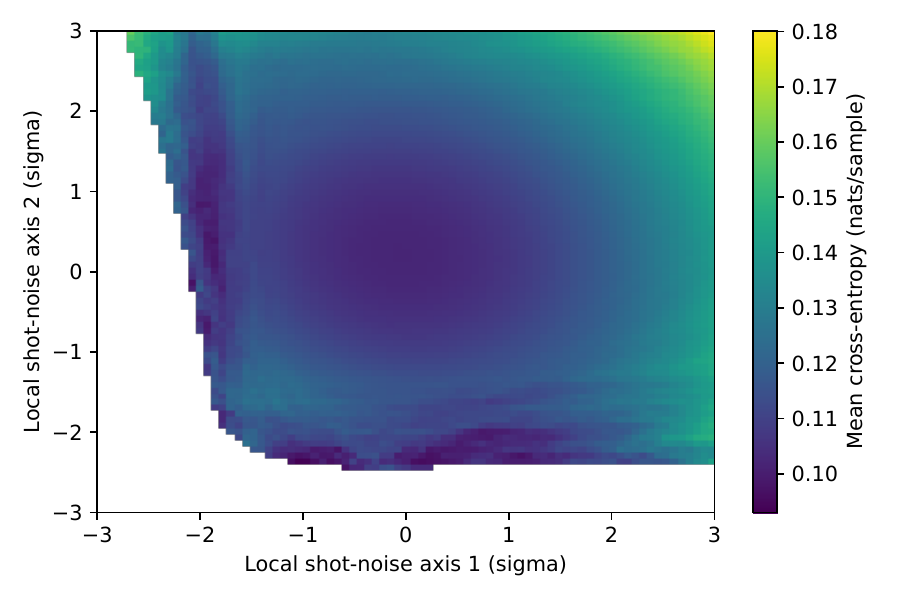}
    \put(3mm,3mm){(a)}
    \end{overpic}
    \hfill
    \begin{overpic}[width=0.49\textwidth]{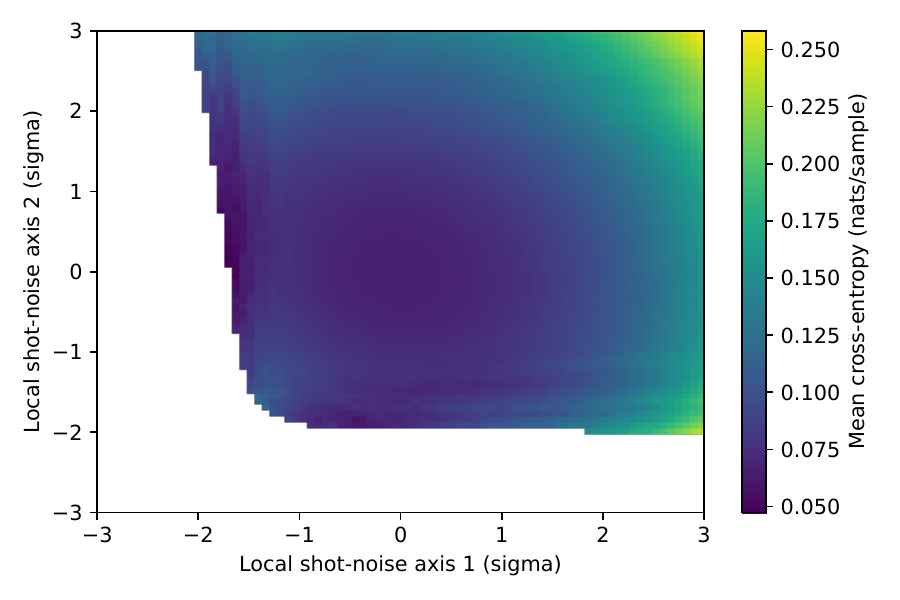}
    \put(3mm,3mm){(b)}
    \end{overpic}
    \caption{Readout-space loss landscape for the representative direct-$16\times16$ reference PCS-QCNN after (a) $100$ and (b) $800$ training epochs. Each test sample is perturbed in its own two-dimensional local shot-noise PCA basis, the resulting cross-entropy is averaged over valid test samples on an \(81\times81\) grid covering \([-3,3]^2\) in sigma units for \(N=128\) shots, and cells with valid fraction below \(0.10\) are shown in white. Each panel has its own colorbar.}
    \label{fig:ellipse}
\end{figure*}

For the model trained for $100$ epochs (Fig.~\ref{fig:ellipse}(a)), the loss rises relatively gradually away from the exact solution over most of the region that remains well defined.
After $800$ epochs (Fig.~\ref{fig:ellipse}(b)), the center value is lower, reflecting the lower exact infinite-shot test loss, but the local increase away from the center is steeper and higher losses are reached within only a few shot-noise standard deviations.
Because each panel is rendered with its own colorbar, this comparison should be based on the numeric scales and the local rise away from the center rather than on hue alone.
In this empirical sense, the $800$-epoch model is locally sharper in readout space along typical finite-shot perturbation directions than the $100$-epoch model.
This local sharpening is consistent with the observed need, in this run, for shot budgets on the order of \(10^3\) to reliably preserve the gains of longer exact-probability training.

\subsection{Qualitative error structure}\label{sec:supp_results_errors}

Figure~\ref{fig:decomp} reports the remaining classification errors for the representative final direct-$16\times16$ full-MNIST reference model after $2000$ training epochs (confusion matrix and examples of misclassified digits).
These errors are not dominated by a single failure mode; rather, they reflect the expected ambiguity between visually similar digits after aggressive downscaling to a $16\times16$ quantum input.

\begin{figure*}[t]
    \centering
    \begin{overpic}[width=0.49\textwidth]{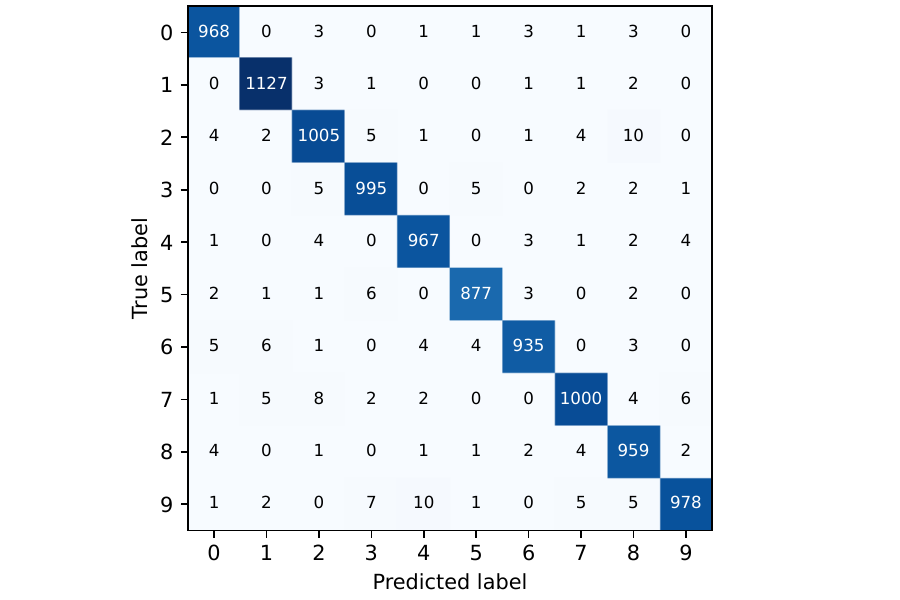}
    \put(3mm,5mm){(a)}
    \end{overpic}
    \begin{overpic}[width=0.49\textwidth]{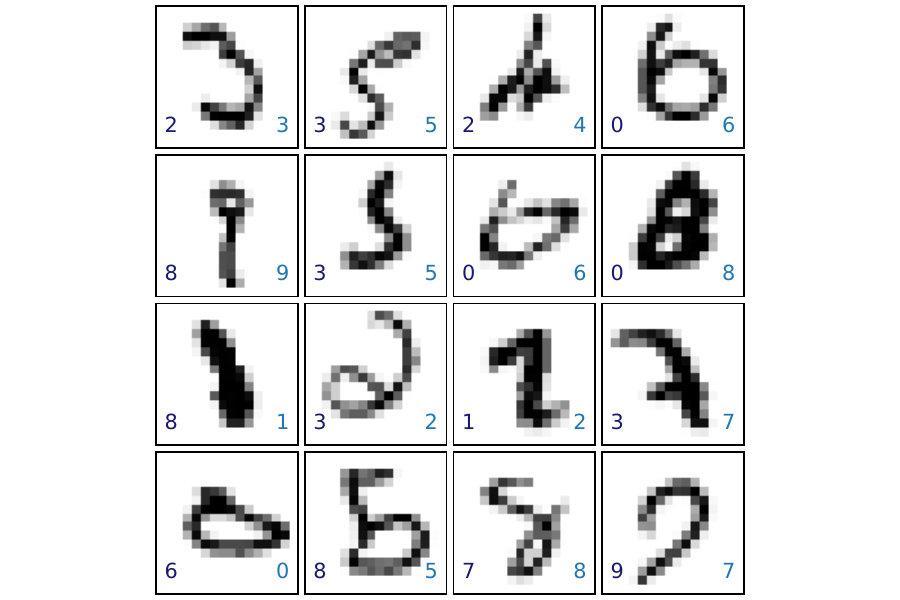}
    \put(0mm,5mm){(b)}
    \end{overpic}
    \caption{Qualitative error analysis for the representative final direct-$16\times16$ full-MNIST PCS-QCNN after $2000$ training epochs. (a) Confusion matrix (rows: true labels, columns: predicted labels); darker cells indicate higher counts.
    (b) Examples of misclassified MNIST digits with predicted labels shown in the lower-left corner and true labels in the lower-right corner of each image.}
    \label{fig:decomp}
\end{figure*}

\section{Classical convolution primer}\label{sec:supp_classical_primer}

The main text (Sec.~\ref{sec:classical_convolution}) states the key symmetry fact (convolution as commutation with translations) and the Fourier diagonalization template~\cite{bamieh2022discoveringtransformstutorialcirculant,Gray2006}.
A slightly more detailed classical discussion is reproduced here.

Beyond this algebraic characterization, two recent theory papers are directly relevant to our benchmarking logic.
Li \emph{et al.}~\cite{liWhyAreConvolutional2021} exhibit settings where convolutional architectures can be learned with fewer samples than parameter-matched fully connected models under standard symmetry-preserving optimization assumptions.
Lahoti \emph{et al.}~\cite{lahotiRoleLocalityWeight2023} further separate the roles of locality and weight sharing (CNNs vs. locally connected neural networks (LCNs) vs. fully connected neural networks (FCNs)), showing that these architectural priors can produce clear sample-complexity gaps on translation-related tasks.
The operational consequence is direct: exposing the effect of inductive bias requires a reduced-data regime in which this bias is visible.
Accordingly, the main text uses an explicit translated-MNIST benchmark (1000 training examples per class, with digits resized to $16\times16$, placed on a $32\times32$ canvas, and translated by at most $8$ pixels per axis; Sec.~\ref{sec:mnist_motivation}) in addition to the full-MNIST size sweep.

\subsection{Classical convolution and translation symmetry}\label{sec:supp_classical_convolution}

A classical feedforward network is a composition of layers of the form $z = f(Ax)$, where $A$ is a linear operator and $f$ is a non-linear activation.
In many spatial problems, the same local pattern is applied at every position: each output feature at location $k$ depends on inputs in the same relative way at any location.

In one dimension this leads to the familiar convolutional form
\[
 y_k = \sum_{n=0}^{N-1} a_{k-n}\, x_n,
\]
where indices are taken modulo $N$.
Define the circular shift operators $T_k$ by
\[
(T_k x)_j = x_{j-k}, \qquad k\in\mathbb{Z},
\]
with indices modulo $N$.
A matrix $A$ is circulant if and only if it commutes with all shifts,
\[
T_k A = A T_k \qquad \forall k.
\]
In that case, multiplication by $A$ is exactly a circular convolution, $y = a \star x$.
Enforcing this commutation relation at the architectural level is what weight sharing means in classical CNNs, and it reduces the number of degrees of freedom from $N^2$ (dense) to $N$ (circulant), producing the classical convolutional inductive bias.

In practical models one works with multiple channels.
Then each scalar coefficient becomes a linear map on the channel space and the input/output acquire a channel index:
\[
(A z)_{k,j} = \sum_{n=0}^{N-1}\sum_l a_{k-n,j,l}\, z_{n,l}.
\]
For images the spatial grid is two- (or three-) dimensional.
In two dimensions, with translation symmetry along each axis, the shifts act separately on each coordinate.
For an input $z_{n_1,n_2}$ we write
\[
(T_k \otimes I)\, z_{n_1,n_2} = z_{n_1-k,n_2}, \qquad
(I \otimes T_k)\, z_{n_1,n_2} = z_{n_1,n_2-k}.
\]
A two-dimensional convolution layer commutes with all translations $T_x\otimes T_y$ and has the standard form
\begin{equation}\label{eq:convolution}
(A z)_{x,y,j} =
\sum_{n=0}^{N-1}\sum_{m=0}^{M-1}\sum_l
a_{x-n,y-m,j,l}\, z_{n,m,l}.
\end{equation}

A key structural fact is that circulant (and block-circulant) operators are diagonal in the Fourier basis.
Let $F_N$ be the discrete Fourier transform,
\begin{equation}\label{eq:classic_fourier}
(F_N x)_k =
\frac{1}{\sqrt{N}}\sum_{n=0}^{N-1} x_n \omega^{kn},
\qquad \omega = e^{-2\pi i / N}.
\end{equation}
In two dimensions, $(F_N\otimes F_M)^\dagger A (F_N\otimes F_M)$ becomes block-diagonal, with blocks acting on channels:
\begin{equation}\label{eq:circulant_classic}
(\hat A \hat z)_{p_1,p_2,j}
=
\sum_l \hat a_{p_1,p_2,j,l}\, \hat z_{p_1,p_2,l}.
\end{equation}
This Fourier characterization is the classical template we will mirror in the quantum construction.

\section{Proofs of the QCS--PCS mismatch and Fourier-multiplexer theorem}\label{sec:supp_symmetry_proofs}

This section gives complete proofs of the two structural statements used in the main text:
Lemma~\ref{lem:mismatch}, which separates qubit cyclic shifts (QCS) from pixel cyclic shifts (PCS) under address encoding, and
Theorem~\ref{thm:pcs_fourier}, which identifies the unitary commutant of PCS with a Fourier-mode multiplexer.
We use the notation of text Sec.~\ref{chapter:arch}.

\begin{lemma}[Detailed form of the QCS--PCS mismatch]\label{lem:supp_qcs_pcs_mismatch}
For pixel-to-qubit encodings, a cyclic translation of pixels can be represented by a cyclic permutation of physical qubits, up to the convention for shift direction.
For address/amplitude encodings with a nontrivial index register, the pixel cyclic shift is the modular-addition operator
\[
T\ket{j}=\ket{j+1 \bmod N},
\qquad N=2^n,
\]
on the computational basis of the index register.
This operator is not a cyclic permutation of the physical qubits of that register.
Consequently, QCS-equivariance does not in general imply PCS-equivariance.
\end{lemma}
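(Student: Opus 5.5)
We treat the three claims of Lemma~\ref{lem:supp_qcs_pcs_mismatch} in order.

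\emph{Pixel-to-qubit case.} Here the image $(x_0,\dots,x_{L-1})$ is encoded as a product state $\bigotimes_j \ket{\chi(x_j)}$, with one qubit per pixel. The pixel shift $(Tx)_j=x_{j-1}$ sends this state to $\bigotimes_j\ket{\chi(x_{j-1})}$. That is exactly the factor permutation $S$ of \eqref{eq:def_S}, or $S^{-1}$ under the opposite direction convention. This part is a one-line verification.

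\emph{Address case: $T$ is not a qubit permutation.} We use an invariant that every qubit permutation preserves. Any permutation $\pi$ of tensor factors maps computational basis states to computational basis states and preserves Hamming weight. Hence it fixes $\ket{0\cdots0}$ and $\ket{1\cdots1}$. But $T\ket{0}=\ket{1}$, with $\ket{0}$ the all-zeros string, so $T$ does not fix $\ket{0\cdots0}$. This rules out every permutation of physical qubits, not just cyclic ones, whenever $n\ge1$. For $n=1$ the operator $T$ is $X$ and the register has a single qubit, so it has no nontrivial cyclic permutation at all. The argument applies uniformly, since a nontrivial index register means $N\ge2$.

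\emph{QCS-equivariance does not imply PCS-equivariance.} A single explicit counterexample suffices. Take $n=2$ on the index register alone (tensor with $I$ on the features). Here $S$ is the SWAP of the two qubits. Let
\[
U=\mathrm{diag}(1,1,1,-1)=CZ
\]
in the basis $\ket{00},\ket{01},\ket{10},\ket{11}$. This $U$ is symmetric under SWAP, so it commutes with $S$. Now $T$ cycles $\ket{0}\to\ket{1}\to\ket{2}\to\ket{3}\to\ket{0}$. Then $TUT^\dagger=\mathrm{diag}(-1,1,1,1)\neq U$, since conjugating a diagonal operator by $T$ cyclically shifts its entries. Therefore $[U,T]\neq0$.

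For general $n\ge2$, we take $U=\mathrm{diag}(1,\dots,1,-1)$, the multi-controlled $Z$. It is invariant under all qubit permutations, yet $TUT^\dagger$ moves the $-1$ entry to index $0$. This gives the ``in general'' statement for every register size. Optionally, we can remark via Theorem~\ref{thm:pcs_fourier} that PCS-equivariant diagonal operators must be circulant, and a non-constant diagonal operator is never circulant. That gives another view of the same failure.

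The only real care point is making ``not equal to any cyclic permutation'' precise. The Hamming-weight and fixed-point invariant handles it cleanly. The remaining steps are routine verifications.
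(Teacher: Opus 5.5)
Your proof is correct, and its first two parts match the paper's. The pixel-to-qubit case is the same direct identification of the shift with $S$ or $S^{-1}$. Your Hamming-weight invariant is the paper's observation that every qubit permutation fixes $\ket{0}^{\otimes n}$ while $T\ket{0}^{\otimes n}=\ket{1}$.

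You differ genuinely in the counterexample for the last claim.

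\textbf{The paper's witness.} It takes $U=S$ itself, which trivially commutes with $S$. It then shows $ST\ket{0}^{\otimes n}=S\ket{1}\neq\ket{1}=TS\ket{0}^{\otimes n}$, reusing the test vector from the second part. The case $n=1$ is handled separately with $U=Z$, since there $S=I$. The merit of this choice is economy: the witness is the QCS generator itself, so the message ``$S$ and $T$ do not commute'' is explicit.

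\textbf{Your witness.} You take the multi-controlled $Z$, $U=I-2\ket{1\cdots1}\bra{1\cdots1}$, and use the fact that conjugation by $T$ cyclically shifts a diagonal. This gives a slightly stronger conclusion. Your $U$ commutes with every permutation of the index qubits, not only the cyclic one. So it shows that even full permutation equivariance fails to enforce PCS under address encoding, which bears directly on the permutation-equivariant QCNN designs the paper cites.

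You also need not restrict this part to $n\ge2$. For $n=1$ your $U$ is $Z$, and $XZX=-Z\neq Z$, which is exactly the paper's one-qubit case. Your single family therefore covers every nontrivial index register uniformly.
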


\begin{proof}
In a pixel-to-qubit encoding, the computational tensor factors are indexed by pixels.
If a basis state is written as
\[
\ket{q_0}\ket{q_1}\cdots\ket{q_{N-1}},
\]
then a one-pixel cyclic translation sends the value stored at each pixel to the neighboring pixel.
As an operator on the encoded qubits this is precisely a cyclic permutation of tensor factors, either \(S\) or \(S^{-1}\) depending on whether the spatial shift is taken to the left or to the right.
Thus, in this encoding family, QCS can coincide with the pixel translation symmetry.

For address/amplitude encodings, the spatial coordinate is not a tensor-factor label but the binary value of the index register.
The pixel shift is therefore modular addition on the computational basis:
\[
T\ket{j}=\ket{j+1 \bmod N}.
\]
Every permutation of physical qubits leaves the all-zero basis state fixed:
\[
\Pi\ket{0}^{\otimes n}=\ket{0}^{\otimes n}
\]
for any qubit permutation \(\Pi\).
For \(N>1\), however,
\[
T\ket{0}^{\otimes n}=\ket{1}\neq \ket{0}^{\otimes n}.
\]
Here \(\ket{1}\) denotes the integer-one address state on the \(n\)-qubit index register.
Hence the address-encoded PCS operator \(T\) is not equal to any qubit permutation, and in particular it is not equal to a cyclic qubit shift.

It remains to show that commutation with a qubit cyclic shift need not imply commutation with \(T\).
For \(n\ge2\), take \(U=S\), where \(S\) is the cyclic permutation of the \(n\) index qubits.
Then \(U\) commutes with \(S\) trivially, but \(S\) and \(T\) do not commute:
\[
S T\ket{0}^{\otimes n}=S\ket{1}\neq \ket{1}=T S\ket{0}^{\otimes n}.
\]
For the one-qubit case, \(S=I\) while \(T=X\), so any unitary such as \(Z\) commutes with \(S\) but not with \(T\).
Thus QCS-equivariance alone does not enforce PCS-equivariance under address encoding.
\end{proof}

\begin{theorem}[Detailed form of the Fourier-multiplexer characterization]\label{thm:supp_pcs_fourier}
Let \(T\) be the cyclic shift on an index register \(\mathcal{H}_{\mathrm{idx}}\cong\mathbb{C}^N\), and let \(\mathcal{H}_{\mathrm{feat}}\) be an arbitrary finite-dimensional feature space.
A unitary \(U\) on \(\mathcal{H}_{\mathrm{idx}}\otimes\mathcal{H}_{\mathrm{feat}}\) commutes with \(T\otimes I\) if and only if
\[
U=(F_N^\dagger\otimes I)\left(\bigoplus_{k=0}^{N-1}U_k\right)(F_N\otimes I),
\]
where \(F_N\) is the \(N\)-point Fourier transform on the index register and each \(U_k\) is a unitary on \(\mathcal{H}_{\mathrm{feat}}\).
In two and three spatial dimensions, the same statement holds with tensor-product Fourier transforms and with blocks indexed by the corresponding tuple of Fourier modes.
\end{theorem}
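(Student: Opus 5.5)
The plan is to reduce everything to commutation with a diagonal operator that has distinct eigenvalues. First, using the convention for $F_N$ in Eq.~\eqref{eq:classic_fourier}, we compute $F_N T F_N^\dagger$. Since $T\ket{j}=\ket{j+1}$, the Fourier vectors $\ket{\chi_k}=F_N^\dagger\ket{k}=N^{-1/2}\sum_j \bar\omega^{kj}\ket{j}$ are eigenvectors of $T$, with eigenvalues $\omega^{k}$ (or $\bar\omega^{k}$, depending on sign conventions). Hence $F_N T F_N^\dagger=D\coloneqq\diag(\lambda_0,\dots,\lambda_{N-1})$, where $\lambda_k=\omega^{\pm k}$. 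The essential point is that the $N$ numbers $\lambda_k$ are pairwise distinct, because $\omega$ is a primitive $N$-th root of unity.

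Next, set $\tilde U\coloneqq (F_N\otimes I)U(F_N^\dagger\otimes I)$. Then $U$ commutes with $T\otimes I$ if and only if $\tilde U$ commutes with $D\otimes I$. For the ``only if'' direction, write $\tilde U=\sum_{k,k'}\ketbra{k}{k'}\otimes \tilde U_{kk'}$ with $\tilde U_{kk'}$ operators on $\mathcal{H}_{\mathrm{feat}}$. The commutation relation becomes $(\lambda_k-\lambda_{k'})\tilde U_{kk'}=0$, so $\tilde U_{kk'}=0$ for $k\neq k'$ by distinctness. Equivalently, in projector language, $P_k\coloneqq\ketbra{k}{k}\otimes I$ is a polynomial in $D\otimes I$ (Lagrange interpolation on distinct eigenvalues), so $\tilde U$ commutes with every $P_k$. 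Thus $\tilde U=\bigoplus_k U_k$ with $U_k=\tilde U_{kk}$, and unitarity of $\tilde U$ gives $U_k^\dagger U_k=U_kU_k^\dagger=I$ blockwise. Conjugating back yields the stated form. The ``if'' direction is immediate: block-diagonal operators commute with $D\otimes I$, and conjugation by $F_N\otimes I$ transports this back to commutation with $T\otimes I$.

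For two and three dimensions, we apply the same argument to the commuting family $T_x\otimes I$, $T_y\otimes I$ (and $T_z\otimes I$). The tensor-product transform $F_{N_x}\otimes F_{N_y}$ simultaneously diagonalizes them, giving $D_x\otimes I_y$ and $I_x\otimes D_y$. Commuting with the first forces the off-diagonal blocks to vanish whenever $k_x\neq k_x'$, and commuting with the second does the same whenever $k_y\neq k_y'$. Together these force block-diagonality over the pairs $(k_x,k_y)$. Equivalently, the joint eigenvalue tuples $(\lambda^x_{k_x},\lambda^y_{k_y})$ are distinct, so the joint spectral projectors are polynomials in the generators. Three dimensions are handled identically.

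The only genuine subtlety, and the step most easily mishandled, is the non-degeneracy of the spectrum. Without it, commutation would only give block structure over eigenspaces rather than over individual modes. The careful work is therefore to verify the eigenvalue computation with the paper's sign convention for $\omega$ and to note that primitivity gives distinctness. Equally, in multiple dimensions one must use the separate generators rather than a single combined shift, whose eigenvalues $\lambda^x_{k_x}\lambda^y_{k_y}$ can coincide.
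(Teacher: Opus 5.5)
Your proposal is correct and follows essentially the same route as the paper: you conjugate by $F_N\otimes I$ to turn $T$ into the non-degenerate diagonal operator $\sum_k\omega_N^k\ket{k}\bra{k}$, and you use Lagrange-interpolation projectors (your entrywise identity $(\lambda_k-\lambda_{k'})\tilde U_{kk'}=0$ is the same fact) to force the off-diagonal Fourier blocks to vanish. You then read off unitarity blockwise and handle the 2D/3D case through the joint spectral projectors of the separate commuting generators, exactly as the paper does.
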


\begin{proof}
We first prove the one-dimensional statement.
Let
\[
F_N\ket{j}=\frac{1}{\sqrt{N}}\sum_{k=0}^{N-1}\omega_N^{kj}\ket{k},
\qquad
\omega_N=e^{-2\pi i/N}.
\]
With the shift convention \(T\ket{j}=\ket{j+1\bmod N}\), direct substitution gives
\[
F_N T F_N^\dagger
=
\Omega
\coloneqq
\sum_{k=0}^{N-1}\omega_N^k\ket{k}\bra{k}.
\]
Define
\[
B\coloneqq (F_N\otimes I)U(F_N^\dagger\otimes I).
\]
Then
\[
[U,T\otimes I]=0
\quad\Longleftrightarrow\quad
[B,\Omega\otimes I]=0.
\]
The eigenvalues \(\omega_N^k\) of \(\Omega\) are all distinct.
Therefore each Fourier-mode projector \(P_k=\ket{k}\bra{k}\) is a polynomial in \(\Omega\), explicitly
\[
P_k
=
\prod_{r\ne k}
\frac{\Omega-\omega_N^r I}{\omega_N^k-\omega_N^r}.
\]
If \(B\) commutes with \(\Omega\otimes I\), it commutes with every \(P_k\otimes I\).
Hence \(B\) preserves each subspace \(\ket{k}\otimes\mathcal{H}_{\mathrm{feat}}\), and its off-diagonal Fourier-mode blocks vanish:
\[
(P_k\otimes I)B(P_l\otimes I)=0
\qquad (k\ne l).
\]
Thus
\[
B=\sum_{k=0}^{N-1}(P_k\otimes I)B(P_k\otimes I)
=
\bigoplus_{k=0}^{N-1}U_k,
\]
where
\[
U_k=(\bra{k}\otimes I)\,B\,(\ket{k}\otimes I)
\]
acts on \(\mathcal{H}_{\mathrm{feat}}\).
Because \(B\) is unitary and block diagonal, each block \(U_k\) is unitary.
Conjugating back by \(F_N\otimes I\) gives the claimed Fourier-multiplexer form.

Conversely, if
\[
B=\bigoplus_{k=0}^{N-1}U_k
\]
in the Fourier basis, then \(B\) is block diagonal with respect to the eigenspaces of \(\Omega\otimes I\).
It therefore commutes with \(\Omega\otimes I\), and the conjugated unitary
\[
U=(F_N^\dagger\otimes I)B(F_N\otimes I)
\]
commutes with \(T\otimes I\).

The multidimensional case is the simultaneous version of the same argument.
In two dimensions, set \(F=F_{N_x}\otimes F_{N_y}\) and
\[
B=(F\otimes I)U(F^\dagger\otimes I).
\]
Commutation with \(T_x\otimes I\) and \(T_y\otimes I\) is equivalent to commutation of \(B\) with
\[
\Omega_x\otimes I
\quad\text{and}\quad
\Omega_y\otimes I,
\]
where
\[
\Omega_x=\sum_{k_x,k_y}\omega_{N_x}^{k_x}
\ket{k_x,k_y}\bra{k_x,k_y},
\qquad
\Omega_y=\sum_{k_x,k_y}\omega_{N_y}^{k_y}
\ket{k_x,k_y}\bra{k_x,k_y}.
\]
Here \(\omega_M=e^{-2\pi i/M}\).
The joint spectral projector onto a fixed pair \((k_x,k_y)\) is the product of the one-dimensional spectral projectors for \(\Omega_x\) and \(\Omega_y\).
Therefore \(B\) commutes with every joint projector if and only if it is block diagonal over the joint Fourier modes:
\[
B=\bigoplus_{k_x=0}^{N_x-1}\bigoplus_{k_y=0}^{N_y-1}U_{k_x,k_y}.
\]
Unitarity of \(B\) again implies unitarity of all feature-space blocks.
The converse is immediate because such a block-diagonal \(B\) commutes with both diagonal translation operators.
The three-dimensional statement follows by adding the third commuting generator and the corresponding joint spectral projectors.
\end{proof}

The proof shows a slightly stronger algebraic fact: the same block-diagonal characterization holds for arbitrary linear maps in the commutant of the translation action.
Restricting to unitary maps simply restricts the Fourier-mode blocks to be unitary.

\section{MERA-QCNN templates and QCS equivariance}\label{sec:supp_mera_templates}

The main text (Sec.~\ref{sec:mera_qcnn}) uses MERA-QCNN as a representative example of a common QCNN design pattern that enforces cyclic-shift symmetry on physical qubits (QCS).
The longer discussion is included here.

\subsection{MERA-QCNN as a QCS-equivariant architecture}\label{sec:supp_mera_qcnn}

Figure~\ref{fig:general_conv}(a) shows a MERA-inspired QCNN, originally proposed in Ref.~\cite{Cong2019} and used in several subsequent works.
The architecture consists of repeated local unitaries arranged in a multiscale pattern, interleaved with pooling that reduces the number of active qubits, typically implemented by measurements followed by classically controlled operations (or, equivalently, by postponing measurement and replacing it by controlled gates due to standard circuit identities \cite{Nielsen_Chuang_2010}).

The essential structural ingredient is weight sharing along a line of qubits, but full one-site QCS equivariance also requires the layout itself to respect the cyclic action.
Many MERA/QCNN templates can be made QCS-equivariant by imposing cyclic weight sharing together with periodic boundary conditions, a periodic or symmetrized placement pattern containing the full cyclic orbit of each local block, and pooling/readout rules compatible with the same qubit permutation.
Without these additional choices, a standard brickwork layout generally commutes only with the subgroup preserved by its tiling and pooling pattern, rather than with the full cyclic shift $S$.
In our terminology, circuits satisfying this commutation property are QCS-QCNNs.
As Lemma~\ref{lem:mismatch} emphasizes, this is well matched to pixel-to-qubit encodings, but it does not in general enforce the translation action $T$ induced by address encoding, which is the setting of interest here.

\section{Multiplexer and Fourier-junction realization}\label{sec:supp_fourier_junction}

Two circuit-level points are useful for the implementation discussion: (i) how a block-diagonal multiplexer $\mathcal{B}$ can be decomposed into controlled operations, and (ii) how the inverse Fourier transform at the end of one layer and the forward Fourier transform at the start of the next layer collapse across pooling boundaries.

\begin{figure}
    \centering
    \includegraphics[width=0.5\textwidth]{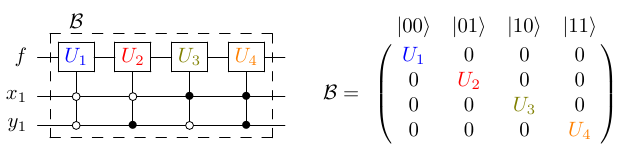}
    \caption{Decomposition of a block-diagonal multiplexer $\mathcal{B}$ into controlled operations $U_i$ on the feature qubits. The example uses a $2\times 2$ image (one qubit in each of the $x$ and $y$ registers) for clarity.}
    \label{fig:B_decomposition}
\end{figure}

\subsection{Fourier cancellation at the interface of PCS-QCNN layers}
\label{sec:supp_fourier_interface}

Figure~\ref{fig:2_layers} shows that consecutive PCS layers contain adjacent inverse and forward Fourier transforms on the
index registers.
In our architecture the pooling step is implemented after the inverse QFT of a layer: in the computational index basis
we split the fine index as $r=2q+s$, measure the parity bit $s$, discard it, and use the outcome to classically
condition the parameters of the next layer.
After previous pooling steps, the active index is already the corresponding coarse coordinate; the next pooling step again
measures the least significant bit of this current active computational index.
Equivalently, successive pooling steps remove the original computational-index bits from least to most significant order.
The next-layer QFT $F_{N/2}$ acts only on the surviving coarse index $q$.
In the reduced Fourier-domain representation this spatial-parity pooling appears as an aliasing operation on Fourier
modes: the same coarse Fourier label $q$ receives contributions from the pair $k=q$ and $k=q+N/2$.
As a result, the only nontrivial Fourier junction between two consecutive PCS layers reduces to a simple isometry on the
remaining index qubits, which can be implemented without explicit forward or inverse QFT blocks.

\begin{lemma}[Reduction of the Fourier junction]\label{lem:fourier_junction_reduction}
Let $N=2^n$ and let the pooling projection in the computational index basis be
\[
P_b \;:=\; \sum_{q=0}^{N/2-1}\ket{q}\bra{2q+b}
\;:\; \mathbb{C}^{N}\to\mathbb{C}^{N/2},
\]
where $b\in\{0,1\}$ is the measured parity bit in the split $r=2q+b$.
On the Fourier-input side, introduce the alias-ordering isomorphism
\[
A_N:\mathbb{C}^{N/2}\otimes\mathbb{C}^{2}\to\mathbb{C}^{N},
\qquad
A_N\ket{q}\ket{\sigma}=\ket{q+\sigma N/2},
\]
so that $\sigma$ distinguishes the two Fourier modes that alias to the same coarse label $q$.
For $b\in\{0,1\}$ define the postselected junction map
\begin{equation}
\label{eq:Kb_def}
K_b \;:=\; F_{N/2}\,P_b\,F_N^\dagger \;:\; \mathbb{C}^{N}\to \mathbb{C}^{N/2}.
\end{equation}
Then
\begin{equation}
\label{eq:Kb_closed}
K_b A_N \;=\; G^{\,b}\,(I\otimes \bra{b}H),
\end{equation}
where $H$ is the Hadamard acting on the Fourier alias selector $\sigma$, $G^{\,b}$ means $I$ for $b=0$ and $G$ for $b=1$, and $G$ is the diagonal phase-gradient operator on the remaining $(n-1)$-qubit register defined by
\begin{equation}
\label{eq:G_def}
G\ket{q} \;=\; e^{+2\pi i\, q/N}\ket{q}.
\end{equation}
Equivalently, for $\ket{\psi}=\sum_{k=0}^{N-1}\alpha_k\ket{k}$ one has
\begin{equation}
\label{eq:Kb_action}
K_b\ket{\psi}
=
\frac{1}{\sqrt{2}}
\sum_{q=0}^{N/2-1}
e^{+2\pi i\, q b/N}\big(\alpha_q+(-1)^b\alpha_{q+N/2}\big)\ket{q}.
\end{equation}
\end{lemma}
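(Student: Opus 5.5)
The plan is to compute $K_b$ directly on Fourier basis vectors $\ket{k}$, using the convention $F_N\ket{j}=N^{-1/2}\sum_k\omega_N^{kj}\ket{k}$ with $\omega_N=e^{-2\pi i/N}$ fixed in the proof of Theorem~\ref{thm:supp_pcs_fourier}. The action formula \eqref{eq:Kb_action} and the factorized form \eqref{eq:Kb_closed} then both follow by linearity.

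\emph{Step 1 (inverse transform).} For a mode $k$ we have
\[
F_N^\dagger\ket{k}=N^{-1/2}\sum_{j}e^{+2\pi i kj/N}\ket{j}.
\]

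\emph{Step 2 (pooling).} Applying $P_b$ keeps only $j=2q'+b$ and relabels it $\ket{q'}$. This gives
\[
P_bF_N^\dagger\ket{k}=N^{-1/2}\,e^{2\pi i kb/N}\sum_{q'}e^{2\pi i\,k q'/(N/2)}\ket{q'}.
\]

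\emph{Step 3 (coarse transform).} Applying $F_{N/2}$ produces the geometric sum
\[
\sum_{q'=0}^{N/2-1}e^{2\pi i (k-q)q'/(N/2)}=\tfrac N2\,\delta_{k\equiv q \pmod{N/2}} .
\]
Collecting the prefactors $N^{-1/2}(N/2)^{-1/2}(N/2)=2^{-1/2}$ yields
\[
K_b\ket{k}=\tfrac{1}{\sqrt2}\,e^{2\pi i kb/N}\ket{k \bmod N/2}.
\]

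\emph{Step 4 (alias ordering).} Write $k=q+\sigma N/2$ with $q<N/2$. The phase then factors as
\[
e^{2\pi i kb/N}=e^{2\pi i qb/N}(-1)^{\sigma b}.
\]
Summing over $k$ with amplitudes $\alpha_k$ gives exactly \eqref{eq:Kb_action}: for each $q$ the two alias amplitudes $\alpha_q$ and $\alpha_{q+N/2}$ combine with relative sign $(-1)^b$, and there is a common phase $e^{2\pi i qb/N}$.

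\emph{Step 5 (operator identity).} To read off \eqref{eq:Kb_closed}, note that $\bra{b}H\ket{\sigma}=(-1)^{b\sigma}/\sqrt2$. Hence $(I\otimes\bra{b}H)\ket{q}\ket{\sigma}=\tfrac{1}{\sqrt2}(-1)^{b\sigma}\ket{q}$. The remaining factor $e^{2\pi i qb/N}$ is $1$ for $b=0$ and is the eigenvalue of $G$ on $\ket{q}$ for $b=1$, i.e.\ it is $G^b$. So $K_bA_N\ket{q}\ket{\sigma}=G^b(I\otimes\bra{b}H)\ket{q}\ket{\sigma}$ on a basis, which proves the identity.

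The computation is routine. The only step where an error could easily slip in is sign and phase bookkeeping: the direction of the exponent in $F_N^\dagger$ versus $F_{N/2}$, whether the surviving phase is $e^{+2\pi i qb/N}$ (so that $G$ has a plus sign), and the normalization $2^{-1/2}$. I would therefore double-check the case $N=2$ explicitly. There $F_1=I$ and $F_2^\dagger=H$, so $K_b=\bra{b}H$, consistent with $G=I$ on the trivial register.
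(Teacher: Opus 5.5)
Your proposal is correct and follows essentially the same route as the paper. You evaluate $\bra{p}K_b\ket{k}$ directly via the geometric sum over the coarse index to get $K_b\ket{k}=2^{-1/2}e^{2\pi i kb/N}\ket{k \bmod N/2}$, then use the alias split $k=q+\sigma N/2$ to separate the sign $(-1)^{b\sigma}$ (the $\bra{b}H$ factor) from the phase $e^{2\pi i qb/N}$ (the $G^b$ factor), with your explicit $\bra{b}H\ket{\sigma}=(-1)^{b\sigma}/\sqrt{2}$ and the $N=2$ check being small additions the paper leaves implicit.
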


\begin{proof}
Let $\omega=e^{-2\pi i/N}$, so that
\[
F_N^\dagger\ket{k}
=
\frac{1}{\sqrt{N}}\sum_{r=0}^{N-1}\omega^{-kr}\ket{r}.
\]
With the decomposition $r=2q+s$ (so $\ket{r}\equiv\ket{q}\ket{s}$) we have
\[
P_bF_N^\dagger\ket{k}
=
\frac{1}{\sqrt{N}}\sum_{q=0}^{N/2-1}\omega^{-k(2q+b)}\ket{q}
=
\frac{\omega^{-kb}}{\sqrt{N}}\sum_{q=0}^{N/2-1}\omega^{-2kq}\ket{q}.
\]
Applying $F_{N/2}$ and using $\omega_{N/2}=e^{-2\pi i/(N/2)}=\omega^2$ gives, for each output basis state $\ket{p}$,
\[
\bra{p}K_b\ket{k}
=
\frac{1}{\sqrt{N(N/2)}}\,\omega^{-kb}\sum_{q=0}^{N/2-1}\omega^{2(p-k)q}.
\]
The inner sum is a geometric series.
Since $\omega^{2(p-k)}$ is an $(N/2)$-th root of unity, it evaluates to
\[
\sum_{q=0}^{N/2-1}\omega^{2(p-k)q}
=
\frac{N}{2}\,\delta_{p,\;k \bmod (N/2)}.
\]
Hence
\[
\bra{p}K_b\ket{k}
=
\frac{1}{\sqrt{2}}\,\omega^{-kb}\,\delta_{p,\;k \bmod (N/2)}.
\]
Writing $k=q+\sigma\frac{N}{2}$ with $\sigma\in\{0,1\}$ and $q\in\{0,\dots,N/2-1\}$, we obtain
\[
K_b\ket{q}
=
\frac{1}{\sqrt{2}}\,\omega^{-qb}\ket{q},\quad
K_b\ket{q+N/2}
=
\frac{1}{\sqrt{2}}\,\omega^{-(q+N/2)b}\ket{q}
=
\frac{(-1)^b}{\sqrt{2}}\,\omega^{-qb}\ket{q}.
\]
By linearity, for $\ket{\psi}=\sum_k\alpha_k\ket{k}$ this yields \eqref{eq:Kb_action}, since $\omega^{-qb}=e^{+2\pi i qb/N}$.
Finally, \eqref{eq:Kb_closed} follows in the alias ordering $A_N\ket{q}\ket{\sigma}=\ket{q+\sigma N/2}$:
$(I\otimes\bra{b}H)$ produces the factor
$(\alpha_q+(-1)^b\alpha_{q+N/2})/\sqrt{2}$ on each $\ket{q}$, while $G^b$ contributes the phase
$e^{+2\pi i qb/N}$ from \eqref{eq:G_def}.
\end{proof}

\begin{lemma}[Circuit implementation of the reduced junction]\label{lem:fourier_junction_circuit}
Let $q=\sum_{j=0}^{n-2} q_j 2^j$ be the binary expansion of the $(n-1)$-qubit index.
Then the phase-gradient operator \eqref{eq:G_def} can be implemented (up to a global phase) as a tensor product of single-qubit $Z$-rotations:
\begin{equation}
\label{eq:G_factor}
G \;\equiv\; \bigotimes_{j=0}^{n-2} R_z^{(j)}\!
\left(+\frac{\pi}{2^{\,n-1-j}}\right),
\end{equation}
where $R_z^{(j)}(\cdot)$ acts on qubit $j$ of the $(n-1)$-qubit index register.
Consequently, the reduced junction is implemented in the Fourier alias ordering by:
(i) apply $H$ to the alias selector $\sigma$ in $k=q+\sigma N/2$, (ii) measure it obtaining the same branch label $b$, and
(iii) conditionally on $b=1$ apply the gates from \eqref{eq:G_factor} to the remaining $(n-1)$ index qubits
(do nothing if $b=0$), i.e., implement $G^b$ with $G^0=I$ and $G^1=G$.
\end{lemma}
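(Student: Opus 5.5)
The plan is to verify the factorization \eqref{eq:G_factor} directly on computational basis states, and then read off the three-step circuit from the closed form \eqref{eq:Kb_closed} of Lemma~\ref{lem:fourier_junction_reduction}.

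For the phase gradient, substitute $q=\sum_{j=0}^{n-2}q_j2^j$ into \eqref{eq:G_def}. This gives
\[
e^{2\pi i q/N}=\prod_{j=0}^{n-2}e^{2\pi i q_j 2^j/2^n}=\prod_{j=0}^{n-2}e^{i\pi q_j/2^{n-1-j}}.
\]
So $G$ is a product of commuting single-qubit diagonal gates $\diag(1,e^{i\theta_j})$ with $\theta_j=\pi/2^{n-1-j}$. Each such phase gate equals $e^{i\theta_j/2}R_z(\theta_j)$, using the convention $R_z(\theta)=\diag(e^{-i\theta/2},e^{i\theta/2})$. Taking the tensor product over $j$ therefore reproduces \eqref{eq:G_factor}, up to the global phase $\exp\bigl(\tfrac{i}{2}\sum_j\theta_j\bigr)$.

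The one point needing care is that this phase only multiplies the $b=1$ branch. It is then a branch-dependent phase rather than a truly global one. However, the branch is fixed by a measurement outcome, so the phase acts globally on that postselected branch state. It also cancels in all readout probabilities and in the normalized branch state. I will state this explicitly. Alternatively, I can simply implement $G$ with phase gates, which needs no phase qualification.

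For the circuit, Lemma~\ref{lem:fourier_junction_reduction} gives $K_bA_N=G^b(I\otimes\bra{b}H)$. Here $(I\otimes\bra{b}H)$ is exactly the (unnormalized) postselected map obtained by applying $H$ to the alias selector $\sigma$ and then measuring it in the computational basis with outcome $b$.

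It remains to check that the outcome of this selector measurement has the same probability as the parity-bit outcome in the original IQFT, pooling, and QFT interface. This holds because $F_{N/2}$ and $A_N$ are unitary. Hence
\[
\|K_bA_N\ket{\phi}\|^2=\|P_bF_N^\dagger A_N\ket{\phi}\|^2,
\]
which is the Born probability of parity $b$. So the branch label $b$ and the post-measurement states coincide in the two descriptions.

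Finally, applying $G^b$ conditioned on $b$ completes steps (i)–(iii). This conditional correction is a classically controlled tensor product of $R_z$ gates.
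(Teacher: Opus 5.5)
Your proposal is correct and follows essentially the same route as the paper. You expand $e^{2\pi i q/N}$ over the bits $q_j$ and match each factor to $R_z(\pi/2^{\,n-1-j})$ up to a $q$-independent phase; you go through $\diag(1,e^{i\theta_j})=e^{i\theta_j/2}R_z(\theta_j)$, whereas the paper computes the $R_z$ action directly. Your extra remarks fill in what the paper leaves implicit for the ``consequently'' part: the leftover phase multiplies only the $b=1$ branch but is harmless because $b$ is a measured record, and steps (i)--(iii) follow from the identity $K_bA_N=G^b(I\otimes\bra{b}H)$ of Lemma~\ref{lem:fourier_junction_reduction} with matching Born probabilities.
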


\begin{proof}
For $N=2^n$ and $q=\sum_{j=0}^{n-2} q_j2^j$,
\[
e^{+2\pi i q/N}
=
\prod_{j=0}^{n-2} e^{+i\pi q_j/2^{\,n-1-j}}.
\]
The single-qubit gate $R_z(\theta)=\exp(-i\theta Z/2)$ is diagonal and acts as
$\ket{0}\mapsto e^{-i\theta/2}\ket{0}$, $\ket{1}\mapsto e^{+i\theta/2}\ket{1}$.
Therefore, on a computational basis state $\ket{q}=\bigotimes_{j=0}^{n-2}\ket{q_j}$,
\[
\left(\bigotimes_{j=0}^{n-2} R_z(\theta_j)\right)\ket{q}
=
\left(\bigotimes_{j=0}^{n-2} e^{-i\theta_j/2}\right)
\left(\prod_{j=0}^{n-2} e^{+i\theta_j q_j}\right)\ket{q}.
\]
Choosing $\theta_j=+\pi/2^{\,n-1-j}$ yields the $q$-dependent factor
$\prod_{j} e^{+i\pi q_j/2^{\,n-1-j}}=e^{+2\pi i q/N}$.
The remaining prefactor is independent of $q$, hence it is a global phase.
\end{proof}

The same construction extends directly to two dimensions.
In the $2$D architecture (registers $\mathrm{Reg}\,x$ and $\mathrm{Reg}\,y$), spatial-parity pooling on the two
registers produces outcomes $(b_x,b_y)$, and the reduced Fourier junction factorizes as
$K_{b_x}^{(x)}\otimes K_{b_y}^{(y)}$ with $N\mapsto N_x$ and $N\mapsto N_y$ in
Lemmas~\ref{lem:fourier_junction_reduction}-\ref{lem:fourier_junction_circuit}.
Thus the interface of two consecutive layers can be implemented by two Hadamards, two measurements, and two independent
conditional phase-gradient blocks on the remaining index qubits, all in the corresponding Fourier alias orderings.

\begin{figure}[t]
\centering
\makebox[\columnwidth][c]{%
\hfill
\begin{minipage}[c]{0.22\columnwidth}
\centering
\includegraphics[width=\linewidth]{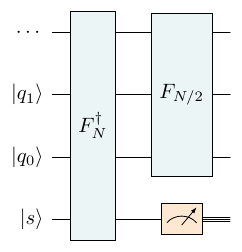}
\end{minipage}%
\hspace{10mm}%
\begin{minipage}[c]{0.03\columnwidth}
\centering
\Large{$=$}
\end{minipage}\hspace{10mm}%
\begin{minipage}[c]{0.22\columnwidth}
\centering
\includegraphics[width=\linewidth]{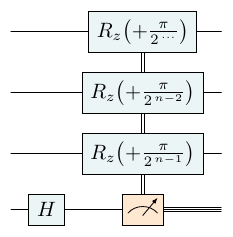}
\end{minipage}
\hfill}
\caption{Reduction of the Fourier junction between two adjacent PCS-QCNN layers.
Left: the original junction $F_{N/2}\,\mathcal{M}_s\,F_N^{\dagger}$, where $\mathcal{M}_s$ measures the spatial parity bit in the split $r=2q+s$.
Right: the reduced implementation from Lemma~\ref{lem:fourier_junction_circuit}: in the Fourier alias split $k=q+\sigma N/2$, apply $H$ and measure $\sigma$,
then conditionally (for outcome $b=1$) apply the single-qubit $Z$-rotations on the $q$-register.}
\label{fig:fourier_junction_reduction}
\end{figure}

The hybrid PCS-QCNN has the following canonical form after Fourier cancellation, in notation consistent with text Sec.~\ref{subsec:pcs_bp_model}.
Lemmas~\ref{lem:fourier_junction_reduction}-\ref{lem:fourier_junction_circuit} imply that the inverse Fourier transform at the end of one layer and the forward Fourier transform at the start of the next layer
can be replaced by a fixed junction map, so trainable parameters remain only inside
layer multiplexer blocks.

\smallskip
We describe the $2$D specialization ($d=2$) with registers $\mathrm{Reg}\,x$ and $\mathrm{Reg}\,y$.
Let $n_{\mathrm{idx}}$ be the number of index qubits per axis before the first layer, and let $Q$ be the number of quantum
layers.
At layer $\ell\in\{1,\dots,Q\}$, the number of active index qubits per axis is
$n_\ell=n_{\mathrm{idx}}-\ell+1$.
For $\ell<Q$, pooling measures one qubit per axis and produces
$m_{\ell}=(b_x^{(\ell)},b_y^{(\ell)})\in\{0,1\}^2$.

\smallskip
After cancellation of intermediate Fourier transforms, the only parameterized operations are the multiplexer blocks
\[
\mathcal{B}^{(\ell)}(m_{\ell-1})
=
\bigoplus_{k_x=0}^{2^{n_\ell}-1}
\bigoplus_{k_y=0}^{2^{n_\ell}-1}
V^{(\ell)}_{k_x,k_y}(m_{\ell-1}),
\]
with $m_0=0$ and $V^{(\ell)}_{k_x,k_y}(m_{\ell-1})\in U(D_f)$ acting on the feature register.
The inter-layer junction is fixed and parameter-free: it is implemented by local Hadamards, measurement of pooling qubits,
and conditional phase-gradient gates (Lemmas~\ref{lem:fourier_junction_reduction}-\ref{lem:fourier_junction_circuit}).

\smallskip
Equivalently, the same architecture can be written in layer form as
\[
U^{(\ell)}(m_{\ell-1})=
(\mathcal{F}^{(\ell)\dagger}\otimes I)\,
\mathcal{B}^{(\ell)}(m_{\ell-1})\,
(\mathcal{F}^{(\ell)}\otimes I),
\]
which is exactly Eq.~\eqref{eq:U_layer_def} in text Sec.~\ref{subsec:pcs_bp_model}.
The full measured-and-controlled process defines the output distribution
$p_{\Theta}(z\mid x)$ from text Eq.~\eqref{eq:p_def}.

\smallskip
The gradient-accounting discussion below uses the following fact:
all trainable parameters are inside $\mathcal{B}^{(\ell)}$, while inter-layer junction maps are fixed.
This separation is used in Sec.~\ref{sec:supp_trainability_full}.

\section{Idealized benchmark model and resource scaling}\label{sec:supp_resource_scaling}

The resource scaling summarized in text Sec.~\ref{sec:complexity} should be read together with the main modeling caveat: the reported PCS-QCNN is an idealized symmetry benchmark model.
It uses direct state initialization, dense-tensor statevector evolution, and explicit block-diagonal application of fully general Fourier multiplexers.
The following discussion gives qubit, depth, and shot-cost accounting for moving such a model toward hardware, where state preparation and multiplexer compilation become central costs.

\subsection{Complexity analysis}\label{sec:supp_complexity}

In the NISQ regime it is important to keep both qubit requirements and circuit depth in view.
We begin with the qubit count.
For an $N\times N$ image with $N=2^{n_{\mathrm{idx}}}$, address encoding uses two index registers with $n_{\mathrm{idx}}=\log_2 N$ qubits each.
In addition, we use a feature register of size $n_f$ qubits, with the grayscale/color qubit included in $n_f$ by convention.
Thus
\begin{equation}
n_{\mathrm{tot}}=2n_{\mathrm{idx}}+n_f = 2\log_2 N + n_f.
\end{equation}
In the reported $16\times16$ benchmark, $n_{\mathrm{idx}}=4$ and $n_f\in\{1,2,3\}$, hence $n_{\mathrm{tot}}\in\{9,10,11\}$.

The key point for circuit depth is that trainable and compiled operations are not the same thing.
After cancellation of intermediate Fourier transforms (Sec.~\ref{sec:supp_fourier_interface}), the trainable operations are precisely the multiplexer blocks $\mathcal{B}^{(\ell)}$.
However, on hardware one still has to implement the Fourier transforms and compile the multiplexers into one- and two-qubit gates.
The Fourier-transform contribution is modest at the reported image sizes.
For an $n$-qubit QFT, the standard circuit uses $n$ Hadamards and $n(n-1)/2$ controlled-phase gates (plus optional swaps), hence gate count $O(n^2)$ and depth $O(n^2)$ without additional parallelization~\cite{Nielsen_Chuang_2010,Coppersmith2002,QFT}.
In PCS-QCNN, QFT acts separately on the $x$ and $y$ registers, so at $n_x$ and $n_y$ qubits per axis this yields a per-layer overhead on the order of $O(n_x^2+n_y^2)$ two-qubit entangling gates.
For our $16\times16$ case ($n_x=n_y=4$), this cost is small compared to the multiplexer compilation below.

The multiplexer synthesis is the main depth bottleneck.
In the Fourier basis, a PCS layer applies a mode-wise block-diagonal unitary of the form
\begin{equation}
\mathcal{B}^{(\ell)}(m)
=
\sum_{k_x,k_y} \ketbra{k_x,k_y}{k_x,k_y}\otimes U^{(\ell)}_{k_x,k_y}(m),
\end{equation}
where each block $U^{(\ell)}_{k_x,k_y}(m)$ acts on the $n_f$ feature qubits.
This is a multi-target quantum multiplexer: a uniformly controlled $U(2^{n_f})$ gate with $n_c=n_x+n_y$ control qubits.
General decompositions of uniformly controlled gates scale exponentially in $n_c$ in the worst case: one needs $O(2^{n_c})$ controlled blocks, and if each $U^{(\ell)}_{k_x,k_y}(m)$ is treated as a generic $U(2^{n_f})$ unitary, the total two-qubit gate count is $O(2^{n_c}\,4^{n_f})$ up to architecture-dependent constants~\cite{Bergholm_2005,Shende_multiplexer}.
In our benchmark, $n_c=8$ and $n_f\le 3$, so this worst-case compilation cost is still moderate, but it grows quickly with image resolution.

This hardware compilation cost is not paid in the statevector experiments.
In our classical simulations we apply each $\mathcal{B}^{(\ell)}$ directly as a block-diagonal unitary map.
This benchmark evaluates the PCS-aligned inductive bias under within-family controls: the PCS-QCNN and RBC-QCNN have identical parameter counts, while the classical CNN and MLP controls are closely parameter-matched.
Hardware-oriented variants would likely restrict $U^{(\ell)}_{k_x,k_y}(m)$ (e.g., shallow local ansatzes on the feature register and/or parameter sharing across modes) to reduce depth while preserving exact PCS equivariance at the unitary-block level.

Finally, finite-shot readout introduces a separate resource cost.
The measurement cost is proportional to the shot budget $N_{\mathrm{shot}}$ required to estimate the readout distribution $p_{\Theta}(\cdot\mid x)$ with the desired accuracy.
This dependence is intrinsic: in hybrid quantum-classical learning, measurement precision is a computational resource and must be treated as part of the model design.

\section{Gradient-accounting derivation}
\label{sec:supp_trainability_full}

The elementary accounting estimate used in Sec.~\ref{sec:pcs_bp_theory} is derived here.
The estimate concerns the mean per-coordinate second moment.
The key point is that, in the depth-scaling family, the aggregate gradient-energy bound grows at most linearly in \(Q\), whereas the fully general Fourier multiplexer has exponentially many scalar coordinates.

\subsection{PCS-QCNN family and parameter count}
\label{sec:supp_gradient_scaling}

We work with an index register consisting of $d$ spatial axes.
Each axis initially has $n_{\mathrm{idx}}$ qubits.
A depth-$Q$ PCS-QCNN pools after the first $Q-1$ layers, each time measuring and discarding one computational-index qubit per axis.
After pooling, the remaining index register has
\begin{equation}
  n_l\coloneqq n_{\mathrm{idx}}-Q+1
\end{equation}
qubits per axis, hence
\begin{equation}
  D_{\mathrm{idx}}\coloneqq 2^{d n_l}.
\end{equation}
The feature register has $n_f$ qubits and dimension
\begin{equation}
  D_f\coloneqq 2^{n_f}.
\end{equation}
The final measurement is performed on the remaining index qubits and the feature register, so
\begin{equation}
  D_{\mathrm{out}}\coloneqq D_{\mathrm{idx}}D_f.
\end{equation}

In the depth-scaling family used in the main text,
\begin{equation}
  n_l,\ d,\ D_f,\ M\ \text{are fixed},
  \qquad
  n_{\mathrm{idx}}=n_l+Q-1.
  \label{eq:supp_depth_scaling_regime}
\end{equation}
Thus the readout dimension remains fixed as $Q$ increases.

At layer $\ell\in\{1,\dots,Q\}$, the active index width per axis is
\begin{equation}
  n_{\ell}=n_{\mathrm{idx}}-\ell+1=n_l+Q-\ell,
\end{equation}
and the active index dimension is
\begin{equation}
  N_{\ell}=2^{d n_{\ell}}.
\end{equation}
Let $\mathcal{F}^{(\ell)}$ denote the $d$-fold QFT on the active index register at layer $\ell$.
The branch label available to layer $\ell$ is drawn from
\begin{equation}
  \mathcal{M}_{\ell}\coloneqq
  \begin{cases}
    \{0\}, & \ell=1,\\
    \{0,1\}^{d}, & \ell\ge 2,
  \end{cases}
  \label{eq:branch_set}
\end{equation}
where the singleton label for $\ell=1$ represents the absence of a previous pooling outcome.
This is the benchmark convention: later multiplexers are conditioned on the most recently pooled $d$-bit outcome, not on the full measurement history.

The Fourier multiplexer at layer $\ell$ is
\begin{equation}
  \mathcal{B}^{(\ell)}(m_{\ell-1})
  \coloneqq
  \sum_{k\in[N_{\ell}]}\ket{k}\!\bra{k}\otimes V^{(\ell)}_k(m_{\ell-1}),
  \label{eq:B_def}
\end{equation}
where each block $V^{(\ell)}_k(m_{\ell-1})\in U(D_f)$ acts only on the feature register.
The PCS-equivariant unitary block is
\begin{equation}
  U^{(\ell)}(m_{\ell-1})
  \coloneqq
  \big(\mathcal{F}^{(\ell)\dagger}\otimes\one\big)
  \mathcal{B}^{(\ell)}(m_{\ell-1})
  \big(\mathcal{F}^{(\ell)}\otimes\one\big).
  \label{eq:supp_U_layer_def}
\end{equation}
For $\ell<Q$, this unitary is followed by parity pooling on the computational index register; the final layer is followed by measurement of the remaining index and feature registers.
The full measurement-and-feedforward process defines
\begin{equation}
  p_{\Theta}(z\mid x),\qquad z\in[D_{\mathrm{out}}],
  \label{eq:supp_p_def}
\end{equation}
where $\Theta$ denotes all quantum parameters.

The classical head is a linear map followed by a softmax,
\begin{equation}
  q(x)=\mathrm{softmax}\big(Wp_{\Theta}(\cdot\mid x)+b\big)\in\Delta^M.
\end{equation}
The unitary PCS blocks are exactly equivariant on their active index registers.
Pooling yields a multiscale/coarse equivariance structure rather than exact one-pixel equivariance of the complete readout/classifier pipeline.
This unconstrained linear head is not symmetry-tied across readout coordinates, so exact PCS equivariance is a property of the unitary quantum convolutional blocks rather than of the full classifier output.
In the reported classifier, translation invariance of the final labels is learned from data rather than imposed exactly.
For label $c\in\{1,\dots,M\}$ the loss is
\begin{equation}
  \mathcal{L}(\Theta,W,b\mid x,c)\coloneqq -\log q_c.
  \label{eq:supp_loss_def}
\end{equation}
The same notation applies to a minibatch or empirical-risk loss by replacing $\mathcal{L}$ with the corresponding average.

For a dataset or minibatch \(\mathcal{D}=\{(x_i,c_i)\}_{i=1}^{N}\), define the single-sample quantum gradients
\begin{equation}
  g_i\coloneqq \nabla_{\Theta}\mathcal{L}(\Theta,W,b\mid x_i,c_i).
\end{equation}
The gradient used by a first-order optimizer for the averaged loss is
\begin{equation}
  G_{\mathcal{D}}\coloneqq
  \nabla_{\Theta}\frac{1}{N}\sum_{i=1}^{N}\mathcal{L}(\Theta,W,b\mid x_i,c_i)
  =
  \frac{1}{N}\sum_{i=1}^{N}g_i .
  \label{eq:supp_empirical_gradient_def}
\end{equation}
By contrast, the per-sample RMS diagnostic is
\begin{equation}
  R_{\mathcal{D}}^2
  \coloneqq
  \frac{1}{N}\sum_{i=1}^{N}\|g_i\|_2^2 .
  \label{eq:supp_per_sample_rms_def}
\end{equation}
These are not interchangeable plateau diagnostics.
Taking expectation over random initialization,
\begin{equation}
  \mathbb{E}\|G_{\mathcal{D}}\|_2^2
  =
  \frac{1}{N^2}\sum_{i=1}^{N}\mathbb{E}\|g_i\|_2^2
  +
  \frac{1}{N^2}\sum_{i\ne j}\mathbb{E}\langle g_i,g_j\rangle .
  \label{eq:supp_empirical_gradient_cross_terms}
\end{equation}
Thus \(R_{\mathcal{D}}^2\) measures the diagonal, single-sample contribution, whereas \(\|G_{\mathcal{D}}\|_2^2\) also includes cross-sample alignment or cancellation.
For optimization of the empirical risk, \(\|G_{\mathcal{D}}\|_2\) is the directly relevant norm; \(R_{\mathcal{D}}\) is a diagnostic for whether individual samples still carry gradient signal.

For the full Pauli-exponential ansatz used in the benchmark, each feature block has
\begin{equation}
  p_{\mathrm{blk}}=4^{n_f}
\end{equation}
real scalar parameters.
The number of quantum-core parameters is
\begin{equation}
  P_Q=p_{\mathrm{blk}}\sum_{\ell=1}^{Q}|\mathcal{M}_{\ell}|N_{\ell}
  =p_{\mathrm{blk}}\sum_{\ell=1}^{Q} b_{\ell}\,2^{d(n_l+Q-\ell)},
  \qquad
  b_1=1,
  \quad b_{\ell}=2^d\ (\ell\ge 2).
  \label{eq:supp_PQ_count}
\end{equation}
In particular,
\begin{equation}
  P_Q\ge p_{\mathrm{blk}}\,2^{d(n_l+Q-1)}.
  \label{eq:supp_PQ_lower}
\end{equation}
For completeness, for $Q\ge2$ the exact closed form is
\begin{equation}
P_Q
=p_{\mathrm{blk}}\left[
2^{d(n_l+Q-1)}
+2^d\,2^{d n_l}\frac{2^{d(Q-1)}-1}{2^d-1}
\right],
\label{eq:supp_PQ_closed}
\end{equation}
so \(P_Q=\Theta(2^{dQ})\) at fixed \(n_l,d,n_f\).

\subsection{Aggregate gradient-energy estimate}
\label{subsec:supp_gradient_accounting}

The benchmark initialization samples every quantum Pauli coefficient independently as
\begin{equation}
  \theta^{(\ell)}_{k,m,\alpha}\sim\mathrm{Unif}(0,2\pi).
  \label{eq:supp_quantum_uniform_init}
\end{equation}
Independently, the classical head uses fan-in uniform initialization.
For head input dimension \(D_{\mathrm{out}}\), this means
\begin{equation}
  W_{az}\sim
  \mathrm{Unif}\!\left(-D_{\mathrm{out}}^{-1/2},D_{\mathrm{out}}^{-1/2}\right),
  \qquad
  b_a\sim
  \mathrm{Unif}\!\left(-D_{\mathrm{out}}^{-1/2},D_{\mathrm{out}}^{-1/2}\right),
  \qquad
  a\in[M],\ z\in[D_{\mathrm{out}}],
  \label{eq:supp_fanin_head_init}
\end{equation}
independently over all entries.
Because \(D_{\mathrm{out}}\) is fixed in Eq.~\eqref{eq:supp_depth_scaling_regime}, this head distribution is the same for every depth \(Q\).
The uniform quantum initialization fixes the experimental protocol, while the bounds below are deterministic in the quantum parameters and use only independence from the head.

\begin{lemma}[Softmax-head gradient bound]
\label{lem:supp_softmax_head_bound}
Let
\[
  q=\mathrm{softmax}(Wp+b)\in\Delta^M,
  \qquad
  \mathcal{L}(p,c)=-\log q_c .
\]
Then
\begin{equation}
  \|\nabla_p\mathcal{L}(p,c)\|_2^2
  \le
  2\|W\|_{\mathrm{op}}^2 .
  \label{eq:supp_softmax_head_bound}
\end{equation}
\end{lemma}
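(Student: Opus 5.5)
The plan is to compute the gradient through the logits by the chain rule and then bound the logit-space gradient of the cross-entropy by an absolute constant. Write $z=Wp+b$ for the logits. The standard softmax cross-entropy identity gives
\[
\nabla_z\mathcal{L}=q-e_c,
\]
where $e_c$ is the $c$-th standard basis vector of $\mathbb{R}^M$. Since $z$ depends on $p$ affinely with Jacobian $W$, the chain rule yields
\[
\nabla_p\mathcal{L}=W^\top(q-e_c).
\]
The submultiplicative bound then gives
\[
\|\nabla_p\mathcal{L}\|_2\le\|W^\top\|_{\mathrm{op}}\|q-e_c\|_2=\|W\|_{\mathrm{op}}\|q-e_c\|_2 .
\]

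It therefore remains to show $\|q-e_c\|_2^2\le 2$ for every $q\in\Delta^M$. Expand the norm as
\[
\|q-e_c\|_2^2=(1-q_c)^2+\sum_{a\ne c}q_a^2 .
\]
Because all $q_a\ge0$, we have $\sum_{a\ne c}q_a^2\le\bigl(\sum_{a\ne c}q_a\bigr)^2=(1-q_c)^2$. Combining these,
\[
\|q-e_c\|_2^2\le 2(1-q_c)^2\le 2,
\]
using $0\le q_c\le1$. Squaring the operator-norm bound from the first step then gives \eqref{eq:supp_softmax_head_bound}.

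There is no real obstacle; the argument is a two-line computation. The only step requiring care is the simplex inequality $\sum_{a\ne c}q_a^2\le(1-q_c)^2$, which rests on nonnegativity of the softmax outputs. Note also that the bound holds uniformly in $p$ and does not require $p$ to be a probability vector, which is what makes it usable later for the aggregate gradient-energy estimate.
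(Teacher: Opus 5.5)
Your proof is correct and follows the paper's argument: the chain rule gives $\nabla_p\mathcal{L}=W^\top(q-e_c)$, and the simplex bound $\|q-e_c\|_2^2\le 2$ together with $\|W^\top\|_{\mathrm{op}}=\|W\|_{\mathrm{op}}$ finishes it. The paper only asserts the simplex bound, while you prove it via $\sum_{a\ne c}q_a^2\le(1-q_c)^2$, which is a useful addition.
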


\begin{proof}
The standard cross-entropy/softmax derivative gives
\[
  \nabla_p\mathcal{L}=W^\top(q-e_c).
\]
Since $q\in\Delta^M$, one has $\|q-e_c\|_2^2\le2$, and the claim follows.
\end{proof}

\begin{lemma}[Fan-in uniform linear-softmax head: directional second-moment bound]
\label{lem:supp_fanin_head_directional_upper}
Assume the fan-in uniform head initialization~\eqref{eq:supp_fanin_head_init}, with \(L=D_{\mathrm{out}}\).
Fix \(p\in\Delta^{L}\), a label \(c\in[M]\), and a deterministic vector \(v\in\mathbb{R}^{L}\).
Then
\begin{equation}
  \mathbb{E}_{W,b}\!\left[
  \left\langle W^\top(\mathrm{softmax}(Wp+b)-e_c),v\right\rangle^2
  \mid p
  \right]
  \le
  \frac{2M}{3}\|v\|_2^2 .
  \label{eq:supp_fanin_head_directional_upper}
\end{equation}
\end{lemma}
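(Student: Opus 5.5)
The plan is to use a deterministic Cauchy--Schwarz step that removes the dependence of the softmax on \(W\), and then compute a second moment that involves only \(W\). Write \(u\coloneqq \mathrm{softmax}(Wp+b)-e_c\in\mathbb{R}^M\). Transposing the pairing gives
\[
\left\langle W^\top u,v\right\rangle=\langle u,Wv\rangle .
\]
The difficulty is that \(u\) depends on \(W\) and \(b\) through the softmax, so the expectation cannot be taken factor by factor. I will avoid this by bounding pointwise, for every realization of \((W,b)\),
\[
\langle u,Wv\rangle^2\le \|u\|_2^2\,\|Wv\|_2^2\le 2\,\|Wv\|_2^2 .
\]
The bound \(\|u\|_2^2\le 2\) is the same fact used in Lemma~\ref{lem:supp_softmax_head_bound}. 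It holds because \(q\in\Delta^M\): we have \(\|q-e_c\|_2^2=(1-q_c)^2+\sum_{a\ne c}q_a^2\le (1-q_c)^2+(1-q_c)^2\le 2\). Once this step is done, the right-hand side no longer depends on \(b\) or on the softmax.

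Next I take the expectation over the head initialization~\eqref{eq:supp_fanin_head_init}, conditionally on \(p\). Since \(v\) is deterministic, expanding gives
\[
\mathbb{E}\|Wv\|_2^2=\sum_{a=1}^{M}\mathbb{E}\Big(\sum_{z}W_{az}v_z\Big)^2=\sum_{a}\sum_{z}\mathbb{E}[W_{az}^2]\,v_z^2 .
\]
The cross terms \(z\ne z'\) vanish because the entries are independent and have mean zero. For \(\mathrm{Unif}(-r,r)\) with \(r=L^{-1/2}\), the second moment is \(r^2/3=1/(3L)\). Hence \(\mathbb{E}\|Wv\|_2^2=\frac{M}{3L}\|v\|_2^2\), and the whole expression is bounded by \(\frac{2M}{3L}\|v\|_2^2\). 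Since \(L=D_{\mathrm{out}}\ge1\), this is at most the claimed \(\frac{2M}{3}\|v\|_2^2\).

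The only genuine obstacle is the coupling between \(u\) and \(W\); the pointwise Cauchy--Schwarz step is what handles it. The loss of a factor \(L\) in the final bound is harmless, because the statement only needs a bound of order \(M\) that does not depend on \(Q\). The conditioning on \(p\) is also harmless: \(p\) is fixed in the statement, and in later use it depends only on the quantum parameters, which are independent of the head. I will note explicitly that the argument never uses the distribution of \(b\) beyond its role in the softmax, and that the quantum parameters enter only through \(p\).
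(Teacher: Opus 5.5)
Your proof is correct and follows the paper's argument. Both use a pointwise Cauchy--Schwarz bound with $\|\mathrm{softmax}(Wp+b)-e_c\|_2^2\le 2$ to decouple the softmax from $W$, and then the second moment $1/(3L)$ of the uniform entries. The only difference is where you apply Cauchy--Schwarz: you pair $u$ with $Wv$ and compute $\mathbb{E}\|Wv\|_2^2=\frac{M}{3L}\|v\|_2^2$ exactly, whereas the paper uses $\|W^\top r\|_2^2\le 2\|W\|_{\mathrm{op}}^2\le 2\|W\|_F^2$ with $\mathbb{E}\|W\|_F^2=M/3$, so you get the sharper constant $\frac{2M}{3D_{\mathrm{out}}}$ (the paper's cruder pointwise bound holds for all directions $v$ at once, which the lemma as stated, with $v$ fixed, does not need).
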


\begin{proof}
Set \(q=\mathrm{softmax}(Wp+b)\) and \(r=q-e_c\).
For every realization of \(W,b\), \(\|r\|_2^2\le2\), and therefore
\[
  \left\langle W^\top r,v\right\rangle^2
  \le
  \|W^\top r\|_2^2\|v\|_2^2
  \le
  2\|W\|_{\mathrm{op}}^2\|v\|_2^2
  \le
  2\|W\|_F^2\|v\|_2^2 .
\]
Under~\eqref{eq:supp_fanin_head_init}, each weight has variance \(1/(3L)\), so
\[
  \mathbb{E}\|W\|_F^2
  =
  ML\,\frac{1}{3L}
  =
  \frac{M}{3}.
\]
Taking expectation gives~\eqref{eq:supp_fanin_head_directional_upper}.
\end{proof}

\begin{lemma}[Bounded tangents of Pauli-exponential blocks]
\label{lem:supp_pauli_exponential_tangent}
Let
\[
  U(\theta)=\exp\!\left(i\sum_{\alpha\in\mathcal{P}_{n_f}}\theta_{\alpha}P_{\alpha}\right),
\]
where \(\mathcal{P}_{n_f}\) is the full Pauli-string basis, including the all-identity string.
For each coordinate $\theta_{\alpha}$ there exists an operator $K_{\alpha}(\theta)$ such that
\begin{equation}
  \partial_{\theta_{\alpha}}U(\theta)=K_{\alpha}(\theta)U(\theta),
  \qquad
  \|K_{\alpha}(\theta)\|\le1.
  \label{eq:supp_pauli_tangent_bound}
\end{equation}
\end{lemma}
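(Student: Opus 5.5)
We will use the standard integral formula for the derivative of a matrix exponential. Write \(A(\theta)\coloneqq i\sum_{\alpha}\theta_\alpha P_\alpha\), which is anti-Hermitian because each \(P_\alpha\) is Hermitian, so \(U(\theta)=e^{A(\theta)}\) is unitary. Since \(\partial_{\theta_\alpha}A=iP_\alpha\), Duhamel's formula gives
\[
  \partial_{\theta_\alpha}U(\theta)=\int_0^1 e^{sA}\,(iP_\alpha)\,e^{(1-s)A}\,ds .
\]
We then factor \(U=e^{A}\) out on the right by writing \(e^{(1-s)A}=e^{-sA}e^{A}\). This gives \(\partial_{\theta_\alpha}U=K_\alpha U\) with
\[
  K_\alpha(\theta)\coloneqq \int_0^1 e^{sA}\,(iP_\alpha)\,e^{-sA}\,ds .
\]
This choice of \(K_\alpha\) defines the operator demanded by \eqref{eq:supp_pauli_tangent_bound}.

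For the norm bound, note that \(e^{sA}\) is unitary for every real \(s\), because \(A\) is anti-Hermitian. Each integrand is therefore a unitary conjugate of \(iP_\alpha\). A Pauli string is unitary and Hermitian, so \(\|P_\alpha\|=1\), and unitary conjugation preserves the operator norm. Hence every integrand has norm exactly \(1\). By the triangle inequality for Bochner/Riemann integrals of continuous operator-valued functions, \(\|K_\alpha\|\le\int_0^1 1\,ds=1\). As a side remark, \(K_\alpha\) is anti-Hermitian, which is consistent with \(U\) remaining unitary. For the all-identity string the statement is trivial: \(K_\alpha=iI\), with norm \(1\).

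The only step that needs care is justifying the Duhamel formula, since the generators \(P_\alpha\) do not commute with \(A\). We would prove it in the usual way. Set \(\Phi(t)=e^{tA(\theta+h e_\alpha)}e^{-tA(\theta)}\) and differentiate in \(t\), or equivalently differentiate \(e^{tA}\) in \(\theta_\alpha\) and solve the resulting linear ODE \(\frac{d}{dt}X=AX+(iP_\alpha)e^{tA}\) with \(X(0)=0\) by variation of constants. This yields \(X(1)=\int_0^1 e^{(1-s)A}(iP_\alpha)e^{sA}\,ds\); the substitution \(s\mapsto 1-s\) turns it into the form above. Since everything is finite-dimensional and the exponential is smooth in \(\theta\), no further analytic issues arise, and the bound holds for every \(\theta\), not just almost surely under the initialization.
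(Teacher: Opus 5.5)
Your proof is correct and follows the same route as the paper. Both use the integral (Duhamel) formula for $\partial_{\theta_\alpha}U$ and bound each integrand by $\|P_\alpha\|=1$. The only cosmetic difference is the order of steps: the paper bounds $\|\partial_{\theta_\alpha}U\|$ first and then sets $K_\alpha=(\partial_{\theta_\alpha}U)U^\dagger$, whereas you factor $U$ out inside the integral, which gives the same operator written explicitly as an average of unitary conjugates of $iP_\alpha$.
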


\begin{proof}
Write $A(\theta)=\sum_{\beta}\theta_{\beta}P_{\beta}$.
The derivative of the matrix exponential is
\[
  \partial_{\theta_{\alpha}}U(\theta)
  =
  i\int_0^1
  e^{i(1-s)A(\theta)}P_{\alpha}e^{isA(\theta)}\,ds .
\]
Each integrand has operator norm $\|P_{\alpha}\|=1$, hence $\|\partial_{\theta_{\alpha}}U(\theta)\|\le1$.
Taking $K_{\alpha}(\theta)=(\partial_{\theta_{\alpha}}U(\theta))U(\theta)^\dagger$ gives~\eqref{eq:supp_pauli_tangent_bound}.
\end{proof}

\begin{lemma}[Layer sensitivity-energy accounting]
\label{lem:supp_layer_sensitivity_energy}
Fix an input $x$, all model parameters, and a layer $\ell$.
For each coordinate $\theta^{(\ell)}_{k,\alpha}(m)$ in that layer, let
\[
  s^{(\ell)}_{k,m,\alpha}(x)
  \coloneqq
  \partial_{\theta^{(\ell)}_{k,\alpha}(m)}p_{\Theta}(\cdot\mid x)
  \in\mathbb{R}^{D_{\mathrm{out}}}.
\]
Then
\begin{equation}
  \sum_{m\in\mathcal{M}_{\ell}}
  \sum_{k\in[N_{\ell}]}
  \sum_{\alpha\in\mathcal{P}_{n_f}}
  \|s^{(\ell)}_{k,m,\alpha}(x)\|_2^2
  \le
  4p_{\mathrm{blk}}.
  \label{eq:supp_layer_sensitivity_energy}
\end{equation}
\end{lemma}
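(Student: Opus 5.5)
The plan is to use deferred measurement to view the whole measurement-and-feedforward process as one fixed linear CPTP map acting after layer $\ell$. I then bound each sensitivity vector by a trace norm, and sum the resulting bounds using orthogonality of the Fourier-mode projectors and of the branch labels. Throughout I use the conceptual normalized encoding of Eq.~\eqref{eq:frqi_like}. The omitted prefactor in the numerical realization is compensated at readout, so $p_{\Theta}(\cdot\mid x)$ is the same object.

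\emph{Step 1 (isolate the layer).} Keep the pooled bits as a condition register, as in the simulation protocol. Just before $\mathcal{B}^{(\ell)}$, the joint state is pure and splits by branch label:
\[
\ket{\Psi}=\sum_{m\in\mathcal{M}_{\ell}}\ket{m}\otimes\ket{\chi_m},\qquad \sum_m w_m=1,\quad w_m\coloneqq\|\chi_m\|^2,
\]
with $\ket{\chi_m}$ written in the Fourier basis of the active index register. The controlled multiplexer acts as $\sum_m\ket{m}\!\bra{m}\otimes\mathcal{B}^{(\ell)}(m)$. Let $\ket{\phi_m}=\mathcal{B}^{(\ell)}(m)\ket{\chi_m}$, so $\|\phi_m\|^2=w_m$. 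Everything after this point is parameter-independent with respect to layer $\ell$: the IQFT, the later junctions, the later layers, and the final marginal measurement. Call this fixed composite $\Phi$; it is a positive trace-preserving linear map followed by a computational-basis readout, so $p_{\Theta}=\mathrm{diag}\,\Phi(\ketbra{\Psi'}{\Psi'})$, where $\ket{\Psi'}=\sum_m\ket{m}\otimes\ket{\phi_m}$ is the post-multiplexer state.

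\emph{Step 2 (single-coordinate bound).} By Lemma~\ref{lem:supp_pauli_exponential_tangent}, $\partial_{\theta^{(\ell)}_{k,\alpha}(m)}$ changes only the block $V^{(\ell)}_k(m)$. It therefore produces $\ket{\partial\Psi'}=\ket{m}\otimes X\ket{\phi_m}$ with $X=\ketbra{k}{k}\otimes K_\alpha$ and $\|K_\alpha\|\le1$. Hence $\partial\rho=\ketbra{\partial\Psi'}{\Psi'}+\mathrm{h.c.}$ is Hermitian with $\|\partial\rho\|_1\le2\|\partial\Psi'\|\,\|\Psi'\|=2\|X\phi_m\|$.

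The two maps involved, $\Phi$ and the diagonal readout, are positive and trace preserving, so they contract the trace norm of Hermitian operators. This gives
\[
\|s^{(\ell)}_{k,m,\alpha}\|_2\le\|s^{(\ell)}_{k,m,\alpha}\|_1\le2\|X\phi_m\|.
\]
Since $K_\alpha$ acts only on the feature register, $\|X\phi_m\|\le\|(\ketbra{k}{k}\otimes I)\phi_m\|$. Note that the contraction step uses only $\|\Psi'\|=1$. I expect this step to be the main obstacle: I must check carefully that the later measurement-conditioned layers really form one branch-independent linear positive map on the joint (condition register $\otimes$ system) state. That is exactly what deferred measurement supplies, with the later controlled unitaries and the final marginalization over the condition registers.

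\emph{Step 3 (summation).} Squaring the bound from Step 2 and summing over $k$ uses orthogonality of the Fourier projectors:
\[
\sum_k\|s^{(\ell)}_{k,m,\alpha}\|_2^2\le4\sum_k\|(\ketbra{k}{k}\otimes I)\phi_m\|^2=4w_m.
\]
There are $p_{\mathrm{blk}}$ Pauli labels $\alpha$, so summing over $\alpha$ gives $4p_{\mathrm{blk}}w_m$. Summing over $m\in\mathcal{M}_\ell$ then gives $4p_{\mathrm{blk}}\sum_m w_m=4p_{\mathrm{blk}}$. This is \eqref{eq:supp_layer_sensitivity_energy}, and it holds even without the sharper bound $\sum_m w_m^2\le1$ that one could obtain by using $\|\Psi'\|$ restricted to branch $m$.
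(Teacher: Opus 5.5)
Your proposal is correct and follows essentially the same route as the paper. Both arguments use a deferred-measurement dilation and a per-coordinate bound $\|s^{(\ell)}_{k,m,\alpha}\|_2\le 2\|\dot\Psi\|$, where $\|\dot\Psi\|$ is controlled by the branch-$m$, mode-$k$ component of the pre-multiplexer state; they then sum using orthogonality of the $(m,k)$ projectors and multiply by the $p_{\mathrm{blk}}$ Pauli labels. The only difference is how you get the per-coordinate bound: you use trace-norm contractivity of the fixed positive trace-preserving remainder map, which even yields an $\ell_1$ bound, whereas the paper writes $\partial_\theta p_{\Theta}(z\mid x)=2\Re\bra{\dot\Psi}\Pi_z\ket{\Psi}$ for the pure dilated state and uses orthogonality of the readout projectors $\Pi_z$.
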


\begin{proof}
Use the deferred-measurement dilation of the pooling process.
For fixed $\ell$, write the final measured state as
\[
  \ket{\Psi(\theta)}=\mathcal{U}_{\Theta(\theta)}
  \big(\ket{\psi(x)}\otimes\ket{0}_{\mathrm{rec}}\big),
\]
and write the final readout probabilities as
\[
  p_{\Theta}(z\mid x)=\bra{\Psi}\Pi_z\ket{\Psi},
\]
where the orthogonal projectors $\{\Pi_z\}$ include the marginalization over deferred condition registers.
For one coordinate, let $\dot{\Psi}$ denote the derivative of $\ket{\Psi(\theta)}$ with respect to that coordinate.
Then
\[
  \partial_{\theta}p_{\Theta}(z\mid x)
  =
  2\Re\bra{\dot{\Psi}}\Pi_z\ket{\Psi},
\]
and orthogonality of the projectors gives
\[
  \|\partial_{\theta}p_{\Theta}(\cdot\mid x)\|_2^2
  \le 4\|\dot{\Psi}\|_2^2 .
\]

It remains to sum $\|\dot{\Psi}\|_2^2$ over the coordinates in layer $\ell$.
Immediately before the layer-$\ell$ multiplexer, let $\ket{\Phi}$ be the normalized dilated state, including the deferred record register.
For branch label $m$ and Fourier mode $k$, the derivative of the selected block has the form $K_{\alpha}V^{(\ell)}_k(m)$ with $\|K_{\alpha}\|\le1$ by Lemma~\ref{lem:supp_pauli_exponential_tangent}.
Thus, for an isometry $\mathcal{W}$ representing the rest of the circuit,
\[
  \dot{\Psi}
  =
  \mathcal{W}
  \big(\Pi_m\otimes\ket{k}\!\bra{k}\otimes K_{\alpha}V^{(\ell)}_k(m)\otimes I\big)
  \ket{\Phi},
\]
where $\Pi_m$ is the projector onto the relevant deferred pooling record (with the singleton branch at $\ell=1$).
Therefore
\[
  \|\dot{\Psi}\|_2^2
  \le
  \|(\Pi_m\otimes\ket{k}\!\bra{k}\otimes I)\ket{\Phi}\|_2^2 .
\]
For each fixed $\alpha$, the projectors over $(m,k)$ are mutually orthogonal and sum to at most the identity, so the sum of the right-hand side over $(m,k)$ is at most $1$.
Summing over the $p_{\mathrm{blk}}$ Pauli coordinates gives~\eqref{eq:supp_layer_sensitivity_energy}.
\end{proof}

To combine the two sensitivity estimates, consider first a single sample and condition on the quantum parameters, so the readout vector \(p=p_{\Theta}(\cdot\mid x)\) and all sensitivity vectors
\[
  s^{(\ell)}_{k,m,\alpha}(x)
  =
  \partial_{\theta^{(\ell)}_{k,\alpha}(m)}
  p_{\Theta}(\cdot\mid x)
\]
are fixed.
The chain rule gives
\[
  \partial_{\theta^{(\ell)}_{k,\alpha}(m)}\mathcal{L}
  =
  \left\langle
  W^\top(\mathrm{softmax}(Wp+b)-e_c),
  s^{(\ell)}_{k,m,\alpha}(x)
  \right\rangle .
\]
By Lemma~\ref{lem:supp_fanin_head_directional_upper},
\[
\mathbb{E}_{W,b}\!\left[
\left(\partial_{\theta^{(\ell)}_{k,\alpha}(m)}\mathcal{L}\right)^2
\mid \Theta
\right]
\le
\frac{2M}{3}
\|s^{(\ell)}_{k,m,\alpha}(x)\|_2^2 .
\]
Summing over all coordinates in one layer and applying Lemma~\ref{lem:supp_layer_sensitivity_energy} gives
\[
\mathbb{E}_{W,b}\!\left[
\sum_{m,k,\alpha}
\left(\partial_{\theta^{(\ell)}_{k,\alpha}(m)}\mathcal{L}\right)^2
\mid \Theta
\right]
\le
\frac{8M}{3}p_{\mathrm{blk}} .
\]
Summing over \(Q\) layers and then taking expectation over any independent quantum initialization gives
\begin{equation}
  \mathbb{E}\|\nabla_{\Theta_Q}\mathcal{L}(\Theta_Q,W,b)\|_2^2
  \le
  \frac{8M}{3}p_{\mathrm{blk}}Q .
  \label{eq:supp_aggregate_gradient_bound}
\end{equation}
For a minibatch or empirical-risk average, the quantum gradient is the corresponding average of the single-sample gradients, and convexity of the squared norm gives the same bound after averaging over samples.

Finally, let \(G_Q=\nabla_{\Theta_Q}\mathcal{L}(\Theta_Q,W,b)\).
Dividing~\eqref{eq:supp_aggregate_gradient_bound} by the deterministic parameter-count lower bound~\eqref{eq:supp_PQ_lower} yields the text accounting estimate
\begin{equation}
  \mathbb{E}\!\left[\frac{1}{P_Q}\|G_Q\|_2^2\right]
\le
  \frac{8M Q}{3\,2^{d(n_l+Q-1)}}.
\label{eq:supp_coordinate_counting_bound}
\end{equation}
For \(M=10\), the numerator constant is \(80Q/3\).
This is the parameter-count effect discussed in the main text.
The aggregate norm \(\mathbb{E}\|G_Q\|_2^2\) and the layerwise coordinates relevant to a particular optimizer or dataset are separate trainability quantities.
The main text therefore reports both the empirical-loss gradient norm and the per-sample RMS gradient norm directly.

}

\end{document}